\pdfoutput=1
\makeatletter\def\input@path{{FIGURES/}}\makeatother
\documentclass[10pt,a4paper,reqno]{amsart}
\usepackage[utf8]{inputenc}
\usepackage[T1]{fontenc}
\usepackage[a4paper,margin=20mm]{geometry}
\usepackage{amsmath,amssymb}
\usepackage[normalem]{ulem}
\usepackage{esint} % Pour \fint
\usepackage{graphicx}
\usepackage{float}% Include figure files
\usepackage{tikz}

\makeatletter
\let\article@ordinaryfootnotetext\@footnotetext
\makeatother
\usepackage[para]{footmisc}

\usepackage{etoolbox}
\usepackage{bm}% bold math
\usepackage{lmodern}
\usepackage{xcolor}
\usepackage{slashed}
\usepackage{mathtools,hyperref}
\usepackage{mathrsfs}

\let\origaligned\aligned   \renewcommand*{\aligned}{\origaligned\relax}

\newcommand \Hess {\operatorname{Hess}} 

\newcommand \Ric {{\operatorfont Ric}}

\newcommand \ebf {\mathbf{e}}
\newcommand \del {\partial}

\newcommand \RR {\mathbb{R}}
\newcommand \eps {\epsilon}
\newcommand \be {\begin{equation}\mathopen{}} % avoids an issue with hyperref+amsmath
\newcommand \ee {\end{equation}}
\newcommand \bse {\begin{subequations}}
\newcommand \ese {\end{subequations}}

\DeclareMathOperator{\supp}{supp}

\DeclareMathOperator \Tr {\bf Tr}  
\newcommand \Deltaslash {\slashed\Delta{}}
\newcommand \nablaslash {\slashed\nabla{}}
\newcommand \gslash {\slashed{g}{}}
\newcommand \Acal {\mathcal{A}}
\newcommand{\ssAAUX}{\slashed{\Acal}}

\declareslashed{}{/}{.0}{0}{\Acal}
\declareslashed{}{/}{.15}{0}{\ssAAUX}
\newcommand{\ssrmAAUX}{\slashed{\mathrm{A}}}

\declareslashed{}{/}{.1}{0}{\mathrm{A}}
\declareslashed{}{/}{.3}{0}{\ssrmAAUX}

\newcommand \seedg {g_{\textnormal{\emph{s}}}}
\newcommand \seedh {h_{\textnormal{\emph{s}}}}
\newcommand \normal {\textbf{sil}}

\newtheorem{theorem}{Theorem}[section] 
 
\newtheorem{proposition}[theorem]{Proposition}
\newtheorem{lemma}[theorem]{Lemma}
\newtheorem{corollary}[theorem]{Corollary}

\theoremstyle{remark}
\newtheorem*{remark}{Remark}
\theoremstyle{plain}
\makeatletter
\@ifundefined{proof}{\newenvironment{proof}{\par\noindent\emph{Proof.} }{\hfill$\square$\par\medskip}}{}
\makeatother

\newcommand{\divs}{\operatorname{div}_{\gslash}}
\newcommand{\trs}{\operatorname{tr}_{\gslash}}
\newcommand{\Eappr}[1]{E_{\textbf{appr},#1}}

\newcommand{\Crad}{c_{\mathrm{rad}}}
\newcommand{\Obig}{\mathcal{O}}
\newcommand{\Qsc}{\mathcal{Q}_R}

\numberwithin{equation}{section}
\numberwithin{table}{section}
\numberwithin{figure}{section}
\makeatletter
\renewcommand{\footnoterule}{\kern-3pt\hrule width5pc\kern2.6pt}
\def\@captionfont{\normalfont\small}
\long\def\@makefntext#1{\leavevmode\fontfamily{qtm}\fontsize{7.5}{9.5}\selectfont
  \raisebox{.55ex}{\fontsize{5.5}{6}\selectfont\@thefnmark}%
  \nobreak\hskip.12em#1}
\makeatother

\allowdisplaybreaks[2]
\hypersetup{bookmarksdepth=2,colorlinks=false}
\hypersetup{hidelinks}
\newenvironment{acknowledgments}{\section*{Acknowledgments}}{}

\makeatletter
\patchcmd{\@setauthors}{\MakeUppercase{\authors}}%
  {{\normalfont\large\authors}}{}%
  {\PackageError{article-trois}{Author font patch failed}{Check the AMS author macro.}}
\makeatother

\makeatletter
\def\@setaddresses{%
  \begingroup
  \normalfont\fontsize{8}{9.5}\selectfont
  \parindent\z@ \parskip\z@
  \interlinepenalty\@M
  \def\author##1{\par\nobreak\noindent}%
  \def\\{\unskip,\space\ignorespaces}%
  \def\address##1##2{%
    \@ifnotempty{##1}{(\ignorespaces##1\unskip)\space}%
    \ignorespaces##2\unskip}%
  \def\email##1##2{%
    \@ifnotempty{##2}{%
      \unskip\space Email address%
      \@ifnotempty{##1}{, \ignorespaces##1\unskip}%
      :\space\href{mailto:##2}{##2}}}%
  \addresses
  \unskip
  \endgroup}
\makeatother

\makeatletter
\patchcmd{\FN@makefootnoteparagraph}{\@parboxrestore}{\@parboxrestore\ifnum\c@page=3\relax\parfillskip=0pt\rightskip=44pt\fi}{}{\PackageError{article}{Footnote paragraph patch failed}{}}
\let\article@paragraphfootnotes\FN@makefootnoteparagraph
\def\FN@makefootnoteparagraph{%
  \ifnum\c@page=\@ne
    \unvbox\footins
  \else
    \article@paragraphfootnotes
  \fi}
\let\article@adminfootnotes\@adminfootnotes
\def\@adminfootnotes{%
  \begingroup
  \let\@footnotetext\article@ordinaryfootnotetext
  \long\def\@makefntext##1{%
    \normalfont\fontsize{8}{9.5}\selectfont\noindent##1}%
  \interfootnotelinepenalty\@M
  \article@adminfootnotes
  \ifx\@empty\addresses\else
    \@footnotetext{\@setaddresses}%
  \fi
  \endgroup
  \global\let\addresses\@empty}
\makeatother

\makeatletter
\apptocmd{\@adjustvertspacing}{%
  \advance\abovedisplayskip by -1pt
  \advance\belowdisplayskip by -1pt}{}{\PackageError{article}{Display spacing adjustment failed}{}}
\makeatother
\AtBeginDocument{\normalsize}

\begin{document}
\title[The seeds and silhouettes of optimal gravitational shielding]{The seeds and silhouettes of optimal gravitational shielding}
\author[Bruno Le Floch]{Bruno Le Floch} 
 \address{Laboratoire de Physique Théorique et Hautes Énergies \& Centre National de la Recherche Scientifique, Sorbonne Universit\'e, 4~place Jussieu, 75252 Paris, France.}
\email{bruno@le-floch.fr}
\author[Philippe G. LeFloch]{Philippe G. LeFloch}
\address{Laboratoire Jacques-Louis Lions \& Centre National de la Recherche Scientifique, Sorbonne Universit\'e, 4~place Jussieu, 75252 Paris, France.}
\email{contact@philippelefloch.org}

\date{October 2026}
\begin{abstract}
We construct complete vacuum initial data sets with prescribed asymptotically flat exterior geometry, including the gluing of two Schwarzschild exterior regions with small masses (which need not be positive). The construction allows for arbitrarily slow leading decay and nonzero second fundamental form. For leading terms with a fixed radial power, we first construct localized seed data sets whose constraint errors decay to any prescribed finite order, then apply the projection and stability theorems of our companion papers. The resulting vacuum data sets differ from seeds by harmonic terms controlled by Hamiltonian and momentum silhouettes, and by super-harmonic corrections.
This difference of harmonic terms is measured by a relative notion of ADM energy-momentum, which is defined even for slowly-decaying data sets that do not admit an ADM energy-momentum.  Upon adjusting seed data sets by either harmonic-decaying terms or compact transverse-traceless terms, we freely control this relative ADM charge and, when it exists, the total ADM charge.  In particular, when gluing Schwarzschild metrics the total ADM mass can be specified independently of the Schwarzschild mass parameters.
Together with our companion papers, these results establish a constructive theory of optimal localization beyond harmonic decay.
\end{abstract}
\maketitle
\markboth{THE SEEDS AND SILHOUETTES OF OPTIMAL GRAVITATIONAL SHIELDING}{THE SEEDS AND SILHOUETTES OF OPTIMAL GRAVITATIONAL SHIELDING}
\tableofcontents

%============================================================================

\section{Introduction}\label{section-1}

%-----------------------------------------------------------------------------------------------------------------------------------------

\subsection{Aim of this paper}\label{section:introduction-aim}
Einstein's vacuum constraint equations determine which Riemannian metrics and second fundamental forms can occur on a spacelike hypersurface. We study, in three space dimensions, the construction of asymptotically Euclidean initial data sets with prescribed geometry in selected exterior regions. The basic example is Schwarzschild--Minkowski gluing: the exact data sets must feature a Schwarzschild metric in one angular region and the Euclidean metric in another, outside a fixed ball. Our theory of optimal gravitational shielding began with \emph{Optimal shielding for Einstein gravity}~\cite{BLF--PLF-shielding}, published in 2024. In that work, we introduced the localized seed-to-solution method and the silhouettes that describe the angular form of the harmonic corrections.

We introduced in~\cite{BLF--PLF} a localized seed-to-solution projection operator with harmonic and super-harmonic estimates. We proved in~\cite{BLF--PLF-stability} that the required stability conditions on the localization function do hold as a consequence of Poincar\'e, Korn, and Hardy's inequalities. The problem addressed here is to \emph{construct seed data sets} that meet all the hypotheses of the projection and stability theory, while retaining prescribed exterior values and allowing nonzero second fundamental forms.

A direct angular interpolation between the prescribed data sets leaves a leading constraint error that only decays slowly at radial infinity. We remove successive errors by radial equations that preserve angular support. This construction separates the \emph{geometry exponent} $p_G$, which describes the leading decay of the seed data set, from the \emph{accuracy exponent} $p_A$, which describes the decay of its constraint error. We fix the geometry, profiles, radius, and exponents before choosing sufficiently small masses (or other amplitudes) and analyzing the gluing.

We also describe the harmonic corrections through the Hamiltonian and momentum silhouettes of our companion papers~\cite{BLF--PLF,BLF--PLF-stability}. For fixed localization geometry and variational weights, these angular profiles remain fixed as the seed data sets vary. Their amplitudes are the relative energy and momentum between the exact solution and its seed data set, which we select through parameters in the seed data sets.

%-----------------------------------------------------------------------------------------------------------------------------------------

\subsection{Principal contributions}\label{section-1-1}
Table~\ref{tab:principal-theorems} lists the principal theorems and their main conclusions.
Our main results concern the following two questions.
\begin{itemize}
\item \emph{Prescribed exterior geometry.} We construct complete vacuum data sets agreeing with two Schwarzschild metrics in prescribed exterior regions, provided both masses are sufficiently small. More generally, we find solutions with non-vanishing second fundamental form and \emph{any leading exponent} $0<p_G<1$. The difference between the constructed exact data set and its prescribed seed data set consists of harmonic energy-momentum terms and a super-harmonic remainder. Cf. Theorems~\ref{thm:fixed-radius} and~\ref{thm:coupled-projection}, below.

\item \emph{Choosing relative and total energy-momentum.} We also prove that we can pick four harmonic parameters to \emph{prescribe the (relative) energy-momentum} of an exact solution relative to its seed data set. Namely, compactly supported tensor perturbations (with zero Euclidean trace and divergence) allow us to prescribe sufficiently small total ADM vectors near each fixed future timelike direction. The total ADM mass can for instance be equal to the positive Schwarzschild mass parameter retained in an exterior region. Cf. Theorems~\ref{thm:vector-selection} and~\ref{thm:effective-ADM}, and Proposition~\ref{prop:Sch-mass-prescribed}, below.
\end{itemize}

The proofs also give explicit results on the seed data sets. We solve the scalar-curvature and momentum equations to arbitrary finite order, including resonant logarithmic terms (Propositions~\ref{prop:finite-expansion} and~\ref{prop:coupled-expansion}). We calculate how compact perturbations change the ADM energy and momentum to quadratic order, and show that all four components can be varied independently (Propositions~\ref{lem:TT-quadratic} and~\ref{lem:TT-packets}). We also determine the solutions of the linearized constraints with a fixed radial power and prescribed angular support, and identify their angular compatibility conditions.

Our procedure for choosing the scalar harmonic parameter extends the construction of LeFloch and Nguyen~\cite[Sections~6.2--6.3]{LeFlochNguyen}. We distinguish throughout the prescribed exterior Schwarzschild parameters, the relative energy-momentum correction, and the total ADM vector. We can also vary the total ADM vector of a nonzero background while keeping its leading profiles and exterior values. For $1/2<p_G<1$ and sufficiently small future timelike ADM vectors, it suffices that the background norm, after rescaling to unit radius, be sufficiently small compared with the square root of the rescaled target energy (cf. Proposition~\ref{prop:fixed-background-ADM}, below).

\begin{table}
\centering
\begingroup\small\setlength{\tabcolsep}{0pt}\renewcommand{\arraystretch}{1.2}
\begin{tabular}{@{}l@{\hspace{1.4em}}l@{}}
\hline
Theorem~\ref{thm:fixed-radius} & Complete scalar-flat gluing of two Schwarzschild exterior regions. \\
Theorem~\ref{thm:coupled-projection} & Complete vacuum data with slow leading decay and super-harmonic remainders. \\
Theorem~\ref{thm:vector-selection} & Choice of relative energy-momentum with fixed exterior data. \\
Theorem~\ref{thm:effective-ADM} & Choice of small total ADM vectors near a fixed future timelike direction. \\
\hline
\end{tabular}
\endgroup
\caption{The five principal theorems.}
\label{tab:principal-theorems}
\end{table}

%-----------------------------------------------------------------------------------------------------------------------------------------

\subsection{The Einstein constraints and asymptotic parameters}\label{section-1-2}
A vacuum data set $(g,h)$ consists of a Riemannian metric and a symmetric contravariant tensor satisfying the vacuum Hamiltonian and momentum constraints
\be
\label{eq:full-constraints}
\mathcal H(g,h)\coloneqq R_g+\tfrac12(\Tr_g h)^2-|h|_g^2=0,\qquad \mathcal C^j(g,h)\coloneqq\nabla_i h^{ij}=0.
\ee
Here $R_g$ denotes scalar curvature and $\nabla$ is the Levi-Civita connection of $g$, traces and norms are taken with $g$, and we use the momentum variable~$h$ as in~\cite{BLF--PLF}: if $K$ is the second fundamental form, we set
\be
\label{eq:full-conventions}
h^{ij}\coloneqq K^{ij}-(\Tr_g K)g^{ij},\qquad K_{ij}=g_{ia}g_{jb}h^{ab}-\tfrac12(g_{ab}h^{ab})g_{ij}.
\ee
The momentum operator in~\cite{BLF--PLF} is $\mathcal M_i=g_{ij}\mathcal C^j$. Raising its index gives equivalent weighted norms near the Euclidean metric $\ebf\coloneqq(\delta_{ij})$. We write $\Hess$, $\Delta$, and $\operatorname{div}$ for the Euclidean Hessian, Laplacian, and divergence.

The approach to construct and parametrize exact vacuum data sets in~\cite{BLF--PLF} is to start from a \emph{seed data set} $(\seedg,\seedh)$ satisfying the constraints only approximately, and project it to an exact vacuum data set $(g,h)$.
The \emph{seed-to-solution projection} is based on the linearization of Einstein's constraints~\eqref{eq:full-constraints} at a reference data set which we take to be $(g_0,h_0)=(\ebf,0)$ in the present work, and $(g,h)$ is defined as the (locally unique) vacuum solution of the form\footnote{Contrarily to the general case with non-trivial~$h_0$, the Hamiltonian and momentum constraints linearized around $(\ebf,0)$ are decoupled.}
\be
\label{eq:seed-to-solution}
g = \seedg + \lambda^P r^{n/2-p} d\mathcal H^*[u] , \qquad h = \seedh + \lambda^P r^{n/2-p-1} d\mathcal M^*[Z]
\ee
for some scalar field~$u$ and vector field~$Z$ on~$\RR^3$, with further notation described as needed below.
In particular, the construction of $(g,h)$ requires a \emph{localization function} $\lambda$ subject to several \emph{stability conditions}.

Time-symmetric data sets provide the first construction in Section~\ref{section-2}, below, while Section~\ref{section:full-constraints} extends it to the full Einstein constraints with nonzero second fundamental forms. For $h=0$, the Hamiltonian constraint reduces to the scalar curvature constraint
\be
\label{eq-aHM}
R_g=0,\qquad h=0.
\ee
The induced metric of a standard Minkowski time slice is the Euclidean metric $\ebf=(\delta_{ij})$. In isotropic coordinates, a Schwarzschild time slice of mass $m$ has metric
\be
\label{eq:schwarzschild}
g_m^{\mathrm{Sch}}\coloneqq\Bigl(1+\frac{m}{2r}\Bigr)^4\ebf,\qquad r\coloneqq|x|.
\ee
Section~\ref{section-2} concerns metrics that agree exactly with $g_{m_1}^{\mathrm{Sch}}$ and~$g_{m_2}^{\mathrm{Sch}}$ for small $m_1,m_2\in\RR$ in two angular regions at infinity.

{We use the same notation as in}~\cite{BLF--PLF,BLF--PLF-stability}. The \emph{projection exponent} $p$ determines the variational weights arising in the Corvino--Schoen variational formulation. The \emph{sharp decay exponent} $p_\star$ measures the remainder after subtraction of the harmonic energy-momentum terms, as in~\cite{BLF--PLF}. In dimension three, harmonic decay means metric order $r^{-1}$ and momentum order $r^{-2}$; sub-harmonic and super-harmonic estimates have smaller and larger exponents, respectively. The integer $J$ labels the amplitude order through which the constraints are solved, whereas $N$ counts the controlled derivatives. The weighted norms and admissible exponents are specified in Section~\ref{section-3-1}. We write $\del_i$ for Cartesian derivatives. The notation $\Obig_k(r^{-b})$ describes functions whose order-$j$ derivatives ($0\leq j\leq k$) decay as~$r^{-b-j}$, and likewise for~$o_k(r^{-b})$.

%-----------------------------------------------------------------------------------------------------------------------------------------

\subsection{Related constructions}\label{section-1-3}

Let us briefly review pioneering results in this direction. Corvino and Schoen developed scalar-curvature and constraint gluing~\cite{Corvino-2000,CorvinoSchoen}. Chru\'sciel and Delay~\cite{ChruscielDelay} studied the constraint map in weighted Sobolev spaces and obtained perturbation, gluing, and extension results, including initial data sets exactly Schwarzschild or Kerr near infinity. On compact manifolds, Maxwell identified the changes of free data that make the conformal and conformal thin-sandwich methods equivalent~\cite{MaxwellConformal}. Carlotto and Schoen proved conical localization~\cite{CarlottoSchoen}. Chru\'sciel and Delay further developed scalar-curvature gluing on cone-like regions for asymptotically Euclidean metrics~\cite{ChruscielDelayConical}, and established analogous localization for asymptotically hyperbolic initial data sets~\cite{ChruscielDelayHyperbolic}. Next, in~\cite{LeFlochNguyen}\footnote{\href{https://arxiv.org/abs/1903.00243}{arXiv:1903.00243}} the authors developed the seed-to-solution method and asymptotic localization. Further constructions include characteristic gluing with spacelike applications~\cite{AretakisCzimekRodnianski}\footnote{\href{https://arxiv.org/abs/2107.02441}{arXiv:2107.02441}}, harmonic decay and localization in regions narrower than cones~\cite{MaoTao}\footnote{\href{https://arxiv.org/abs/2210.09437}{arXiv:2210.09437}}, and nonlinear removal of finite-dimensional null-gluing obstructions~\cite{CzimekRodnianski}\footnote{\href{https://arxiv.org/abs/2210.09663}{arXiv:2210.09663}}. Other results give multiple localized time-symmetric systems~\cite{AndersonCorvinoPasqualotto}\footnote{\href{https://arxiv.org/abs/2301.08238}{arXiv:2301.08238}} and spacelike annular gluing by operators with prescribed support~\cite{MaoOhTao}\footnote{\href{https://arxiv.org/abs/2308.13031}{arXiv:2308.13031}}. We developed optimal localization estimates at and beyond harmonic decay in~\cite{BLF--PLF}\footnote{\href{https://arxiv.org/abs/2312.17706}{arXiv:2312.17706}}. Prescribed-decay constructions for Minkowski stability appear in~\cite{Fang-1}\footnote{\href{https://arxiv.org/abs/2401.14353}{arXiv:2401.14353}}. We presented optimal shielding and silhouettes in~\cite{BLF--PLF-shielding}\footnote{\href{https://arxiv.org/abs/2402.17598}{arXiv:2402.17598}}. Initial data sets for black hole stability are constructed in~\cite{Fang-2}\footnote{\href{https://arxiv.org/abs/2405.02071}{arXiv:2405.02071}}. Finally, in~\cite{Sansom-2024}\footnote{\href{https://arxiv.org/abs/2408.10859}{arXiv:2408.10859}}, the author studied $C^3$ null gluing modulo finite-dimensional obstructions.

Recently, Bieri, Garfinkle, Isenberg, Maxwell, and Wheeler use the conformal method to construct maximal vacuum initial data sets with controlled decay, ADM momentum, and anisotropic leading metric terms~\cite{BieriGarfinkleIsenbergMaxwellWheeler}\footnote{\href{https://arxiv.org/abs/2512.21274}{arXiv:2512.21274}}.  Chen and Klainerman~\cite{ChenKlainerman}\footnote{\href{https://arxiv.org/abs/2512.22704}{arXiv:2512.22704}} prescribe four geometric scalars, modulo spherical harmonics of degrees zero and one, and asymptotic charges by an exterior transport--elliptic construction. Their forward construction reaches minimal and borderline decay by effective uniformization~\cite[Theorem~4.3 and Remark~4.4]{ChenKlainermanForward}\footnote{\href{https://arxiv.org/abs/2606.08716}{arXiv:2606.08716}}. We established the stability criteria used in the present construction in~\cite{BLF--PLF-stability}\footnote{\href{https://arxiv.org/abs/2607.28599}{arXiv:2607.28599}}.

Our construction in the present work prepares seed data sets for a seed-to-solution projection, which preserves the prescribed exterior geometries.
We prescribe the components of two tangential trace-free tensors in the seed data set and solve radial equations for the remaining components. Seed-to-solution projection gives complete exact data with prescribed exterior values and unchanged leading profiles.

%-----------------------------------------------------------------------------------------------------------------------------------------

\subsection{Organization of the article}\label{section:introduction-organization}
We first construct seed data sets in Sections~\ref{section-2}--\ref{section:full-constraints}, then verify the hypotheses needed to project them onto exact vacuum solutions in Sections~\ref{section-3}--\ref{section-4}. We consider next their relative and total asymptotic charges and describe the freedom in their leading profiles in Sections~\ref{section-5}--\ref{section-ADM-prescription}.

In Section~\ref{section-2}, we construct seed data sets for the \emph{scalar curvature constraint}, beginning with Schwarzschild--Minkowski data. We obtain exterior expansions of arbitrary finite accuracy for two small Schwarzschild masses of either sign (Proposition~\ref{prop:finite-expansion}). Radial equations preserve the angular support of the corrections, and a radial cutoff extends the seed data sets smoothly through a fixed interior ball.

In Section~\ref{section:full-constraints}, we solve the \emph{Hamiltonian and momentum constraints} together to any prescribed finite order (Proposition~\ref{prop:coupled-expansion}). In the time-symmetric case we construct metrics with \emph{arbitrarily slow decay} whose curvature decays arbitrarily fast.  This is completed by a radial inverse for the momentum equation.  Throughout we include the logarithmic terms arising at resonant exponents. We allow independent amplitudes in the leading metric and momentum profiles. These constructions prepare seed data sets with nonzero second fundamental form for the seed-to-solution projection.

In Section~\ref{section-3}, we verify the weighted pointwise and integral hypotheses of the projection for the Schwarzschild seed data sets constructed in Section~\ref{section-2}. We obtain exact, complete scalar-flat metrics by \emph{gluing two small Schwarzschild masses}, including zero masses or opposite signs (Theorem~\ref{thm:fixed-radius}). In Section~\ref{section:slow-coupled-projection}, we obtain localized solutions with \emph{arbitrarily slow leading metric decay} (Proposition~\ref{thm:slow-leading}) and solve the full constraints with \emph{nonzero second fundamental form} (Theorem~\ref{thm:coupled-projection}). We describe how the Hamiltonian and momentum silhouettes determine the harmonic corrections and distinguish the finite accuracy of the seed data sets from the admissible super-harmonic decay of the remainders.

In Section~\ref{section-4}, we apply these constructions to the \emph{whole-sphere localization problem} and the \emph{thin angular transition problem}. The whole-sphere case gives explicit Hamiltonian and momentum silhouettes. For thin transitions, we explain how angular scaling and the stability results of our companion papers enter the construction. The geometry and width are fixed before the masses are taken sufficiently small; the bounds may depend on these choices.

In Section~\ref{section-5}, we define the \emph{relative ADM energy-momentum} of a solution and its seed data set and prove that each data set admits (absolute) ADM charges if and only if the other does (Proposition~\ref{prop:ADM-relation}). We modify the seed data set by terms with harmonic decay, which enable us to choose the relative energy-momentum (Theorem~\ref{thm:vector-selection}).

Next, in Section~\ref{section-ADM-prescription}, we use compactly-supported transverse-traceless perturbations, and their effect under the seed-to-solution map, to achieve any sufficiently small \emph{total ADM energy-momentum} in any fixed future timelike direction (Theorem~\ref{thm:effective-ADM}); this ADM charge arises from quadratic nonlinearities in the perturbations. For a fixed background vacuum data set, we use such perturbations to choose the relative ADM energy-momentum under an explicit smallness condition relating the background to the target ADM charge; when the metric decay exponent exceeds one half, we can instead choose the total ADM charge (Proposition~\ref{prop:fixed-background-ADM}). For the gluing of Schwarzschild metrics, we retain the prescribed Schwarzschild mass parameters in exterior angular region while achieving any total ADM energy (under some smallness condition); zero or nonzero total momentum can be obtained as described in Proposition~\ref{prop:Sch-mass-prescribed}.

Appendix~\ref{appendix:local-projection} justifies the projection estimates in the local weighted H\"older norms used here and gives the dependence on the seed data sets and their parameters needed for controlling the relative and total ADM charges. Appendix~\ref{appendix:radial-selection} proves the radial divergence and momentum identities. It gives the remaining arguments for choosing the scalar harmonic parameter and for adjusting the total ADM energy when gluing Schwarzschild metrics.
Appendix~\ref{appendix:quadratic-curvature} derives the nonlinear scalar-curvature correction and the quadratic ADM energy-momentum formulas. It also constructs the compact transverse-traceless profiles used to vary the four total charges.

%============================================================================

\section{Schwarzschild seed data sets with Hamiltonian errors of any finite decay order}\label{section-2}

%-----------------------------------------------------------------------------------------------------------------------------------------

\subsection{Schwarzschild--Minkowski seed data sets}\label{section-2-1}
We construct in Proposition~\ref{prop:finite-expansion} asymptotically flat seed metrics~$\seedg$ outside a ball (equivalently, time-symmetric seed data sets) that interpolate between Schwarzschild metrics and whose scalar curvature has any prescribed radial decay $R_{\seedg}=\Obig_k(r^{-p_A-2})$. This is extended to all of~$\RR^3$ in Section~\ref{section-2-4}.
We begin with Schwarzschild--Minkowski data, derive radial correction formulas, and extend the construction to two Schwarzschild regions with small masses of either sign.

We thus start with an explicit interpolation between spatial Schwarzschild and Euclidean metrics. As we prove, an additional mixed tensor cancels the leading Hamiltonian error while retaining both prescribed exterior metrics. We fix $R>0$, a connected smooth angular domain $\Lambda\subset S^2$, whose complement is partitioned into two disjoint closed angular regions $F_0\sqcup F_1 = S^2\setminus\Lambda$ (the construction easily extends to more connected components). We choose the angular interpolation function $\chi$ so that
\be
\label{eq:angular-interpolation}
\chi\in C^\infty(S^2,[0,1]),\quad \chi=1\text{ near }F_1,\quad \chi=0\text{ near }F_0,\quad \supp d\chi\Subset\Lambda.
\ee
On the round sphere, we use a slash for the metric and differential operators: $\gslash$ denotes the round metric, $\nablaslash$ its covariant derivative, and $\Deltaslash$ its Laplacian. For a scalar $f$, $\nablaslash f$ denotes the tangential gradient vector and $(\nablaslash f)^\flat$ its dual one-form. We extend angular tensors to $\RR^3\setminus\{0\}$ by pull-back along the radial projection $\RR^3\setminus\{0\}\to S^2$. For $x\neq0$, we write $\theta\coloneqq x/r\in S^2$, where $r=|x|$. Thus $\theta$ is the radial unit vector, and also the point of $S^2$ at which angular functions are evaluated. Its Euclidean dual one-form is $\theta^\flat\coloneqq\ebf(\theta,\cdot)=dr=\sum_i\theta_i\,dx^i$. The symbols $\flat$ and $\sharp$ denote Euclidean lowering and raising of indices. In Cartesian component calculations we identify covariant and contravariant tensors using $\ebf$; in geometric tensor products we distinguish vectors from one-forms. We then define
\be
\label{eq:angular-mixed}
c_\chi\coloneqq\theta^\flat\otimes(\nablaslash\chi)^\flat+(\nablaslash\chi)^\flat\otimes\theta^\flat.
\ee
For $r\geq R$, we consider the exterior metric of the seed data set
\be
\label{eq:basic-exterior}
g_m^{[1]}\coloneqq\Bigl(1+\frac{m\chi}{2r}\Bigr)^4\ebf+\frac{2m}{r}c_\chi,
\ee
which, by~\eqref{eq:angular-interpolation}, coincides exactly with $g_m^{\mathrm{Sch}}$ near $F_1$ and with $\ebf$ near $F_0$. It is positive definite for $|m|/R$ sufficiently small, with a threshold depending on the fixed function~$\chi$. The term involving $c_\chi$ is supported inside the angular transition region.

The first term of~\eqref{eq:basic-exterior}, without $c_\chi$ correction, has scalar curvature
\be
\label{eq:conformal-error}
R_{(1+m\chi/(2r))^4\ebf} =-\frac{4m}{r^3}\Bigl(1+\frac{m\chi}{2r}\Bigr)^{-5}\Deltaslash\chi.
\ee
Its leading error has order $r^{-3}$ and generally fails the weighted square-integrability condition at accuracy $p_A\geq1$ (Section~\ref{section-3-2}, below). The mixed term in~\eqref{eq:basic-exterior} is chosen to cancel this error.

To find out this cancellation, we write, for a metric perturbation~$\gamma$, the linearized operator~\cite{BLF--PLF,BLF--PLF-stability}
\be
\label{eq:metric-linearization}
L \gamma\coloneqq DR_{\ebf}[\gamma]=\del_i\del_j \gamma_{ij}-\Delta\Tr_{\ebf}\gamma.
\ee
We will prove in Section~\ref{section-2-3} that the first-order metric coefficient $G_1\coloneqq\frac{2}{r}(\chi\ebf+c_\chi)$ satisfies
\be
\label{eq:H1}
LG_1=L\bigl((2/r)(\chi\ebf+c_\chi)\bigr)=0.
\ee
Nonlinear terms are of higher order in~$r$ and quadratic in the mass~$m$.
Consequently, for every fixed derivative order $k$, the exterior metric~\eqref{eq:basic-exterior} satisfies
\be
\label{eq:basic-error}
|\del^k R_{g_m^{[1]}}|\leq C_k m^2r^{-4-k},\qquad r\geq R,
\ee
provided $|m|/R$ lies in a fixed sufficiently small interval. The constants $C_k>0$ depend on derivatives of $\chi$ and on that interval. Thus the first corrected seed data set already allows every accuracy exponent $1<p_A<2$.

%-----------------------------------------------------------------------------------------------------------------------------------------

\subsection{Schwarzschild--Schwarzschild seed data sets}\label{section:hamiltonian-main}
We now state Proposition~\ref{prop:finite-expansion} on seed metrics of arbitrary accuracy~$p_A$ interpolating between two Schwarzschild regions with signed masses. The radial identities in Section~\ref{section-2-3}, below, give its proof and extend the construction to slow leading decay.

We generalize the previous discussion to the case of two Schwarzschild metrics.
With the radius~$R$, angular regions $F_0,F_1$, and function $\chi$ satisfying~\eqref{eq:angular-interpolation} as before,
we introduce the \emph{logarithmic variable} by $\ell\coloneqq\log(r/R)$ and define the two-tensor
\be
\label{eq:radial-corrector}
\Crad\coloneqq\theta^\flat\otimes\theta^\flat-\tfrac12\ebf.
\ee
For $m_1,m_2\in\RR$, we define
\be
\label{eq:two-mass-profile}
\begin{gathered}
M(\theta)\coloneqq m_1+(m_2-m_1)\chi(\theta),\qquad \mu\coloneqq\frac{|m_1|+|m_2|}{R},\\
M=m_1\text{ near }F_0,\qquad M=m_2\text{ near }F_1,\qquad \supp dM\subseteq\supp d\chi\Subset\Lambda.
\end{gathered}
\ee
The first exterior metric is
\be
\label{eq:two-mass-exterior}
g_{m_1,m_2}^{[1]}\coloneqq \Bigl(1+\frac{M}{2r}\Bigr)^4\ebf+\frac{2(m_2-m_1)}r c_\chi.
\ee
It interpolates between $g_{m_1}^{\mathrm{Sch}}$ where $\chi=0$ and $g_{m_2}^{\mathrm{Sch}}$ where $\chi=1$. Its linear term is $2m_1r^{-1}\ebf+(m_2-m_1)G_1$, which lies in the kernel of $L$ by~\eqref{eq:schwarzschild-components}, below, and $\Delta(r^{-1})=0$ away from $r=0$. Thus the leading cancellation holds for any $m_1,m_2\in\RR$, including zero. Setting $m_1=0$ and $m_2=m$ in~\eqref{eq:two-mass-exterior} gives exactly the metric $g_m^{[1]}$ for Schwarzschild--Minkowski gluing defined in~\eqref{eq:basic-exterior}.
We impose the condition $\mu_0<1$ (hence $\mu<1$) to ensure that black hole horizons (for positive mass) and conformal zeros (for negative mass) lie inside the excluded ball.

\begin{proposition}[Finite Hamiltonian accuracy for two Schwarzschild masses]\label{prop:finite-expansion}
Fix a radius $R>0$, an angular function $\chi$ satisfying~\eqref{eq:angular-interpolation}, and an integer $J\geq1$. There exist functions $P_j(m_1,m_2;\ell,\theta)$, homogeneous polynomials of total degree $j$ in the masses and polynomials in $\ell=\log(r/R)$, with smooth angular coefficients supported in the fixed compact set $\supp d\chi\Subset\Lambda$, such that
\be
\label{eq:finite-expansion}
g_{m_1,m_2}^{[J]}\coloneqq g_{m_1,m_2}^{[1]}+ \sum_{j=2}^J r^{-j}P_j(m_1,m_2;\ell,\theta)\Crad
\ee
has the following properties. There exists $\mu_0\in(0,1)$ such that all these metrics are positive definite for $\mu<\mu_0$, agree with the prescribed Schwarzschild metrics outside the transition region, and their scalar curvatures satisfy, for every fixed derivative order $k$,
\be
\label{eq:finite-error}
|\del^kR_{g_{m_1,m_2}^{[J]}}| \leq C_{J,k}(|m_1|+|m_2|)^{J+1}r^{-J-3-k}(1+\ell)^{K_{J,k}}, \qquad r\geq R.
\ee
Here $K_{J,k}$ is a finite nonnegative integer, and the constants are independent of the mass pair in this neighborhood. Each correction vanishes on $m_1=m_2$. For $m_1=0$ and $m_2=m$, one obtains a Schwarzschild--Minkowski family, denoted by $g_m^{[J]}\coloneqq g_{0,m}^{[J]}$, whose coefficients satisfy $P_j(0,m;\ell,\theta)=m^jP_j(0,1;\ell,\theta)$ by homogeneity. For any prescribed finite accuracy exponent $p_A>1$, choose $J$ with $J+1>p_A$. The estimate above then gives $\del^kR_{g_{m_1,m_2}^{[J]}}=o(r^{-p_A-2-k})$ for every fixed $k$, since the strict inequality absorbs the powers of $\log(r/R)$.
\end{proposition}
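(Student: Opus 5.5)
We argue by induction on $J$, solving at each order a radial ODE for the coefficient of $\Crad$. The base case $J=1$ is \eqref{eq:basic-error}, extended from $m_1=0$ to two masses. By \eqref{eq:H1} and $\Delta(r^{-1})=0$, the linear part of $g^{[1]}_{m_1,m_2}$ lies in $\ker L$. The key point is a Taylor expansion of $R_g$ in the perturbation $g-\ebf$. Each term is a homogeneous polynomial in the masses, and its coefficients are smooth in $\theta$ and polynomial in $\ell$. Its $r$-dependence is $r^{-j-2}$ at mass degree $j$, since every metric coefficient of mass degree $j$ carries $r^{-j}$ and $R$ has two derivatives. Thus $R_{g^{[1]}}=\sum_{j\ge2}r^{-j-2}E_j(m_1,m_2;\ell,\theta)$. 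Convergence for $\mu<\mu_0$ and uniformity of all constants follow from analyticity of $R_g$ in $g$ near $\ebf$ on the region $r\ge R$, after rescaling to unit radius.

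The central computation is the action of $L$ on the ansatz $r^{-j}P(\ell,\theta)\Crad$. We have $\Tr_\ebf\Crad=1-\tfrac32=-\tfrac12$. Writing $f=r^{-j}P$, we compute $\del_i\del_j(f\theta_i\theta_j)$ and $\Delta f$ in polar coordinates. The angular derivatives of $\theta_i\theta_j$ produce only radial contractions, so the result should be purely radial. I expect
\[
L(f\Crad)=r^{-2}\partial_r(r^2\partial_r f)-\tfrac12\Delta f+\tfrac12\Delta f+(\text{lower radial terms})=r^{-1}\partial_r^2(rf)\,\cdot c,
\]
up to a nonzero constant $c$. The combination $\theta\theta-\tfrac12\ebf$ is precisely the one that cancels all $\Deltaslash$ terms. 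The main thing to verify is that no angular derivatives of $P$ appear, which is what preserves the angular support. Granting this, $L(r^{-j}P\Crad)=r^{-j-2}\mathcal D_jP$, where $\mathcal D_j$ is a constant-coefficient ODE operator in $\ell$. Its indicial polynomial is $c(j-1)(j-2)$ or similar, nonvanishing for $j\ge3$ but possibly resonant for small $j$. At $j=2$ one may need the $\ell$-polynomial ansatz.

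At step $j$ we solve $\mathcal D_jP_j=-E_j$ within polynomials in $\ell$. If the constant term of $\mathcal D_j$ is nonzero, the degree is preserved. At resonance we raise the degree by one or two, which is the source of the logarithms and of $K_{J,k}$. $P_j$ inherits homogeneity, smooth dependence on $\theta$, and support in $\supp d\chi$, because $E_j$ vanishes wherever $M$ is locally constant. Indeed, there the metric is exactly Schwarzschild, which is scalar-flat, and the earlier corrections vanish there. Likewise $E_j\equiv0$ when $m_1=m_2$, since the metric is then exactly Schwarzschild, so $P_j$ vanishes on that locus.

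After adding $r^{-j}P_j\Crad$, the new error at order $j$ cancels. The remaining nonlinear cross terms have mass degree at least $j+1$ and decay $r^{-j-3}$, so the induction closes. Truncating after $J$ leaves the tail, estimated with $k$ derivatives, giving \eqref{eq:finite-error}. Positivity for small $\mu$ is clear, since all corrections are $O(\mu)$ times bounded polynomial-in-$\ell$ factors over $r^j$, and $r^{-j}\ell^K\le C R^{-j}$. Homogeneity gives $P_j(0,m)=m^jP_j(0,1)$. The final remark on $p_A$ is immediate from $r^{-J-3}\ell^K=o(r^{-p_A-2})$ when $J+1>p_A$.

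The main obstacle is the precise computation of $L(f\Crad)$ and its indicial roots, including the claim that it is purely radial. If angular terms did survive, I would modify the ansatz by adding $\gslash$-trace parts and proceed the same way.
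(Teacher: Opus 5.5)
Your overall scheme is the paper's. You expand $R_g$ by mass degree, cancel the degree-$j$ error with $r^{-j}P_j\Crad$ via an ODE in $\ell$, obtain support and the vanishing on $m_1=m_2$ from the exact Schwarzschild regions, and bound the tail. But the step you defer as ``the main obstacle'' is the whole content of the construction, and the closed form you write for it is wrong. Applied to $f=r^{-j}$, your $c\,r^{-1}\del_r^2(rf)$ gives $c\,j(j-1)r^{-j-2}$, so its resonances are at $j=0,1$. This contradicts the indicial polynomial $(j-1)(j-2)$ you state next. Taken literally, it would make the order-$m^2$ step non-resonant, so the $\log(r/R)$ at order $m^2r^{-2}$ (compare \eqref{eq:second-exterior}) would not arise. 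Your fallback of adding $\gslash$-trace parts would also take the corrections outside the form $r^{-j}P_j\Crad$ that the statement asserts.

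The missing lemma is a short computation. Since $\Tr_{\ebf}\Crad=-\tfrac12$, the trace reversal of $f\Crad$ is $f\Crad+\tfrac12 f\ebf=f\,\theta^\flat\otimes\theta^\flat$, and $L\gamma$ is the double divergence of the trace reversal. Using $\theta_j\del_j=\del_r$, $\theta_j\del_j\theta_i=0$ and $\del_j\theta_j=2/r$, one gets $\del_j(f\theta_i\theta_j)=\theta_i(\del_rf+2f/r)$, and hence
\[
L(f\Crad)=\del_r^2f+\frac4r\,\del_rf+\frac2{r^2}\,f=r^{-2}\del_r^2(r^2f)=r^{-2}(\del_\ell+2)(\del_\ell+1)f .
\]
No angular derivatives of $f$ appear; this is \eqref{eq:spherical-L} with $b=0$, $C=0$. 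For $f=r^{-j}P(\ell,\theta)$ it gives $r^{-j-2}(\del_\ell+2-j)(\del_\ell+1-j)P$, which is exactly the recursion \eqref{eq:polynomial-recursion}. It is invertible on polynomials of bounded degree for $j\geq3$. At $j=2$ you invert $\del_\ell-1$ and integrate once, which raises the degree by exactly one, never two: the roots are simple, and $j=1$ is handled by $G_1$ rather than by the recursion. Fix the integration constant to zero so that $P_2$ vanishes wherever $E_2$ does; this preserves both the support in $\supp d\chi$ and the vanishing on $m_1=m_2$. With this in place the rest of your argument goes through. For the tail, the paper uses Taylor's formula along $\tau\mapsto(\tau m_1,\tau m_2)$. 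Summing your series also works, provided you use the boundedness of $(R/r)(1+\ell)^K$ to control the growing $\ell$-degrees of the $E_j$.
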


%-----------------------------------------------------------------------------------------------------------------------------------------

\subsection{Radial equations for the scalar curvature constraint}\label{section-2-3}
We now give the proof of Proposition~\ref{prop:finite-expansion}. The idea is to iteratively solve the linearized equtions with nonlinear terms treated as sources. We express the linearized Hamiltonian operator in radial and tangential components. This decomposition gives radial inverses that preserve angular support, including at harmonic and resonant exponents. These inverses also give data sets with slow leading decay.

\subsubsection{Spherical decomposition and radial exponents.}
For a symmetric covariant metric perturbation $\gamma$, we introduce its trace reversal and its spherical components by
\be
\label{eq:trace-reversal}
A\coloneqq \gamma-(\Tr_{\ebf}\gamma)\ebf =a\,\theta^\flat\otimes\theta^\flat+\theta^\flat\otimes b+b\otimes\theta^\flat+C,
\ee
motivated by the fact that $L\gamma$ is the double divergence of~$A$.
The inverse relation is $\gamma=A-\frac12(\Tr_{\ebf}A)\ebf$. Here $a=A(\theta,\theta)$ is a scalar, $b$ is a tangential one-form, and $C$ is a symmetric covariant two-tensor on $S^2$. For a tangent vector $V$, the mixed component is $b(V)=A(\theta,V)$. The products in~\eqref{eq:trace-reversal} are tensor products of one-forms. Their Cartesian extensions satisfy $\theta_i b_i=\theta_iC_{ij}=0$.

We use the logarithmic derivative $\del_\ell = r\del_r$. Recall that the operators $\divs$, $\nablaslash$, and $\trs$ are the covariant divergence, covariant derivative, and trace of the round metric. They act on tangential fields; in particular, $\divs\divs C$ uses covariant differentiation rather than differentiating Cartesian tensor components as scalar functions.
The proof of Lemma~\ref{lem:double-divergence} is given in \hyperref[appendix:spherical-divergence]{Appendix~\ref*{appendix:radial-selection}}.

\begin{lemma}\label{lem:double-divergence}
For the decomposition~\eqref{eq:trace-reversal}, one has
\be
\label{eq:spherical-L}
r^2L \gamma=(\del_\ell+2)(\del_\ell+1)a+2(\del_\ell+2)\divs b+\divs\divs C-(\del_\ell+1)\trs C.
\ee
\end{lemma}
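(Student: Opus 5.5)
Since $\Tr_{\ebf}\gamma=-\tfrac12\Tr_{\ebf}A$, we have $\gamma_{ij}=A_{ij}-\tfrac12(\Tr_{\ebf}A)\delta_{ij}$. Substituting into~\eqref{eq:metric-linearization} gives $\del_i\del_j\gamma_{ij}-\Delta\Tr_{\ebf}\gamma=\del_i\del_jA_{ij}-\tfrac12\Delta\Tr A+\tfrac12\Delta\Tr A=\del_i\del_jA_{ij}$. Thus $L\gamma$ is the Euclidean double divergence of $A$, and the task is to compute $\del_i\del_j$ of each of the three pieces in~\eqref{eq:trace-reversal} in spherical form. I would do this in polar coordinates $x=r\theta$, where the Euclidean metric is $dr^2+r^2\gslash$. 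For the tensors involved, which are radially homogeneous in Cartesian components, I would use the standard formulas: for a vector field $V=f\,\theta+W$ with $W$ tangential, $\operatorname{div}V=r^{-2}\del_r(r^2f)+r^{-1}\divs W$, where $\divs$ acts on $W$ viewed as an $r$-independent-scaled tangential field on $S^2$.

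The key steps are as follows. First, I compute the vector $V_j\coloneqq\del_iA_{ij}$ in the form $V=f\theta+W$.
\begin{itemize}
\item For $a\theta_i\theta_j$: using $\del_i\theta_i=2/r$ and $\theta_i\del_i\theta_j=0$, this piece contributes $(\del_ra+2a/r)\theta_j$.
\item For $\theta_ib_j+b_i\theta_j$: this contributes $(\del_r+2/r)b_j+\theta_j\,\del_ib_i+b_i\del_i\theta_j$. Here $\del_ib_i=r^{-1}\divs b$, since $b$ has Cartesian components homogeneous of degree $0$ along rays, up to its explicit $r$-dependence. Also $b_i\del_i\theta_j=b_j/r$.
\item For $C_{ij}$: we have $\del_iC_{ij}=r^{-1}(\divs C)_j-r^{-1}(\trs C)\theta_j$. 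The second term arises from the normal component of the spherical covariant derivative, namely the second fundamental form of the sphere of radius $r$.
\end{itemize}
Collecting terms gives $f=\del_ra+2a/r+r^{-1}\divs b-r^{-1}\trs C$ and $W=(\del_r+3/r)b+r^{-1}\divs C$.

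Second, I apply $\operatorname{div}V=r^{-2}\del_r(r^2f)+r^{-1}\divs W$ and convert to $\del_\ell=r\del_r$, using $r^2\cdot r^{-2}\del_r(r^2 f)=(\del_\ell+2)(rf)$ and $r\divs W=(\del_\ell+3)\divs b\cdot r^{-1}\cdot r+\divs\divs C$ (with the correct powers of $r$). Expanding:
\begin{itemize}
\item $(\del_\ell+2)(\del_\ell a+2a)=(\del_\ell+2)^2a$, which must combine with other pieces to give $(\del_\ell+2)(\del_\ell+1)a$;
\item $\divs b$ appears with operator $(\del_\ell+2)+(\del_\ell+\text{const})$;
\item $\trs C$ appears with $-(\del_\ell+2)$ plus a correction.
\end{itemize}

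The main obstacle is bookkeeping of the $r$-powers and commutators $[\del_\ell,r^{-1}]$, which shift the constants $2\to1$ and so on. I expect exactly these shifts to produce the stated coefficients $(\del_\ell+1)$ for $a$ and $\trs C$, and $2(\del_\ell+2)$ for $\divs b$. I would fix the conventions cleanly by viewing $b$ and $C$ as families of tensors on the unit sphere. Since the Cartesian components of a tangential one-form $b$ on the unit sphere, pulled back to radius $r$, carry a factor $r$, I would use $r$-weighted normalizations and recheck each constant. As a consistency check, I would verify the result on known kernel elements: $\gamma=r^{-1}\ebf$ gives $A=-2r^{-1}\ebf$, so $a=-2/r$, $C=-2r^{-1}\gslash$ (unit-sphere normalization), and $(\del_\ell+2)(\del_\ell+1)a=0$ while $-(\del_\ell+1)\trs C=0$, consistent with $L\gamma=0$. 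I would also check $\gamma=\Hess\phi$ for a test function $\phi$, since the linearized scalar curvature vanishes on Lie derivatives of $\ebf$. These checks pin down any remaining constant.
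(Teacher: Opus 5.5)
Your strategy is the same as the paper's: reduce $L\gamma$ to the Euclidean double divergence of $A$, compute $\operatorname{div}_{\ebf}A=f\theta+W$, and then apply the polar divergence formula. Your expressions
\[
f=\del_ra+\frac{2a+\divs b-\trs C}{r},\qquad W=\Bigl(\del_r+\frac3r\Bigr)b+\frac{\divs C}{r}
\]
agree exactly with the paper's radial and tangential components. The gap is in the second step, where the conversion to $\del_\ell$ is wrong. One has $\del_r(r^2f)=\del_r\bigl(r\cdot(rf)\bigr)=(\del_\ell+1)(rf)$, not $(\del_\ell+2)(rf)$.

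With your version, the $a$-term would be $(\del_\ell+2)^2a$, and nothing can combine with it, because $W$ contains no $a$. Likewise the coefficient of $\divs b$ would be $(\del_\ell+2)+(\del_\ell+3)=2\del_\ell+5$. No commutator shifts $[\del_\ell,r^{-1}]$ can repair these coefficients. Saying that you ``expect'' the right constants and that kernel checks will ``pin them down'' is not a derivation. Your check with $\gamma=r^{-1}\ebf$ would in fact have exposed the error, since $(\del_\ell+2)^2r^{-1}=r^{-1}\neq0$.

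With the correct identity the computation closes in one line:
\[
r^2L\gamma=\del_r(r^2f)+r\,\divs W=(\del_\ell+1)\bigl((\del_\ell+2)a+\divs b-\trs C\bigr)+(\del_\ell+3)\divs b+\divs\divs C .
\]
This is the stated formula, because $(\del_\ell+1)+(\del_\ell+3)=2(\del_\ell+2)$.

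You also do not need any $r$-weighted renormalization of $b$ and $C$. Your first step already uses the paper's convention: Cartesian orthonormal-frame components regarded as tensors on the unit sphere, as in $C_0=-4\chi\gslash$ with $\gslash_{ij}=\delta_{ij}-\theta_i\theta_j$. In that convention $\del_ib_i=r^{-1}\divs b$ holds even with explicit $r$-dependence, because $\theta_i\del_rb_i=\del_r(\theta_ib_i)=0$, and the same holds for $C$. Introducing extra powers of $r$ would break exactly those formulas.
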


We use leading tensors with a fixed radial power $r^{-q}=R^{-q}e^{-q\ell}$: their Cartesian components have the form $r^{-q}\gamma_0(\theta)$, with $\gamma_0$ depending only on the angular variable. Polynomial degrees in the masses are determined later on. In dimension three, components $(a,b,C)=r^{-q}(a_0,b_0,C_0)$ satisfy $L \gamma=0$ exactly when
\be
\label{eq:homogeneous-constraint}
(2-q)(1-q)a_0+2(2-q)\divs b_0+\divs\divs C_0+(q-1)\trs C_0=0.
\ee
For $q\notin\{1,2\}$, arbitrary smooth tangential profiles $b_0,C_0$ therefore determine
\be
\label{eq:slow-parametrization}
a_0=-\frac{2(2-q)\divs b_0+\divs\divs C_0+(q-1)\trs C_0}{(2-q)(1-q)}.
\ee
The construction preserves angular support. For $q\in(0,+\infty)\setminus\{1,2\}$, the metric perturbation $\gamma=r^{-q}\gamma_0(\theta)$ with this value of~$a_0$ gives $|\del^k R_{\ebf+t \gamma}|=\Obig(t^2\,r^{-2q-2-k})$ for any derivative order $k\geq 0$.

\subsubsection{Compatibility at the harmonic exponent.} At $q=1$, the radial component $a_0$ is free and the tangential components must satisfy
\be
\label{eq:harmonic-compatibility}
\divs(2b_0+\divs C_0)=0.
\ee
The trace reversal of $r\,G_1$, defined just before the equation~\eqref{eq:H1}, has components
\be
\label{eq:schwarzschild-components}
a_0=-4\chi,\qquad b_0=2(\nablaslash\chi)^\flat,\qquad C_0=-4\chi\gslash.
\ee
Since $\divs(\chi\gslash)=(\nablaslash\chi)^\flat$, these components satisfy~\eqref{eq:harmonic-compatibility}. This proves $LG_1=0$ and hence~\eqref{eq:basic-error}. At $q=2$, the compatibility condition instead reads $\divs\divs C_0+\trs C_0=0$. For these two exponents, if the conditions on $b_0,C_0$ (possibly with an inhomogeneous source) are not satisfied, one must include a logarithmic radial term in the scalar coefficient~$a$.

\subsubsection{Radial inversion and resonant exponents.}
The term $a\theta^\flat\otimes\theta^\flat$ in~\eqref{eq:trace-reversal} is the trace reversal of $a\Crad$.  This motivates the following approach to invert the linearized scalar curvature.
The radial inverse preserves angular support because it is an ordinary differential equation at each angular point.

\begin{lemma}[Radial inversion of the Hamiltonian constraint]\label{lem:Hamiltonian-radial}
Fix a scalar source~$E$ that depends on the radius~$r$ and angle~$\theta$.  The inhomogeneous equation $r^2L\gamma=E$ admits a family of solutions $\gamma=a\Crad$ with a scalar coefficient characterized by
\be
\label{eq:rdr2aE}
(r\del_r+2)(r\del_r+1)a = E , \qquad
a(r,\theta) = \int_R^r \frac{r-r'}{r^2} E(r',\theta) dr' + a_1(\theta) r^{-1} + a_2(\theta) r^{-2} ,
\ee
for some angular functions~$a_1,a_2$.
If $E=r^{-q}E_0(\ell,\theta)$ where $E_0$ is polynomial in $\ell=\log(r/R)$, then $a_1,a_2$ can be chosen such that $a=r^{-q}a_0(\ell,\theta)$ with $a_0$ polynomial in~$\ell$ of the same degree when $q\not\in\{1,2\}$ and otherwise of one higher degree and with vanishing constant coefficient.
\end{lemma}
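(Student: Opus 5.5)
The plan has three steps. First reduce the tensor equation to a scalar ODE using Lemma~\ref{lem:double-divergence}. Then solve that ODE by variation of parameters. Finally treat the power-times-polynomial sources by triangular linear algebra on the coefficients, splitting into the nonresonant and resonant cases.

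\emph{Reduction to a scalar ODE.} Take $\gamma=a\Crad$. Since $\Tr_{\ebf}\Crad=1-\tfrac32=-\tfrac12$, the trace reversal is
\[
A=a\Crad+\tfrac12 a\,\ebf=a\,\theta^\flat\otimes\theta^\flat .
\]
In the decomposition~\eqref{eq:trace-reversal} this means the radial component is $a$ and $b=0$, $C=0$. Lemma~\ref{lem:double-divergence} then gives $r^2L\gamma=(\del_\ell+2)(\del_\ell+1)a$. So the equation $r^2L\gamma=E$ is exactly the first equation in~\eqref{eq:rdr2aE}, an ODE in $r$ at each fixed $\theta$. This is also why angular support is preserved: $a$ vanishes wherever $E$ does, for the particular solution below together with $a_1=a_2=0$ there.

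\emph{General solution.} The homogeneous solutions of $(r\del_r+2)(r\del_r+1)a=0$ are $r^{-1}$ and $r^{-2}$. For a particular solution, write the operator as $r^{-1}\del_r\bigl(r^{-1}\del_r(r^2\cdot)\bigr)\cdot r^{2}$; concretely, $(r\del_r+1)(r\del_r+2)a=r^{-1}\del_r\bigl(r^{-1}\del_r(r^2 a)\cdot r^{2}\bigr)$ up to bookkeeping. Rather than track that factorization, I would check directly that
\[
a_p=\int_R^r\frac{r-r'}{r^2}E(r',\theta)\,dr'=\frac1r\int_R^r E\,dr'-\frac1{r^2}\int_R^r r'E\,dr'
\]
works. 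Set $F=\int_R^r E$ and $G=\int_R^r r'E$, so that $a_p=r^{-1}F-r^{-2}G$ with $\del_rF=E$ and $\del_rG=rE$. Then
\[
(r\del_r+1)(r^{-1}F)=E,\qquad (r\del_r+1)(r^{-2}G)=-r^{-2}G+E,
\]
so $(r\del_r+1)a_p=r^{-2}G$. Applying $(r\del_r+2)$ to $r^{-2}G$ gives $r\cdot r^{-2}\cdot rE=E$. This verifies the formula. Since the operator is second order and its kernel is spanned by $r^{-1},r^{-2}$, this yields the full family in~\eqref{eq:rdr2aE}.

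\emph{Power-times-polynomial sources.} Let $E=r^{-q}E_0(\ell,\theta)$ with $E_0$ of degree $d$ in $\ell$. Seek $a=r^{-q}a_0(\ell,\theta)$. Since $\del_\ell(r^{-q}f)=r^{-q}(\del_\ell-q)f$, the equation becomes
\[
(\del_\ell+2-q)(\del_\ell+1-q)a_0=E_0 .
\]
On polynomials in $\ell$ of degree at most $d$, $\del_\ell$ is nilpotent.

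\emph{Nonresonant case $q\notin\{1,2\}$.} Both constants $2-q$ and $1-q$ are nonzero, so each factor is invertible by a terminating Neumann series, $(c+\del_\ell)^{-1}=\sum_{k\le d}(-1)^k c^{-k-1}\del_\ell^k$. This gives a unique $a_0$ of the same degree $d$.

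\emph{Resonant case $q\in\{1,2\}$.} One factor is $\del_\ell$ itself, and the other, with constant $\pm1$, is invertible. For $q=1$ the equation reads $\del_\ell(\del_\ell+1)a_0=E_0$. Set $\tilde E=(\del_\ell+1)^{-1}E_0$, of degree $d$, and take the antiderivative $a_0=\int_0^\ell\tilde E$, which has degree $d+1$ and zero constant term. For $q=2$, the same argument applies with $(\del_\ell-1)$ in place of $(\del_\ell+1)$. Degree exactly $d+1$ holds when the top coefficient of $E_0$ is nonzero, and "one higher degree" is read as "at most one higher".

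\emph{Matching with~\eqref{eq:rdr2aE}.} Any two solutions of the ODE differ by a combination $\alpha(\theta)r^{-1}+\beta(\theta)r^{-2}$. So the ansatz solution equals $a_p$ plus such terms, and suitable choices of $a_1,a_2$ realize it. Smoothness in $\theta$ and angular support are inherited because all the operations above act coefficientwise.

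The main obstacle is mild. It is the bookkeeping in the resonant case, where the kernel element $r^{-q}$ is absorbed by fixing the constant coefficient to zero. One also has to note that this choice corresponds to specific $a_1,a_2$, nonzero in general, since the integral from $R$ does not by itself produce that normalization.
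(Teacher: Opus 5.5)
Your proof is correct and follows the paper's route: you reduce to the scalar ODE via Lemma~\ref{lem:double-divergence} with $b=C=0$ (your trace check $A=a\,\theta^\flat\otimes\theta^\flat$ makes this explicit), verify the integral formula by direct differentiation as the paper does, and use invertibility of $(\del_\ell+2-q)(\del_\ell+1-q)$ on bounded-degree polynomials, integrating with zero constant at the resonances $q\in\{1,2\}$. The one blemish is the factorization sentence you set aside, which is wrong as written; the correct identity is $(r\del_r+1)(r\del_r+2)a=\del_r^2(r^2a)$, and integrating it twice from $R$ gives \eqref{eq:rdr2aE} at once, so you can either replace that sentence with this identity or delete it.
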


\begin{proof}
The first relation in~\eqref{eq:rdr2aE} is a direct application of~\eqref{eq:spherical-L} with $b=0$ and $C=0$.
Its general solution is given as the second relation in~\eqref{eq:rdr2aE}: the homogeneous part is obvious while the inhomogeneous part is checked by direct computation,
\be
(r\del_r+2)(r\del_r+1) \int_R^r \frac{r-r'}{r^2} E(r',\theta) dr'
= (r\del_r+2) \int_R^r \frac{r'}{r^2} E(r',\theta) dr'
= E(r,\theta) .
\ee
For the remaining statements, it is most convenient to seek~$a$ directly in the form $a=r^{-q}a_0(\ell,\theta)$ and observe that
\be
\label{eq:radial-inverse}
L\bigl(r^{-q}a_0(\ell,\theta)\Crad\bigr) =r^{-q-2}(\del_\ell+2-q)(\del_\ell+1-q)a_0.
\ee
For $q\not\in\{1,2\}$ the linear operator $P\mapsto (\del_\ell+2-q)(\del_\ell+1-q)P$ on the space of polynomials of any given maximum degree is invertible, while for $q\in\{1,2\}$ it is surjective with a kernel consisting of polynomials of degree zero, namely functions of the angular variables.
\end{proof}

\begin{proof}[Proof of Proposition~\ref{prop:finite-expansion}]
We organize the expansion as a series in powers of the masses~$m_1,m_2$, or equivalently (by dimensional analysis) inverse powers of the radius~$r$.
We use the invertibility in Lemma~\ref{lem:Hamiltonian-radial} to cancel the scalar-curvature error successively at each order.

At total mass degree $j$, expansion of the inverse metric, Christoffel symbols, and scalar curvature gives an error $r^{-j-2}E_j(m_1,m_2;\ell,\theta)$. The coefficient is homogeneous of degree $j$ in the masses and polynomial in $\ell$. The cancellation above starts the induction at degree two. At each further step we solve
\be
\label{eq:polynomial-recursion}
(\del_\ell+2-j)(\del_\ell+1-j)P_j=-E_j.
\ee
At $j=2$, we invert $\del_\ell-1$ on polynomials and integrate~$\del_\ell$, fixing the free constant to zero. This step generates terms linear in~$\ell$; nonlinearities in the expansions of the scalar curvature then lead to higher-degree polynomials.
For $j\geq3$, both constant terms $2-j$ and $1-j$ in~\eqref{eq:polynomial-recursion} are nonzero, giving an inverse on each bounded-degree polynomial space. Applied to each mass monomial, this inverse preserves homogeneity, smoothness, and support, and~\eqref{eq:radial-inverse} cancels the degree-$j$ error. Where $d\chi=0$ locally, the metric is Schwarzschild with constant mass $M$, so all correction terms~$P_j$ vanish.  Continuity then ensures that all corrections are supported in $\supp d\chi$.

With $r/R\geq1$, the initial perturbation is bounded by $C\mu R/r$ and each correction and its scaled derivatives by $C(\mu R/r)^j(1+\ell)^K$. These functions are bounded for $r\geq R$. For fixed maximal order~$J$, a single $\mu_0$ makes every intermediate metric uniformly equivalent to~$\ebf$, for any mass with $\mu=(|m_1|+|m_2|)/R<\mu_0$. Taylor's formula along $\tau\mapsto(\tau m_1,\tau m_2)$ bounds the remainder $\del^kR_{g_{m_1,m_2}^{[J]}}$ by $C_{J,k}(R\mu)^{J+1}r^{-(J+3+k)}(1+\ell)^{K_{J,k}}$. Indeed, each mass factor supplies an inverse radial power, the scalar curvature supplies two derivatives, and differentiating $k$~times supplies~$r^{-k}$. The inverse metric is uniformly bounded throughout the Taylor segment. This proves~\eqref{eq:finite-error}, including when one or both masses vanish.

For $m_1=m_2$, the initial metric is Schwarzschild throughout the exterior, which is an exact solution, so induction gives $P_j(m_1,m_1;\ell,\theta)=0$.
\end{proof}

%-----------------------------------------------------------------------------------------------------------------------------------------

\subsection{Extension of the seed data sets through the interior}\label{section-2-4}
We extend the exterior seed data sets through a fixed ball using a radial cutoff~$\zeta_R$. This cutoff introduces an additional compactly supported constraint error. We fix the radius~$R$, angular region~$\Lambda$, and cutoff functions $\chi$ and~$\zeta_R$ first. The compactly supported constraint error can then be taken arbitrarily small in any given weighted norm by taking sufficiently small masses.

We fix a smooth radial cutoff $\zeta_R$ equal to zero for $r\leq R$ and to one for $r\geq2R$, with $0\leq\zeta_R\leq1$ and $|\del^k\zeta_R|\leq C_kR^{-k}$. The global seed data set is
\be
\label{eq:global-seed}
\seedg(m_1,m_2)\coloneqq\ebf+\zeta_R(g_{m_1,m_2}^{[J]}-\ebf),\qquad \seedh(m_1,m_2)\coloneqq0,
\ee
interpolating between $\ebf$ for $r\leq R$ and $g_{m_1,m_2}^{[J]}$ for $r\geq 2R$. We write $\seedg(m)\coloneqq\seedg(0,m)$ and $\seedh(m)\coloneqq\seedh(0,m)$. We choose a smooth connected domain $\Omega$ that contains the closed ball~$\overline B_{3R}$ and the (open) cone over~$\Lambda$ and coincides with it for $r\geq4R$.  In particular it contains the cone over $\supp d\chi$. This puts every radial cutoff error compactly inside~$\Omega$.

\begin{figure}[tb]
\centering
\begin{tikzpicture}[x=1.06cm,y=1cm,font=\footnotesize,line width=.4pt]
\useasboundingbox (-.03,-2.17) rectangle (12.65,2.12);
\def\outerpatch{(-.91,.25) .. controls (-1.02,.60) and (-.55,.80) .. (-.28,.89)
 .. controls (-.03,1.02) and (.16,.68) .. (.44,.68)
 .. controls (.77,.72) and (.94,.43) .. (.84,.13)
 .. controls (.72,-.03) and (.86,-.36) .. (.53,-.48)
 .. controls (.33,-.52) and (.24,-.33) .. (.03,-.45)
 .. controls (-.26,-.65) and (-.47,-.38) .. (-.66,-.31)
 .. controls (-.89,-.25) and (-.79,.04) .. (-.91,.25) -- cycle}
\def\innerpatch{(-.48,.39) .. controls (-.58,.56) and (-.31,.70) .. (-.10,.65)
 .. controls (.09,.60) and (.10,.43) .. (.28,.39)
 .. controls (.44,.32) and (.39,.08) .. (.23,.04)
 .. controls (.08,-.01) and (.02,.14) .. (-.16,.06)
 .. controls (-.37,-.02) and (-.37,.23) .. (-.48,.39) -- cycle}
\def\angularsphere{%
 \shade[ball color=gray!10] (0,0) circle (1);
 \begin{scope}
  \clip (0,0) circle (1);
  \draw[black!18,line width=.22pt] (-1,-.13) .. controls (-.35,-.63) and (.42,-.51) .. (1,-.02);
  \draw[black!13,line width=.22pt] (-.40,.92) .. controls (-.80,.22) and (-.55,-.70) .. (.03,-1);
  \draw[black!13,line width=.22pt] (.68,.73) .. controls (.31,.14) and (.36,-.71) .. (.03,-1);
  \path[fill=blue!13,draw=black!62,line width=.42pt] \outerpatch;
  \path[fill=gray!5,draw=black!62,line width=.42pt] \innerpatch;
 \end{scope}
 \draw[black!75,line width=.5pt] (0,0) circle (1);
 \node at (-.10,.34) {$g^{\mathrm{Sch}}_{m_2}$};
 \node at (-.08,-.79) {$g^{\mathrm{Sch}}_{m_1}$};
 \node at (.62,.37) {$\Lambda$};
}
\node at (1.24,1.92) {Interior extension};
\begin{scope}[shift={(1.24,0)},scale=1.15]
 \fill[gray!16] (0,0) circle (.75);
 \fill[white] (0,0) circle (.38);
 \draw (0,0) circle (.38);
 \draw (0,0) circle (.75);
 \node at (0,.2) {$\ebf$};
 \node at (0,.55) {$\zeta_R$};
 \node at (0,-.1) {$r\leq R$};
\end{scope}
\node[align=center] at (1.24,-1.30) {$\seedg=g^{[J]}_{m_1,m_2}$\\for $r\geq2R$};
\node at (1.24,-1.93) {$\seedh=0$};
\draw[black!20] (2.92,-1.76)--(2.92,1.65);
\node at (5.22,1.92) {Seed data set};
\begin{scope}[shift={(5.22,0)},scale=1.38]
 \angularsphere
 \begin{scope}
  \clip \outerpatch;
  \path[fill=red!36] (-.80,.14) .. controls (-.74,.34) and (-.55,.28) .. (-.54,.10)
   .. controls (-.57,-.12) and (-.78,-.06) .. (-.80,.14) -- cycle;
  \path[fill=red!36] (.28,.57) .. controls (.39,.72) and (.62,.60) .. (.58,.47)
   .. controls (.52,.34) and (.28,.39) .. (.28,.57) -- cycle;
  \path[fill=red!36] (.04,-.31) .. controls (.10,-.15) and (.37,-.12) .. (.43,-.26)
   .. controls (.40,-.43) and (.05,-.45) .. (.04,-.31) -- cycle;
 \end{scope}
\end{scope}
\node[red!65!black] at (5.22,-1.70) {Small scalar curvature error};
\node at (10.95,1.92) {Exact vacuum data set};
\begin{scope}[shift={(10.95,0)},scale=1.38]
 \angularsphere
 \begin{scope}
  \clip \outerpatch;
  \draw[blue!75!black,line width=.65pt] (-.80,.18) .. controls (-.72,.30) and (-.58,.25) .. (-.56,.10);
  \draw[blue!75!black,line width=.65pt] (-.79,.06) .. controls (-.70,.18) and (-.61,.11) .. (-.60,-.01);
  \draw[blue!75!black,line width=.65pt] (.29,.60) .. controls (.38,.67) and (.55,.57) .. (.58,.49);
  \draw[blue!75!black,line width=.65pt] (.31,.50) .. controls (.38,.56) and (.48,.47) .. (.50,.42);
  \draw[blue!75!black,line width=.65pt] (.07,-.28) .. controls (.18,-.13) and (.33,-.23) .. (.41,-.27);
  \draw[blue!75!black,line width=.65pt] (.09,-.36) .. controls (.20,-.21) and (.34,-.31) .. (.39,-.36);
 \end{scope}
\end{scope}
\node at (10.95,-1.70) {$R_g=0$, $h=0$};
\draw[-stealth,thick] (7.03,0)--(9.13,0);
\node[align=center] at (8.08,.68) {Localized\\projection};
\node[blue!75!black] at (8.08,-.36) {$g=\seedg+\gamma$};
\node[align=center] at (8.08,-.91) {$\gamma=0$\\outside $\Omega$};
\end{tikzpicture}
\caption{Left: radial interpolation between Euclidean metric~$\ebf$ and a finite approximation $g^{[J]}$.  Right: from a seed data set to an exact scalar-flat metric. The spheres show angular directions at infinity, with an irregular transition region $\Lambda$. Red marks stand for angular inhomogeneities in the scalar curvature error, which are reflected in an inhomogeneous correcting tensor, depicted by blue curves. The drawing is schematic.}
\label{fig:seed-geometry}
\end{figure}

The metric of the seed data set is smooth on $\RR^3$, positive definite for $\mu$ small, and satisfies the vacuum constraints exactly on $\RR^3\setminus\Omega$. In the compact annulus $R\leq r\leq2R$, its scalar curvature satisfies
\be
\label{eq:compact-error}
|\del^kR_{\seedg(m_1,m_2)}|\leq C_{J,k}R^{-2-k}\mu.
\ee
\begin{itemize}
\item \emph{Size of the metric perturbation.} On the annulus, the order-$j$ derivatives of $\seedg-\ebf$ are bounded by $C\mu R^{-j}$, including the derivatives of $\zeta_R$. The inverse metric remains uniformly bounded for sufficiently small $\mu$. Positivity follows from the convex interpolation between the positive exterior metric and $\ebf$.

\item \emph{Size of the scalar curvature error.} The terms $g^{-1}g^{-1}\del^2g$ and $g^{-1}g^{-1}g^{-1}\del g\del g$ have sizes $R^{-2}\mu$ and $R^{-2}\mu^2$, respectively. Since $\mu\leq1$, their sum and its differentiated versions satisfy~\eqref{eq:compact-error}. Derivatives of the cutoff create a linear error in $\mu$, even when the exterior error has been reduced to degree $J+1$ as stated in~\eqref{eq:finite-error}. This annular error need only be small in the required weighted norms.

\item \emph{Order of the choices.} We fix the radius~$R$, angular geometry, profiles, cutoff, regularity, and finite accuracy $p_A$, then choose $J$ with $J+1>p_A$. The H\"older norm introduced below to measure the scalar curvature error includes a weight~$r^{p_A+2}$.  On the annulus, the curvature scale $R^{-2}\mu$ multiplied by the radial source weight $R^{p_A+2}$ gives $R^{p_A}\mu$. The relevant weighted $L^2$ norm has the same scale (Section~\ref{section-3-2}, below). We finally choose $|m_1|+|m_2|$ small enough for these norms to be small, as needed for the existence of a projection to an exact vacuum data set $(g,h)$. The threshold thus depends on $R$ and $p_A$ as well as the other fixed choices; the ratio $\mu$ alone does not give a bound uniform in the radius and accuracy.
\end{itemize}
\noindent Figure~\ref{fig:seed-geometry} shows the interior extension and the exterior regions retained by projection. The cutoff gives Euclidean data on $B_R$ and $g^{[J]}_{m_1,m_2}$ for $r\geq2R$. We choose $\Omega$ conical over $\Lambda$ for $r\geq4R$, so the prescribed Schwarzschild regions lie outside $\Omega$. In Theorem~\ref{thm:fixed-radius}, below, we will construct the exact scalar-flat metric $g=\seedg+\gamma$ with $\gamma=0$ outside $\Omega$. The blue correction then cancels the red scalar curvature error and preserves both exterior metrics.

%============================================================================

\section{Seed data sets with errors of any finite decay order in both Einstein constraints}\label{section:full-constraints}

%-----------------------------------------------------------------------------------------------------------------------------------------

\subsection{Finite accuracy for both constraints}\label{section:coupled-accuracy}
We now construct seed data sets with arbitrarily slow decay, satisfying both Einstein constraints to any prescribed finite order. The main result is Proposition~\ref{prop:coupled-expansion}; similarly to the previous section, its proof consists of repeatedly solving linearized constraint equations as radial differential equations that preserve angular support.

We prescribe the leading metric and momentum independently as perturbations of the Euclidean data $(\ebf,0)$, subject to their linearized constraints. The linearization of constraints at $(\ebf,0)$ is the block diagonal operator $\mathcal L_0(\gamma,\pi)\coloneqq(L \gamma,\operatorname{div}_{\ebf}\pi)$.
This set-up is generalized at the end of the section to polyhomogeneous perturbations and perturbations of the seed metrics of Section~\ref{section-2} that interpolate between Schwarzschild metrics.

\begin{proposition}[Finite asymptotic accuracy for the full constraints]\label{prop:coupled-expansion}
Fix $R>0$, $q\in(0,1)$, an integer $J\geq1$, and smooth angular profiles $\gamma_0,\pi_0$ supported in a fixed compact subset of~$\Lambda$, satisfying
\be
\label{eq:coupled-leading-kernel}
L(r^{-q}\gamma_0)=0,\qquad \operatorname{div}_{\ebf}(r^{-q-1}\pi_0)=0.
\ee
There are polynomials $P_j,a_j,b_j$ in~$\ell$, with smooth angular coefficients (respectively scalars, scalars, and tangent vectors) supported in the union of these profile supports, such that
\be
\label{eq:coupled-exterior}
\begin{aligned}
g_t^{[J]}&\coloneqq\ebf+t\,r^{-q}\gamma_0+{\textstyle\sum_{j=2}^J} t^j\,r^{-jq}P_j(\ell,\theta)\Crad,\\
h_t^{[J]}&\coloneqq t\,r^{-q-1}\pi_0+{\textstyle\sum_{j=2}^J} t^j\,r^{-jq-1}
(a_j\theta\otimes\theta+\theta\otimes b_j+b_j\otimes\theta)
\end{aligned}
\ee
has the following properties for $\mu_q=|t|/R^q$ sufficiently small.
The metric $g_t^{[J]}$ is positive definite in the region $s=r/R\geq1$, and for every fixed derivative order $k$ one has
\be
\label{eq:coupled-error}
|\del^k\mathcal H(g_t^{[J]},h_t^{[J]})|+|\del^k\mathcal C(g_t^{[J]},h_t^{[J]})| \leq C R^{-2-k}\mu_q^{J+1}s^{-(J+1)q-2-k}(1+\log s)^K.
\ee
Here $K$ is a finite nonnegative integer, and $C,K$ depend on the fixed profiles, $q,J,k$, and $R$, but not on~$t$. In particular, any prescribed finite accuracy~$p_A$ for both constraints can be achieved by taking $J+1>p_A/q$.
\end{proposition}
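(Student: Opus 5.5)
We proceed as in the proof of Proposition~\ref{prop:finite-expansion}: expand $(\mathcal H,\mathcal C)(g_t^{[J]},h_t^{[J]})$ as a power series in~$t$, equivalently in the radial powers $r^{-jq}$, and cancel the coefficient of $t^j$ inductively for $j=2,\dots,J$. We use radial inverses that act pointwise in the angular variable. At order one, the linearization at $(\ebf,0)$ is the block-diagonal operator $\mathcal L_0$, and its vanishing is exactly the hypothesis~\eqref{eq:coupled-leading-kernel}.

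Suppose the corrections of orders $<j$ have been fixed. The degree-$j$ part of the Hamiltonian error has the form $t^jr^{-jq-2}E_j(\ell,\theta)$. The degree-$j$ part of the momentum error has the form $t^jr^{-jq-2}F_j(\ell,\theta)$, and we decompose $F_j$ into radial and tangential parts. Both coefficients are polynomial in~$\ell$, smooth, and supported in the union of the profile supports, because they are built algebraically from lower-order terms. By decoupling, the new metric term only has to solve $L(r^{-jq}P_j\Crad)=-r^{-jq-2}E_j$. By~\eqref{eq:radial-inverse} this reduces to $(\del_\ell+2-jq)(\del_\ell+1-jq)P_j=-E_j$, which Lemma~\ref{lem:Hamiltonian-radial} solves. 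When $jq\notin\{1,2\}$ the operator is invertible on polynomials of bounded degree. At the resonant values $jq\in\{1,2\}$, which can occur since $q\in(0,1)$ is arbitrary, we gain one power of~$\ell$ and fix the free constant to zero. Either way, angular support is preserved.

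For the momentum we need a radial inverse of $\operatorname{div}_{\ebf}$ acting on $\pi=r^{-jq-1}(a\,\theta\otimes\theta+\theta\otimes b+b\otimes\theta)$. The step I would carry out first is to compute this divergence in spherical components. The analogue of Lemma~\ref{lem:double-divergence} should give, up to lower-order-in-$\del_\ell$ constants,
\[
r^{jq+2}\operatorname{div}_{\ebf}\pi=\bigl((\del_\ell+2-jq)a+\divs b\bigr)\theta+(\del_\ell+3-jq)b,
\]
where the constants come from $\del_i\theta_i=2/r$ and from $\theta_i\del_i\theta_j=0$; I expect them to be exactly $2$ and~$3$. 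This system is triangular. First solve $(\del_\ell+3-jq)b_j=-F_j^{\mathrm{tan}}$, which is invertible on polynomials because $jq\neq3$ whenever $jq<3$. For large~$j$, at the resonance $jq=3$, we gain a power of~$\ell$ instead. Then solve $(\del_\ell+2-jq)a_j=-F_j^{\mathrm{rad}}-\divs b_j$, with the same treatment at $jq=2$. Each step is a first-order ODE in~$\ell$ at fixed angle, so the result is again a polynomial in~$\ell$ with support in the union of the profile supports. Because $\mathcal L_0$ is block diagonal, the metric and momentum corrections at order~$j$ do not interact linearly; they interact only through the higher-order errors generated nonlinearly.

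After $J$ steps, the remaining error is $O(t^{J+1})$. We bound it as in the proof of Proposition~\ref{prop:finite-expansion}, using Taylor's formula in~$t$ along the segment $\tau\mapsto\tau t$. Each factor of~$t$ comes with $r^{-q}$. The constraints supply $r^{-2}$: two derivatives of the metric for $\mathcal H$, and for $\mathcal C$ one derivative of $h$, which already carries an extra $r^{-1}$. Each further derivative supplies $r^{-1}$. Positivity and uniform equivalence with $\ebf$ hold for $s\geq1$ once $\mu_q$ is small, with the threshold depending on $J$ and the profiles, because every correction term is bounded by $C\mu_q^js^{-jq}(1+\log s)^K$. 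Writing the bound in terms of $s=r/R$ and $\mu_q$ yields~\eqref{eq:coupled-error}. Finally, choosing $J+1>p_A/q$ makes $(J+1)q>p_A$, and this strict inequality absorbs the logarithms.

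The main obstacle I expect is bookkeeping rather than analysis. We must check that the nonlinear errors at each order really have the homogeneous form $r^{-jq-2}\times$(polynomial in $\ell$), including the fact that $h$ enters $\mathcal H$ quadratically and $g$ enters $\mathcal C$ through Christoffel symbols. We must also confirm the precise constants in the spherical divergence formula, so that resonances are correctly identified.
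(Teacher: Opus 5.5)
Your proof follows the paper's argument: degree-by-degree cancellation using the block-diagonal linearization at $(\ebf,0)$, the radial inverse of Lemma~\ref{lem:Hamiltonian-radial} for the metric, the triangular first-order radial system of Lemma~\ref{lem:momentum-radial} for the momentum, and Taylor's formula along $\tau\mapsto\tau t$ for the remainder. One correction to the constants you flagged: since the momentum prefactor is $r^{-jq-1}$, one has $r\del_r(r^{-jq-1}f)=r^{-jq-1}(\del_\ell-jq-1)f$, so the system is actually $(\del_\ell+2-jq)b_j=-F_j^{\mathrm{tan}}$ and $(\del_\ell+1-jq)a_j=-F_j^{\mathrm{rad}}-\divs b_j$, with momentum resonances at $jq=2$ and $jq=1$ rather than $3$ and $2$; your zero-constant integration at resonances handles these without change.
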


%-----------------------------------------------------------------------------------------------------------------------------------------

\subsection{Seed data sets with slow leading decay}\label{section:slow-hamiltonian-seeds}
For clarity, we specialize first to the time-symmetric case, where $\pi_0=0$ and all $a_j,b_j,h_t^{[J]}$ vanish.
The Schwarzschild construction in Section~\ref{section-2} required some care in constructing the leading order term as $q=1$ is a resonant exponent.  Here, $q\in(0,1)$ and the allowed leading term $r^{-q}\gamma_0$ is characterized by~\eqref{eq:slow-parametrization}, which expresses $a_0$ in terms of $b_0,C_0$ with a formula that respects the support of $b_0,C_0$.  As an example, a nonconstant function $\psi\in C_c^\infty(\Lambda)$ gives the divergence-free tangential one-form $b_0\coloneqq(\theta\times\nablaslash\psi)^\flat$, which can be completed with $C_0=0$ and $a_0=0$ to get the nonzero profile $\gamma_0\coloneqq\theta^\flat\otimes b_0+b_0\otimes\theta^\flat$.
We multiply the slow-decaying term~$r^{-q}\gamma_0$ by an amplitude parameter~$t$, whose dimensionless size at radius $R$ is $\mu_q\coloneqq|t|/R^q$.

Our aim is to construct polynomials $P_{j,q}(\ell,\theta)$ in $\ell=\log(r/R)$ with smooth angular coefficients supported in $\supp \gamma_0$ such that the specialization of~\eqref{eq:coupled-error} holds, namely
\be
\label{eq:slow-exterior-error}
|\del^k R_{g_{t,q}^{[J]}}| \leq C_{q,J,k}R^{-2-k}\mu_q^{J+1}s^{-(J+1)q-2-k}(1+\log s)^K,
\ee
where $s=r/R\geq1$ and $K$ is a finite nonnegative integer. The metric equals $\ebf$ outside the cone over $\supp \gamma_0$.

\begin{proof}[Proof of Proposition~\ref{prop:coupled-expansion}: time-symmetric case]
At order $j$ in~$t$, the scalar-curvature error has the form $t^j\,r^{-jq-2}E_{j,q}(\ell,\theta)$. We determine $P_{j,q}$ by solving
\be
\label{eq:slow-polynomial-recursion}
(\del_\ell+2-jq)(\del_\ell+1-jq)P_{j,q}=-E_{j,q},
\ee
using Lemma~\ref{lem:Hamiltonian-radial} with $q$ replaced by~$jq$.
For $jq\notin\{1,2\}$ the differential operator is invertible on spaces of polynomials of any given maximum degree. At either resonance, integration with zero free constant increases the degree by at most one. Thus every $q>0$ is allowed and angular support is preserved. We apply Taylor's formula to $\tau\mapsto R_{g_{\tau t,q}^{[J]}}$, $0\leq\tau\leq1$, as in the last paragraph of the proof of Proposition~\ref{prop:finite-expansion}. Since $s^{-jq}(1+\log s)^K$ is bounded, sufficiently small $|t|$ gives uniform positivity throughout this segment. The recursion cancels all amplitude degrees through $J$. The remainder therefore has at least $J+1$ factors of $t\,r^{-q}$, while scalar curvature and its $k$ derivatives contribute $r^{-2-k}$. The polynomial coefficients give the logarithmic factor in~\eqref{eq:slow-exterior-error}; substituting $|t|\,r^{-q}=\mu_qs^{-q}$ proves the estimate.
\end{proof}

%-----------------------------------------------------------------------------------------------------------------------------------------

\subsection{Radial inversion of the momentum constraint}\label{section-2-coupled}
We first solve the Euclidean momentum equation at a fixed radial order. The polynomial inverse below also covers the two resonant orders and supplies the momentum corrections in Proposition~\ref{prop:coupled-expansion}.

\begin{lemma}[Radial inversion of the momentum constraint]\label{lem:momentum-radial}
Fix $q\in(0,1)$ and a smooth symmetric contravariant tangential tensor $C_0$ compactly supported in~$\Lambda$. Define
\be
\label{eq:momentum-profile}
b_0\coloneqq-\frac{\divs C_0}{2-q},\qquad a_0\coloneqq\frac{\trs C_0-\divs b_0}{1-q},\qquad \pi_0\coloneqq a_0\theta\otimes\theta+\theta\otimes b_0+b_0\otimes\theta+C_0.
\ee
Then $\operatorname{div}_{\ebf}(r^{-q-1}\pi_0)=0$, and $\pi_0$ has the same support as~$C_0$. For every $q>0$ and polynomial vector source $F(\ell,\theta)=F_r\theta+F_{\mathrm{tan}}$, compactly supported in~$\Lambda$, there are scalar and tangential polynomials $a,b$ supported in the angular support of $F$ such that
\be
\label{eq:momentum-inhomogeneous}
\operatorname{div}_{\ebf}\bigl(r^{-q-1}(a\theta\otimes\theta+\theta\otimes b+b\otimes\theta)\bigr) =r^{-q-2}F.
\ee
\end{lemma}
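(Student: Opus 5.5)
The plan is to write the Euclidean divergence in the spherical decomposition used in Lemma~\ref{lem:double-divergence}, now for a single divergence of a symmetric contravariant tensor. This turns both claims into ordinary differential equations in $\ell$ at each angular point, triangular in the unknowns $(b,a)$. Write $\pi=a\,\theta\otimes\theta+\theta\otimes b+b\otimes\theta+C$, with $b$ tangential and $C$ tangential symmetric. As in Section~\ref{section-2-3}, the Cartesian components of the angular profiles are pulled back along the radial projection, hence homogeneous of degree zero.

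\emph{Step 1: spherical form of the divergence.} I would compute $\partial_i\pi^{ij}$ term by term using $\partial_i\theta_i=2/r$, $\theta_i\partial_i\theta_j=0$, $\partial_i\theta_j=(\delta_{ij}-\theta_i\theta_j)/r$ and the orthogonality $\theta_ib_i=\theta_iC_{ij}=0$. The terms are
\begin{itemize}
\item the scalar term $a\,\theta\otimes\theta$, which gives $\theta\,r^{-1}(\partial_\ell+2)a$;
\item the mixed terms, which give $r^{-1}\bigl((\partial_\ell+3)b+(\divs b)\theta\bigr)$;
\item the tangential term $C$, which gives $r^{-1}\bigl(\divs C-(\trs C)\theta\bigr)$.
\end{itemize}
The last identity comes from differentiating $\theta_jC_{ij}=0$, and the tangential divergence on $S^2$ is identified with the Cartesian one via pull-back. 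Altogether,
\[
r\operatorname{div}_{\ebf}\pi=\bigl[(\partial_\ell+2)a+\divs b-\trs C\bigr]\theta+(\partial_\ell+3)b+\divs C.
\]
This is the main computational step and the only place where care with the covariant versus Cartesian meaning of $\divs$ is needed. I would check it against the double-divergence formula~\eqref{eq:spherical-L}, which should follow by taking one more divergence.

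\emph{Step 2: the homogeneous profile.} Replace $\pi$ by $r^{-q-1}\pi_0(\theta)$; the factor $r^{-q-1}$ shifts $\partial_\ell\mapsto\partial_\ell-q-1$. Then $\operatorname{div}_{\ebf}(r^{-q-1}\pi_0)=r^{-q-2}\bigl[\bigl((1-q)a_0+\divs b_0-\trs C_0\bigr)\theta+(2-q)b_0+\divs C_0\bigr]$. Both brackets vanish precisely for the choices~\eqref{eq:momentum-profile}, and $q\in(0,1)$ makes the denominators nonzero. Since $b_0$ and $a_0$ are obtained from $C_0$ by differential operators on $S^2$, their supports lie in $\supp C_0$. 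I would phrase this as ``contained in'', since pointwise cancellations could shrink the support.

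\emph{Step 3: the inhomogeneous equation with $C=0$.} For a general source, equation~\eqref{eq:momentum-inhomogeneous} becomes the triangular system
\[
(\partial_\ell+2-q)b=F_{\mathrm{tan}},\qquad (\partial_\ell+1-q)a=F_r-\divs b.
\]
I solve the tangential equation first and then the scalar one, on polynomials in $\ell$ with smooth angular coefficients. As in the proof of Lemma~\ref{lem:Hamiltonian-radial}, $\partial_\ell+c$ is invertible on polynomials of bounded degree when $c\neq0$. At the resonances $q=2$ (for $b$) or $q=1$ (for $a$) it reduces to $\partial_\ell$, which is surjective, raising the degree by one, with the integration constant fixed to zero. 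Hence every $q>0$ works. The resulting $b$ and $a$ are obtained pointwise in $\theta$ from $F$ and its tangential derivatives, so they are supported in the angular support of $F$. I expect no real obstacle beyond verifying the Step~1 identity.
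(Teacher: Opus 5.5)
Your proposal is correct and follows essentially the same route as the paper. The paper applies the spherical divergence formula \eqref{eq:first-divergence}, which is your Step~1 identity, to obtain \eqref{eq:momentum-homogeneous}. It then solves the triangular system \eqref{eq:momentum-polynomial-inverse} by polynomial inversion of $\del_\ell+c$: explicitly $\sum_j(-1)^jc^{-j-1}\del_\ell^j$ for $c\neq0$, and integration with zero constant for $c=0$.

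Your hedge on supports is unnecessary, and as written you prove only $\supp\pi_0\subseteq\supp C_0$. The three pieces $a_0\theta\otimes\theta$, $\theta\otimes b_0+b_0\otimes\theta$ and $C_0$ lie in mutually orthogonal components, so no pointwise cancellation between them is possible: $\pi_0$ vanishes exactly where all three vanish. Since $a_0$ and $b_0$ are supported in $\supp C_0$, this gives the equality $\supp\pi_0=\supp C_0$ claimed in the statement.
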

For a tensor whose Cartesian components have the form $r^{-q-1}\pi_0(\theta)$, the radial and tangential divergence coefficients are
\be
\label{eq:momentum-homogeneous}
(1-q)a_0+\divs b_0-\trs C_0,\qquad (2-q)b_0+\divs C_0.
\ee
For the inhomogeneous problem, we set the tangential two-tensor to zero and solve successively
\be
\label{eq:momentum-polynomial-inverse}
(\del_\ell+2-q)b=F_{\mathrm{tan}},\qquad (\del_\ell+1-q)a=F_r-\divs b.
\ee
The proof of Lemma~\ref{lem:momentum-radial} is given in \hyperref[appendix:radial-momentum]{Appendix~\ref*{appendix:radial-selection}}.

%-----------------------------------------------------------------------------------------------------------------------------------------

\subsection{Construction and independent amplitudes}\label{section:coupled-construction}
We use the inverses of linearized scalar-curvature and momentum in Lemmas~\ref{lem:Hamiltonian-radial} and~\ref{lem:momentum-radial} to cancel successive terms of the nonlinear constraints.

\begin{proof}[Proof of Proposition~\ref{prop:coupled-expansion}]
We denote the Christoffel symbols of $g$ by $\Gamma^i{}_{jk}(g)$. In Cartesian coordinates the exact raised momentum equation is
\be
\label{eq:momentum-exact}
\mathcal C^j(g,h)=\del_i h^{ij}+\Gamma^i{}_{i\ell}(g)h^{\ell j}+\Gamma^j{}_{i\ell}(g)h^{i\ell}.
\ee
The Hamiltonian terms involving $h$ are quadratic in~$h$, and the last two terms in~\eqref{eq:momentum-exact} are at least quadratic in $(g-\ebf,h)$. Thus the coefficient at amplitude degree $t^j$, before the degree-$j$ correction $P_j,a_j,b_j$ is included, has the form $\bigl(\mathcal H,\mathcal C\bigr)_{[j]}=t^j\,r^{-jq-2}(E_j,F_j)$, where $E_j,F_j$ are polynomials in~$\ell$ with smooth angular coefficients. This statement follows inductively from~\eqref{eq:full-constraints} and~\eqref{eq:momentum-exact}: an $h$ factor has one additional inverse radial power, and each differentiated metric factor contributes one inverse radial power per derivative. Products therefore have this radial order.

We solve~\eqref{eq:radial-inverse} with radial exponent $jq$ and source $-E_j$, and~\eqref{eq:momentum-polynomial-inverse} with source $-F_j$, using the two invertibility lemmas to select corrections $P_j,a_j,b_j$ of order~$j$ in the amplitude parameter~$t$. The block diagonal linearization cancels both degree-$j$ errors; further nonlinear contributions start at degree $j+1$. The inverses cover every resonance and preserve support, since all error coefficients vanish where the data equal $(\ebf,0)$.

For fixed finite order, $s^{-jq}(1+\log s)^K$ is bounded, and small $\mu$ gives uniform metric invertibility. Taylor's formula for $g^{-1}$ and the constraints~\eqref{eq:full-constraints}, \eqref{eq:momentum-exact} then gives the bounds~\eqref{eq:coupled-error}.
\end{proof}

Observe that the leading metric and momentum amplitudes are independent and moreover the same arguments permit a superposition of a finite number of independent leading profiles with different radial exponents.
Finitely many positive exponents have only finitely many nonnegative integer combinations below each fixed exponent threshold.
For finitely many amplitudes $t_\ell$ and fixed profiles satisfying the linearized constraints at their respective exponents $q_\ell>0$, we set $q_{\min}\coloneqq\min_\ell q_\ell$. At multi-index $\beta$, the metric exponent is $\sum_\ell\beta_\ell q_\ell$ and the momentum exponent is one larger, leading to the expansion
\be
\label{eq:coupled-exterior-multi}
\begin{aligned}
g_t^{[J]}&\coloneqq\ebf+{\textstyle\sum_\ell} t_\ell\,r^{-q_\ell}\gamma_0^\ell+{\textstyle\sum_{2\leq|\beta|\leq J}} {\textstyle\prod_\ell} (t_\ell r^{-q_\ell})^{\beta_\ell} P_\beta(\ell,\theta)\Crad,\\
h_t^{[J]}&\coloneqq {\textstyle\sum_\ell}t_\ell\,r^{-q_\ell-1}\pi_0^\ell+{\textstyle\sum_{2\leq|\beta|\leq J}} {\textstyle\prod_\ell} (t_\ell r^{-q_\ell})^{\beta_\ell} r^{-1} (a_\beta\theta\otimes\theta+\theta\otimes b_\beta+b_\beta\otimes\theta) .
\end{aligned}
\ee
The polynomials $P_\beta,a_\beta,b_\beta$ are obtained by inverting linearized constraint with the non-linear sources as in the single-profile case. Finitely many multi-indices occur at each degree, so~\eqref{eq:coupled-error} holds with $q=q_{\min}$ and $\mu=\sum_\ell|t_\ell|/R^{q_\ell}$.

The same slow leading term can be inserted into the Schwarzschild--Minkowski gluing construction (and more generally the gluing of two Schwarzschild metrics). We start from $g_m^{[1]}+t\,r^{-q}\gamma_0$, with $\supp \gamma_0\Subset\Lambda$, and cancel scalar curvature degree by degree in the two amplitudes. The corrected exterior metric has the finite form
\be
\label{eq:slow-Schwarzschild}
g_{m,t}^{[J]}=g_m^{[1]}+t\,r^{-q}\gamma_0+ \sum_{a=0}^J\sum_{b=\max\{0,2-a\}}^{J-a} t^a m^b r^{-aq-b}P_{a,b}(\ell,\theta)\Crad.
\ee
At a fixed pair $(a,b)$, the equation for $P_{a,b}$ is~\eqref{eq:radial-inverse} with radial exponent $aq+b$; logarithms handle the resonant values $1$ and~$2$. Every coefficient is supported in the union of the fixed angular transition and $\supp \gamma_0$. The metric therefore keeps the prescribed exact Schwarzschild and Minkowski values outside~$\Lambda$.

%-----------------------------------------------------------------------------------------------------------------------------------------

%============================================================================

\section{Complete scalar-flat metrics joining two prescribed Schwarzschild regions}\label{section-3}

%-----------------------------------------------------------------------------------------------------------------------------------------

The seed data sets of Section~\ref{section-2} interpolate between the prescribed Schwarzschild metrics, but their scalar curvature need not vanish in the gluing domain. We now verify the weighted hypotheses of the seed-to-solution projection results in \cite{BLF--PLF} and obtain complete scalar-flat metrics that match both exterior regions exactly.
Section~\ref{section:slow-coupled-projection}, below, deals with the seed data sets of Section~\ref{section:full-constraints}.

\subsection{Localization data sets and admissible exponents}\label{section-3-1}
We specify the localization domain, weights, and exponents required by the projection theorem applicable to the Schwarzschild seed data sets. Their geometric conditions are fixed independently of the amplitudes of the seed data sets. Theorem~\ref{thm:fixed-radius}, below, gives the resulting exact metrics.

We work on $\RR^3$ with reference data $(g_0,h_0)=(\ebf,0)$, fix $R,\Lambda,\chi,\Omega,\zeta_R$ as in Section~\ref{section-2}, fix a smooth radius with $\rho\simeq(R^2+r^2)^{1/2}$, and $\rho=r$ for $r\geq4R$, as well as a localization function $\lambda>0$ in $\Omega$ that vanishes linearly at its boundary, is extended by zero outside~$\Omega$, and depends only on the angular variable for $r\geq4R$. If $\Omega\neq\RR^3$, we require $\lambda\simeq\min(1,\operatorname{dist}(x,\Omega^c)/\rho)$ and $|r^j\del^j\lambda|\lesssim 1$. Since the support of the constraint error is compactly supported inside~$\Omega$, with moreover an angular part compactly supported inside~$\Lambda$, the function~$\lambda$ is bounded below on that support.

We fix H\"older exponents $N\geq3$, $0<\alpha<1$, and localization exponents satisfying the dimension-three specialization of the admissibility conditions in~\cite{BLF--PLF}:
\be
\label{eq:localization-exponents}
\max(N+\alpha,1+P/2)<\underline P<P<\overline P, \qquad \overline P-P=P-\underline P\geq\tfrac72.
\ee
For example, $P=\max(11,N+5)$, $\underline P=P-4$ and $\overline P=P+4$ satisfy these inequalities.
The projection to an exact solution also depends on a projection exponent $p\in(1/2,1)$.

In the companion works~\cite{BLF--PLF,BLF--PLF-stability} we uncovered the notions of (Hamiltonian and momentum) \emph{harmonic, radial, and shell stability} for the localization function~$\lambda$ (together with the exponents~$P,p$).  These six stability conditions take the form of \emph{Poincar\'e--Korn-type inequalities} for scalar and vector fields on~$\Lambda$, weighted by~$\lambda^{2P}$.  For suitable $(\Lambda,\lambda)$, they are known to hold for $p$~sufficiently close to its upper bound, which is~$1$ in dimension three, as explicited in Section~\ref{section-4}.  We henceforth assume these stability conditions.  Appendix~\ref{appendix:local-projection} justifies the use of the seed-to-solution projection in our local weighted H\"older norms and its continuous dependence on the seed data sets and their parameters. These estimates are used in later sections to choose the asymptotic charges.

For the Schwarzschild family, we have $p_G=1$. We fix an accuracy exponent $p_A>1$ and an integer $J$ such that $J+1>p_A$. We denote by $p_{3,p}^{\lambda}>1$ the linear decay bound of~\cite{BLF--PLF}, which depends on $(\Lambda,\lambda)$ but not on the masses~$m_1,m_2$. We choose the sharp decay exponent so that
\be
\label{eq:sharp-range}
1<p_\star\leq p_A,\qquad p_\star<\min\bigl(p_{3,p}^{\lambda},2\bigr).
\ee
The range is nonempty since $p_A,p_{3,p}^{\lambda}>1$. Even for large $p_A$, the second upper bound on~$p_\star$ remains.

With the geometric set-up and exponents being fixed, we then require smallness of the seed data set and its constraint error, with constants depending on these choices.
For the Schwarzschild gluing of interest here, this simply amounts to smallness of the masses~$m_1,m_2$.

%-----------------------------------------------------------------------------------------------------------------------------------------

\subsection{Weighted bounds for the Hamiltonian error}\label{section-3-2}
We verify the pointwise and integral estimates required for the Hamiltonian projection. The strict radial margin below is necessary for the weighted source norm to be finite.

\subsubsection{Weighted norms and exterior integrability.}
We recall the scalar norms involved in the time-symmetric application of~\cite{BLF--PLF}. For radial and angular weights $b,a\in\RR$, respectively, the weighted Lebesgue norm is defined by
\be
\label{eq:weighted-L2}
\|f\|_{L^2_{b,a}(\Omega)}^2 \coloneqq\int_\Omega |f|^2\lambda^{-2a}\rho^{2b-3}\,dx.
\ee
For an integer $l\geq0$, we also use $\|f\|_{H^l_{b,a}}\coloneqq(\sum_{|\beta|\leq l}\|\del^\beta f\|_{L^2_{b+|\beta|,a}}^2)^{1/2}$, with the same angular weight for every derivative. In the H\"older norm $C^{N,\alpha}_{b,a}(\Omega)$ explicited in~\eqref{equa-norm-Hol}, the order-$k$ derivative is multiplied by $\lambda^{-a+k}\rho^{b+k}$, and the top local H\"older seminorm by $\lambda^{-a+N+\alpha}\rho^{b+N+\alpha}$; the latter is evaluated in balls of radius $c\lambda\rho$ around each point, for a small constant $c>0$.
The metric deviation is measured in H\"older norm $C^{N,\alpha}_{p_G}(\RR^3)$, with a radial weight but no angular weight.

The scalar-curvature error is measured in the sum of the localized H\"older and Lebesgue norms
\be
\label{eq:error-norm}
\Eappr{p_A}[g,0] \coloneqq\|R_g\|_{C^{N-2,\alpha}_{p_A+2,\underline P-2}(\Omega)} +\|R_g\|_{L^2_{p_A+2,P}(\Omega)}.
\ee
Because in our present applications the curvature error is supported away from compact and angular boundaries of~$\Omega$, the angular weight $\lambda$ is bounded below and can be omitted from the norms up to changing some smallness constants.

We bound separately the norms~\eqref{eq:error-norm} in the region $r\geq 2R$ and the annulus $R\leq r\leq 2R$, and there is no contribution for $r\leq R$ since $g=\ebf$ there.
We use $\mu=(|m_1|+|m_2|)/R$ from~\eqref{eq:two-mass-profile} and put $s=r/R$. For $r\geq 2R$, the bound~\eqref{eq:finite-error} reads
\be
\label{eq:dimensionless-error}
|\del^kR_{\seedg(m_1,m_2)}| \leq C R^{-2-k}\mu^{J+1}s^{-J-3-k}(1+\log s)^K, \qquad s\geq2.
\ee
The square of the exterior Lebesgue norm is consequently bounded by
\be
\label{eq:radial-integral}
C R^{2p_A}\mu^{2J+2} \int_2^\infty s^{2(p_A-J-1)-1}(1+\log s)^{2K}\,ds.
\ee
The integral converges for $J+1>p_A$. Under the same inequality, the pointwise and H\"older norms feature negative powers of~$r$ and are bounded by $CR^{p_A}\mu^{J+1}$.

On the compact cutoff annulus $r/R\in[1,2]$,~\eqref{eq:compact-error} and the positive lower bound for $\lambda$ give the remaining contribution. Combining the compact and exterior regions, for $\mu$ in a fixed small interval, we obtain
\be
\label{eq:seed-bounds}
\|\seedg(m_1,m_2)-\ebf\|_{C^{N,\alpha}_1}\leq C_G R\mu, \qquad \Eappr{p_A}[\seedg(m_1,m_2),0]\leq C_A R^{p_A}(\mu+\mu^{J+1}).
\ee
The constants $C_G,C_A>0$ are uniform for $\mu<\mu_0$, without sign restriction on the masses. Near $\partial\Omega$, the seed data set is smooth and exact, so its constraint error vanishes. Since $\seedh=0$, its momentum error vanishes.

%-----------------------------------------------------------------------------------------------------------------------------------------

\subsection{Application: Schwarzschild gluing and the Hamiltonian silhouette}\label{section-3-3}

\subsubsection{The silhouette and relative energy.}
With these fixed choices and local norms, we now apply the seed-to-solution projection and harmonic decay theorems of~\cite{BLF--PLF} to the seed data obtained in Section~\ref{section-2}.  The geometric set-up is depicted in Figure~\ref{fig:global-manifold}.

\begin{figure}
\centering
\includegraphics[width=\linewidth]{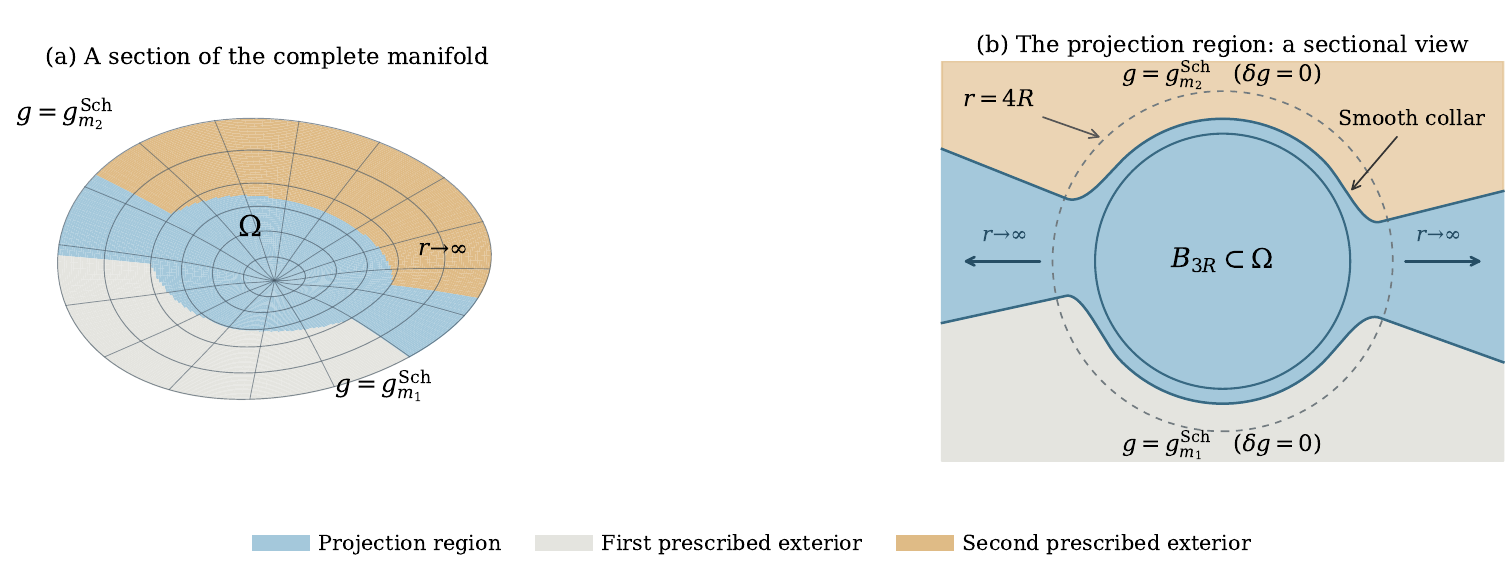}
\caption{The complete manifold~$\RR^3$ and the localization domain~$\Omega$ (blue) where the projection changes the seed data set.  The blue region contains the ball~$B_{3R}$ and is conical over the (not necessarily axisymmetric) angular region~$\Lambda$ outside~$B_{4R}$. The two prescribed Schwarzschild metrics are retained outside this region, namely $\delta g=0$ and $g=g_m^{\mathrm{Sch}}$ with $m=m_1,m_2$.  Right: the solid circle bounds $B_{3R}$, the dashed circle marks $r=4R$.}
\label{fig:global-manifold}
\end{figure}

A silhouette is an angular profile that determines a harmonic correction to the metric or momentum that lies in the kernel of the linearized constraints at infinity. We use the notation $\nu^{\normal}$ for the Hamiltonian silhouette and $\xi^{\normal(j)}$ for the momentum silhouette vector fields, with the normalizations of the companion papers. The Hamiltonian silhouette is uniquely characterized as follows, under the conditions of harmonic and radial stability of $(\Lambda,\lambda)$~\cite{BLF--PLF}: the associated tensor
\be
\label{eq:silhouette-tensor}
\mathcal S^\lambda \coloneqq\lambda^{2P}r^{3-2p} \bigl(\Hess-\ebf\Delta\bigr) \bigl(r^{-2+2p}\nu^{\normal}\bigr),\qquad r\geq4R,
\ee
has radial exponent~$r^{-1}$, unit relative ADM energy, and satisfies $L\mathcal S^\lambda=0$ on the exterior end. A fixed cutoff extends it smoothly into~$\Omega$, and it is extended by zero outside~$\Omega$; the boundary weight in~\eqref{eq:silhouette-tensor} ensures H\"older regularity.
The relative ADM energy is defined as
\be
\label{eq:relative-flux}
\mathcal F(\gamma)\coloneqq\frac1{16\pi}\lim_{r\to\infty} \int_{|x|=r}(\del_j \gamma_{ij}-\del_i \gamma_{jj})\theta_i\,dS, \qquad \mathcal F(\mathcal S^\lambda)=1.
\ee
In the following result, the remainder of order $r^{-p_\star}$ has zero relative ADM energy $\mathcal F(\eta_{g;m_1,m_2})=0$ because $p_\star>1$. Thus the coefficient of $\mathcal S^\lambda$ equals the relative energy between the solution and its seed data set.

\subsubsection{Complete scalar-flat metrics with prescribed sectors.}

\begin{theorem}[Complete scalar-flat gluing of two Schwarzschild regions with small masses of either sign]\label{thm:fixed-radius}
Fix the exponents $(N,\alpha,P,\underline P,\overline P,p)$, localization geometry $(R,\Lambda,\Omega,\lambda)$ satisfying the Hamiltonian stability conditions, and cutoff functions~$(\chi,\zeta_R)$ as in Sections~\ref{section-2} and~\ref{section-3-1}, as well as a geometry exponent $p_G=1$, an accuracy exponent $p_A>1$, an integer~$J$ with $J+1>p_A$, and $p_\star$ satisfying~\eqref{eq:sharp-range}. There then exists $\mu_*\in(0,1)$ such that, for every $(m_1,m_2)\in\RR^2$ with
\be
\label{eq:two-mass-smallness}
|m_1|+|m_2|<R\mu_*,
\ee
the seed data set~\eqref{eq:global-seed} satisfies the hypotheses of the seed-to-solution projection in the local weighted norms. Its projection $g_{m_1,m_2}$ is a complete scalar-flat metric on $\RR^3$ and agrees with the seed metric outside~$\Omega$. For $r\geq4R$ outside $\Lambda$, it is exactly $g_{m_1}^{\mathrm{Sch}}$ where $\chi=0$ and $g_{m_2}^{\mathrm{Sch}}$ where $\chi=1$. It can be written as
\be
\label{eq:projected-expansion}
g_{m_1,m_2}=\seedg(m_1,m_2)+\mathfrak m(m_1,m_2)\mathcal S^\lambda+\eta_{g;m_1,m_2},\quad \|\eta_{g;m_1,m_2}\|_{C^{N,\alpha}_{p_\star,\underline P}(\Omega)}\leq C\Eappr{p_\star}^+[\seedg(m_1,m_2),0]
\ee
for some $\mathfrak m(m_1,m_2)\in\RR$ and constant $C>0$ that is uniform in the masses satisfying~\eqref{eq:two-mass-smallness} regardless of their sign.
\end{theorem}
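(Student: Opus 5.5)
The plan is to verify the hypotheses of the seed-to-solution projection and harmonic decay theorems of~\cite{BLF--PLF}, in the local weighted H\"older form of Appendix~\ref{appendix:local-projection}, for the global seed~\eqref{eq:global-seed}. All geometric choices $(N,\alpha,P,\underline P,\overline P,p)$, $(R,\Lambda,\Omega,\lambda)$, $(\chi,\zeta_R)$, $p_A$, $J$ and $p_\star$ are fixed first. Only afterwards is the single smallness parameter $\mu=(|m_1|+|m_2|)/R$ chosen. Proposition~\ref{prop:finite-expansion} and Section~\ref{section-2-4} then show that $\seedg(m_1,m_2)$ is smooth and positive definite, equals $\ebf$ on $B_R$, and equals $g^{[J]}_{m_1,m_2}$ for $r\geq2R$. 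In particular it is exactly Schwarzschild with mass $m_1$ or $m_2$ outside the cone over $\supp d\chi$, hence exactly vacuum on $\RR^3\setminus\Omega$.

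The quantitative input is~\eqref{eq:seed-bounds}. It gives $\|\seedg-\ebf\|_{C^{N,\alpha}_1}\leq C_GR\mu$ and $\Eappr{p_A}[\seedg,0]\leq C_AR^{p_A}(\mu+\mu^{J+1})$, with constants independent of the signs of the masses. Since $p_\star\leq p_A$ and the error is supported in $r\geq R$, the same bound controls $\Eappr{p_\star}^+$, up to a fixed factor $R^{p_A-p_\star}$. The error is supported where $\lambda$ is bounded below, so the angular weights are harmless. We then choose $\mu_*$ so small that these norms fall below the smallness thresholds of the projection theorem. Given the assumed Hamiltonian stability conditions and $p\in(1/2,1)$, that theorem produces $u$ such that $g=\seedg+\lambda^Pr^{3/2-p}d\mathcal H^*[u]$ has $R_g=0$, with $\seedh=0$ and $h=0$ consistent because the momentum error vanishes.

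The correction is supported in $\overline\Omega$ because $\lambda$ vanishes outside $\Omega$. Hence $g$ agrees with the seed off $\Omega$, which gives the exact Schwarzschild sectors for $r\geq4R$ outside $\Lambda$. Completeness follows from the uniform equivalence $g\simeq\ebf$: the $C^{N,\alpha}_1$ size of $g-\ebf$ is $\lesssim R\mu$, plus the projection correction, which is small. Positivity holds for the same reason.

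The decomposition~\eqref{eq:projected-expansion} comes from the harmonic decay theorem of~\cite{BLF--PLF}. Under~\eqref{eq:sharp-range}, namely $p_\star<\min(p^\lambda_{3,p},2)$ and $p_\star\leq p_A$, the correction $g-\seedg$ splits into a multiple of the kernel element at harmonic order, which is the silhouette tensor $\mathcal S^\lambda$ of~\eqref{eq:silhouette-tensor}, plus a remainder $\eta$ in $C^{N,\alpha}_{p_\star,\underline P}(\Omega)$ bounded linearly by $\Eappr{p_\star}^+$. The coefficient $\mathfrak m$ is identified with $\mathcal F(g-\seedg)$ using $\mathcal F(\mathcal S^\lambda)=1$ and $\mathcal F(\eta)=0$, the latter because $p_\star>1$.

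The main obstacle I expect is the transfer between norms. The companion theorems are stated in weighted Sobolev or global norms, while here we need local H\"older norms $C^{N,\alpha}_{b,a}$ evaluated in balls of radius $c\lambda\rho$. We also need the constant $C$ to be uniform in the masses. I would handle this by invoking Appendix~\ref{appendix:local-projection} and checking that the error bounds~\eqref{eq:dimensionless-error} and~\eqref{eq:compact-error} feed its hypotheses with a strict radial margin. That margin is where $J+1>p_A$ is used, since it makes the integral~\eqref{eq:radial-integral} finite.
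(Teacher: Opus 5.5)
Your proposal is correct and follows the paper's proof essentially step by step. You fix all geometric choices and exponents first, use the sign-independent bounds \eqref{eq:seed-bounds} to choose $\mu_*$ below the smallness thresholds of the projection and super-harmonic decay theorems of~\cite{BLF--PLF} (in the local norms of Appendix~\ref{appendix:local-projection}), note that time symmetry forces $Z=0$ and hence $h=0$, and read off the exterior Schwarzschild sectors from localization and \eqref{eq:projected-expansion} from harmonic decay. The only slip is cosmetic: comparing the weights $\rho^{p_\star+2}$ and $\rho^{p_A+2}$ with $\rho\gtrsim R$ gives $\Eappr{p_\star}\lesssim R^{p_\star-p_A}\Eappr{p_A}$, so the factor is $R^{p_\star-p_A}$ rather than $R^{p_A-p_\star}$, which is harmless since $R$ is fixed.
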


The notation $g_m$, $\mathfrak m(m)$, and $\eta_{g;m}$ denotes the restriction $(m_1,m_2)=(0,m)$.  For $m_1=m_2=0$ the seed data set and its projection are Euclidean throughout~$\RR^3$.  For a more general mass $m_1=m_2\neq 0$, exterior corrections vanish but the Schwarzschild--Euclidean cutoff still has a compact error. The weighted estimate~\eqref{eq:seed-bounds} contains the term $R^{p_A}\mu=(|m_1|+|m_2|)R^{p_A-1}$ from the compact cutoff error. Thus increasing $R$ at fixed masses need not improve this bound. For each fixed $R$ and $p_A$, taking the masses sufficiently small makes all required norms small; the permitted threshold for $\mu$ depends on these choices. Changing the domain~$\Lambda$, localization function~$\lambda$, or exponents also requires a new threshold.

The mass parameters $m_1,m_2$ may be negative, in which case the ADM energy of $\seedg(m_1,m_2)$ is negative and must be compensated by a sufficiently large positive relative ADM energy $\mathfrak m(m_1,m_2)$.  This is necessary for the ADM energy of the projected metric~$g_{m_1,m_2}$ to be positive, in accordance with the positive mass theorem.  We discuss the ADM energy further in Section~\ref{section-5-ADM}, below.

\emph{In conclusion, for every fixed admissible localization data set, the same threshold gives exact gluing for all sufficiently small masses $m_1,m_2\in\RR$.}

\begin{proof}
It suffices to verify that the complete Euclidean reference and the fixed localization geometry satisfy the conditions of~\cite[Theorem~2.11]{BLF--PLF} for the existence of the seed-to-solution projection and of~\cite[Theorem~4.4]{BLF--PLF} for the improved super-harmonic radial decay rate $p_\star>1$ stated in~\eqref{eq:projected-expansion}. Proposition~\ref{prop:finite-expansion} gives uniform exterior positivity and error bounds for $\mu<\mu_0$. The convex interpolation~\eqref{eq:global-seed} remains positive, smooth, and vacuum away from~$\Omega$ for either mass sign. Moreover, $\mu<1$ gives $1+M/(2r)\geq1/2$ on $r\geq R$; negative-mass Schwarzschild singularities lie inside the replaced region.
The equation~\eqref{eq:seed-bounds} verifies the source and near-reference hypotheses uniformly in $m_1,m_2$. With $\eps_G,\eps_A>0$ sufficiently small (as a function of exponents and localization geometry) as selected by \cite[Theorem~4.4]{BLF--PLF}, we choose $\mu_*>0$ such that
\be
\label{eq:last-smallness}
C_G R\mu_*\leq\eps_G,\qquad \mu_*\leq\mu_0,\qquad \bigl(2C_A R^{p_A}+C_+(R^{p_\star}+R)\bigr)\mu_*\leq\eps_A.
\ee
This ensures applicability of the results in~\cite{BLF--PLF}.
Since the reference data and seed data sets are time-symmetric ($h_0=\seedh=0$), the momentum constraint is exactly obeyed by the seed data set, so that the seed-to-solution projection~\eqref{eq:seed-to-solution} has a vanishing vector field~$Z$ and only involves a single scalar field~$u$, hence $h=\seedh=0$.
Harmonic decay gives~\eqref{eq:projected-expansion}, and localization preserves the exterior metrics.

In the case $m_1=m_2=0$, the seed data is trivially $(\seedg,\seedh)=(\ebf,0)$ and one has $u=0$ and $(g,h)=(\seedg,\seedh)$.
\end{proof}

%-----------------------------------------------------------------------------------------------------------------------------------------

%============================================================================

\section{Vacuum data with slow leading decay and super-harmonic remainders}\label{section:slow-coupled-projection}
We now apply the projection to the more general seed data sets of Section~\ref{section:full-constraints}. We first construct scalar flat metrics with slow leading metric decay (time-symmetric case $h=\seedh=0$), then solve both constraints with nonzero second fundamental form and describe the Hamiltonian and momentum silhouettes.

%-----------------------------------------------------------------------------------------------------------------------------------------

\subsection{Slow leading decay and super-harmonic remainders}\label{section-3-4}
We prescribe a nonzero slow leading term with radial exponent $q\in(0,1)$. The accuracy exponent $p_A$ can be chosen to be arbitrarily large by retaining sufficiently many asymptotic corrections.
However, the sharp decay exponent~$p_\star$ (with which the exact metric and seed metric are compared) is limited in~\eqref{eq:slow-sharp-range} by further effects such as non-linear terms involving the leading $r^{-q}$ decay and the harmonic corrections~$\mathcal S^\lambda$.
The threshold~$t_*$ degenerates as $p_G\to0^+$ or $p_A\to\infty$.

\begin{proposition}[Slow leading term and high-accuracy seed data sets]\label{thm:slow-leading}
Fix the exponents $(N,\alpha,P,\underline P,\overline P,p)$, localization geometry $(R,\Lambda,\Omega,\lambda)$ satisfying the stability conditions, and radial cutoff function~$\zeta_R$ as in Sections~\ref{section-2} and~\ref{section-3-1}, as well as an exponent $q\in(0,1)$, an accuracy exponent $p_A>1$, and an integer~$J$ with $(J+1)q>p_A$, and
\be
\label{eq:slow-sharp-range}
1<p_\star\leq p_A,\qquad p_\star<\min\bigl(p_{3,p}^{\lambda},1+q\bigr).
\ee
Also fix a nonzero smooth angular tensor~$\gamma_0$, compactly supported in $\Lambda$, with $L(r^{-q}\gamma_0)=0$.
The choice of gluing domain $\Omega$ in Section~\ref{section-2-4} ensures that the cone over $\supp \gamma_0$ is contained in~$\Omega$. With the norms of Section~\ref{section-3-2}, there exists $t_*>0$, depending on these fixed choices, such that every $|t|<t_*$ defines a global seed data set
\be
\label{eq:slow-global-seed}
\seedg^{(q)}(t)\coloneqq\ebf+\zeta_R\bigl(g_{t,q}^{[J]}-\ebf\bigr),\qquad \seedh^{(q)}(t)\coloneqq0,
\ee
satisfying all the seed data set hypotheses of~\cite{BLF--PLF} with $p_G=q$ and the prescribed~$p_A$. Its seed-to-solution projection $(g_q(t),0)$ is scalar flat, equals $(\ebf,0)$ on $\RR^3\setminus\Omega$, and satisfies
\be
\label{eq:slow-projected}
g_q(t)=\seedg^{(q)}(t)+\mathfrak m_q(t)\mathcal S^\lambda+\eta_{g;q}(t), \qquad \eta_{g;q}(t)\in C^{N,\alpha}_{p_\star,\underline P}(\Omega).
\ee
In the fixed asymptotic Cartesian chart, with angular convergence in $C^2(S^2)$, one has
\be
\label{eq:slow-leading-limit}
\lim_{r\to\infty}r^q\bigl(g_q(t)(r\theta)-\ebf\bigr)=t \gamma_0(\theta).
\ee
Consequently, for $t\neq0$, the metric deviation has leading exponent exactly $q$: it does not belong to $C^0_{q+\eta}$ for any $\eta>0$.
\end{proposition}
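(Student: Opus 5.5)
The plan follows the proof of Theorem~\ref{thm:fixed-radius}, with the exterior bounds of Proposition~\ref{prop:finite-expansion} replaced by those of Proposition~\ref{prop:coupled-expansion} (time-symmetric case), followed by a separate asymptotic argument for~\eqref{eq:slow-leading-limit}.

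\emph{Seed bounds.} First I would record the analogue of~\eqref{eq:seed-bounds} with $\mu$ replaced by $\mu_q=|t|/R^q$. The deviation $\seedg^{(q)}(t)-\ebf$ is a sum of terms $t^j r^{-jq}P_j(\ell,\theta)\Crad$ together with $t r^{-q}\gamma_0$, cut off by $\zeta_R$. Each scaled derivative is bounded by $C\mu_q s^{-q}$, since the logarithms are absorbed once $s^{-(j-1)q}(1+\log s)^K$ is bounded. This gives $\|\seedg^{(q)}(t)-\ebf\|_{C^{N,\alpha}_q}\leq C_G R^q\mu_q$.

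\emph{Error bounds.} For $s\geq2$ I would use~\eqref{eq:slow-exterior-error}. The weighted $L^2_{p_A+2,P}$ integral then becomes $\int_2^\infty s^{2(p_A-(J+1)q)-1}(1+\log s)^{2K}ds$, which converges because $(J+1)q>p_A$. The H\"older norm is likewise finite, with a bound $C R^{p_A}\mu_q^{J+1}$. On the annulus $R\leq r\leq 2R$ the cutoff error is $O(R^{-2}\mu_q)$, as in~\eqref{eq:compact-error}. Since all errors lie in the cone over $\supp\gamma_0\Subset\Lambda$ or in the annulus, $\lambda$ is bounded below there and the angular weights are harmless. This yields $\Eappr{p_A}\leq C_A R^{p_A}(\mu_q+\mu_q^{J+1})$. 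Choosing $t_*$ exactly as in~\eqref{eq:last-smallness} gives the hypotheses of \cite[Theorem~2.11]{BLF--PLF} with $p_G=q$. Because $\seedh=0$, the vector field $Z$ vanishes and $h=0$. Localization gives $g_q(t)=\ebf$ outside $\Omega$, since the seed is Euclidean outside the cone over $\supp\gamma_0$ and the ball.

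\emph{Harmonic decay: the main obstacle.} The expansion~\eqref{eq:slow-projected} would come from \cite[Theorem~4.4]{BLF--PLF}. The delicate point is that the super-harmonic theorem must allow seeds with $p_G=q<1$. The quadratic terms in the projection equation then couple the seed deviation $O(r^{-q})$ to the harmonic correction $O(r^{-1})$, producing sources of order $r^{-1-q-2}$. This caps the remainder exponent at $1+q$, which is the origin of the restriction~\eqref{eq:slow-sharp-range}. I would try to verify, from the nonlinear fixed-point structure in Appendix~\ref{appendix:local-projection}, that every nonlinear source involving $u$ has weight at least $r^{-(1+q)-2}$ in the norms used. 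Terms quadratic in the seed alone are already part of the constraint error, so they are controlled by $p_A$ and need no further argument. With $p_\star<\min(p^\lambda_{3,p},1+q)$ and $p_\star\leq p_A$, the linear decay theorem should then produce $\eta_{g;q}(t)\in C^{N,\alpha}_{p_\star,\underline P}$.

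\emph{Leading limit.} From~\eqref{eq:slow-projected}, $r^q(g_q(t)-\ebf)$ equals $t\gamma_0$ plus three contributions: $r^q$ times the higher seed terms, which are $O(r^{-q}(1+\log r)^K)$; $r^q\mathfrak m_q\mathcal S^\lambda=O(r^{q-1})$; and $r^q\eta=O(r^{q-p_\star})$. All three tend to zero together with two scaled derivatives. The weighted H\"older control of two derivatives gives the $C^2(S^2)$ angular convergence after multiplying by $r^j$ for tangential derivatives. Finally, if the deviation lay in $C^0_{q+\eta}$ for some $\eta>0$, the limit would vanish, contradicting $t\gamma_0\neq0$.
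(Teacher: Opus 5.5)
Your proposal is correct and follows the paper's proof essentially step by step. Both arguments run in the same order: the seed and error bounds~\eqref{eq:slow-seed-bounds} via the radial integral with $(J+1)q>p_A$, then the choice $t_*=R^q\mu_{q,*}$ from the smallness thresholds of \cite[Theorem~4.4]{BLF--PLF}, then zero momentum as in Theorem~\ref{thm:fixed-radius}, and finally the leading limit from the decay rates $r^{-jq}$, $r^{-1}$, $r^{-p_\star}$ of the three corrections.

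The step you flag as the main obstacle, including the nonlinear fixed-point check you say you would try to carry out, is not done separately in the paper. It simply cites the super-harmonic theorem \cite[Theorem~4.4]{BLF--PLF}, which allows any $p_G>0$ (cf.\ Lemma~\ref{lem:local-projection}) under the restriction~\eqref{eq:slow-sharp-range}. So your proof is complete once you invoke that result directly.
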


\begin{proof}\bse
The smooth extension through the interior has its constraint error supported in $\Omega$. On the cutoff annulus it satisfies $|\del^kR_{\seedg^{(q)}(t)}|\leq C R^{-2-k}\mu_q$. On the exterior, Proposition~\ref{prop:coupled-expansion} specializes to~\eqref{eq:slow-exterior-error} in the time-symmetric case. The squared $L^2_{p_A+2,P}$ norm restricted to the exterior region $r\geq 2R$ is bounded by
\be
\label{eq:slow-radial-integral}
C R^{2p_A}\mu_q^{2J+2} \int_2^\infty s^{2(p_A-(J+1)q)-1}(1+\log s)^{2K}\,ds.
\ee
This integral is finite exactly when $p_A<(J+1)q$. The weighted local H\"older estimates follow from the differentiated bounds as in Section~\ref{section-3-2}. We obtain
\be
\label{eq:slow-seed-bounds}
\|\seedg^{(q)}(t)-\ebf\|_{C^{N,\alpha}_q}\leq C_G R^q\mu_q, \qquad \Eappr{p_A}[\seedg^{(q)}(t),0]\leq C_A R^{p_A}(\mu_q+\mu_q^{J+1}).
\ee
All constants are independent of $t$. We choose $\mu_{q,*}<1$ so that positivity holds and~\eqref{eq:slow-seed-bounds} meet the thresholds $\eps_G,\eps_A$ in \cite[Theorem~4.4]{BLF--PLF}, then set $t_*\coloneqq R^q\mu_{q,*}$. The super-harmonic decay proven there gives~\eqref{eq:slow-projected}, with zero momentum as in Theorem~\ref{thm:fixed-radius}.

Every nonlinear term in the seed data set has exponent $jq>q$, so $r^q$ times that term, together with its first two angular derivatives, tends to zero despite the finite logarithmic factors. The projected harmonic term satisfies $r^q\mathcal S^\lambda=\Obig(r^{q-1})\to0$, and the remainder satisfies $r^q\eta_{g;q}(t)=\Obig(r^{q-p_\star})\to0$. This proves~\eqref{eq:slow-leading-limit}. A nonzero value of $\gamma_0$ in one angular direction rules out each stronger bound $\Obig(r^{-q-\eta})$ when $t\neq0$.
\ese\end{proof}

Observe that the conclusions apply equally well to the seed metrics~\eqref{eq:slow-Schwarzschild} obtained by adding slow-decaying terms of amplitude~$t$ to Schwarzschild--Minkowski interpolations of mass~$m$.
With $\mu\coloneqq|t|/R^q+|m|/R$, each remainder monomial of degree at least $J+1$ has radial exponent at least $(J+1)q$, since $aq+b\geq(a+b)q$. The exterior and norm estimates therefore hold with $\mu_q$ replaced by~$\mu$. Extension through the interior using the cutoff~$\zeta_R$ gives a suitable global seed data set provided $t,m$ are chosen small after the geometry, profiles, exponents, and order.
All conclusions of Proposition~\ref{thm:slow-leading} go through, with in particular a harmonic correction $\mathcal S^\lambda$ of the same $r^{-1}$~decay order as the leading Schwarzschild--Minkowski interpolation, and \eqref{eq:slow-leading-limit}~is only sensitive to the slow-decaying terms of amplitude~$t$ and not the mass parameter~$m$.

%-----------------------------------------------------------------------------------------------------------------------------------------

\subsection{Full Einstein constraints and momentum silhouettes}\label{section-3-coupled}
We apply the projection to seed data sets with nonzero second fundamental form. The expansion contains the Hamiltonian and momentum silhouettes, whose coefficients are the relative ADM energy and momentum.

For the full constraints, we write $\mathcal G\coloneqq(\mathcal H,\mathcal C)$ and extend~\eqref{eq:error-norm} by
\be
\label{eq:full-error-norm}
\Eappr{b}[\seedg,\seedh]\coloneqq\|\mathcal H(\seedg,\seedh)\|_{C^{N-2,\alpha}_{b+2,\underline P-2}}+\|\mathcal C(\seedg,\seedh)\|_{C^{N-1,\alpha}_{b+2,\underline P-1}}+\|\mathcal H(\seedg,\seedh)\|_{L^2_{b+2,P}}+\|\mathcal C(\seedg,\seedh)\|_{L^2_{b+2,P}}.
\ee
All localized norms are over~$\Omega$. For $b=1$ an additional integral term would be needed, but it is bounded by $\Eappr{b}[\seedg,\seedh]$ for the case $b>1$ of interest to us here. Multiplication by metrics or inverse metrics preserves these spaces uniformly in our small families, by the local product rule.

We use the momentum silhouettes with the normalization of~\cite{BLF--PLF}. If $\xi^{\normal(j)}$ is the normalized vector kernel element, the associated momentum tensor is
\be
\label{eq:momentum-silhouette}
(\mathcal T_j^\lambda)^{ik}\coloneqq-\tfrac12\lambda^{2P}r^{1-2p} \bigl(\del_i(r^{-2+2p}\xi_k^{\normal(j)})+\del_k(r^{-2+2p}\xi_i^{\normal(j)})\bigr).
\ee
Its Cartesian components are $r^{-2}$ times angular functions, and we choose a fixed compact extension as for~$\mathcal S^\lambda$. Thus $\nu^{\normal}$ and $\xi^{\normal(j)}$ are angular profiles, whereas $\mathcal S^\lambda$ and $\mathcal T_j^\lambda$ are the resulting metric and momentum tensors. On the exterior end, the harmonic contributions denoted by $g^\infty,h^\infty$ in~\cite{BLF--PLF} are $\mathfrak m\mathcal S^\lambda$ and $\sum_j\mathfrak J_j\mathcal T_j^\lambda$, respectively.
The same construction and conclusions hold, starting from the seed data sets \eqref{eq:coupled-exterior-multi} or~\eqref{eq:slow-Schwarzschild} involving finitely many amplitudes or perturbations of Schwarzschild--Minkowski data, with smallness of amplitudes and masses imposed last.

\begin{theorem}[Localized vacuum data with slow leading decay and super-harmonic remainders]\label{thm:coupled-projection}
Fix the exponents $(N,\alpha,P,\underline P,\overline P,p)$, localization geometry $(R,\Lambda,\Omega,\lambda)$ satisfying the stability conditions, and radial cutoff function~$\zeta_R$ as in Sections~\ref{section-2} and~\ref{section-3-1}, as well as an integer~$J$ and exponents $q\in(0,1)$, $p_A>1$, and $p_\star$~satisfying $(J+1)q>p_A$ and~\eqref{eq:slow-sharp-range}.
Also fix angular profiles $\gamma_0,\pi_0$ as in Proposition~\ref{prop:coupled-expansion}. Then, for all sufficiently small $|t|$, the global seed data set
\be
\label{eq:coupled-global}
\seedg(t)\coloneqq\ebf+\zeta_R(g_t^{[J]}-\ebf),\qquad \seedh(t)\coloneqq\zeta_R h_t^{[J]}
\ee
has accuracy $p_A$ for both constraints. Its seed-to-solution projection $(g(t),h(t))$ is a complete vacuum initial data set, equals $(\ebf,0)$ outside~$\Omega$, and has the asymptotic expansion
\be
\label{eq:coupled-projected}
\begin{aligned}
g(t)&=\seedg(t)+\mathfrak m(t)\mathcal S^\lambda+\eta_g(t),
&\eta_g(t)&\in C^{N,\alpha}_{p_\star,\underline P},\\
h(t)&=\seedh(t)+\sum_{j=1}^3\mathfrak J_j(t)\mathcal T_j^\lambda+\eta_h(t),
&\eta_h(t)&\in C^{N,\alpha}_{p_\star+1,\underline P}.
\end{aligned}
\ee
The tensors $\mathcal T_j^\lambda$ are $r^{-2}$ times angular functions outside a ball and have unit relative ADM momentum in direction~$j$. With $C^2(S^2)$ convergence, the leading limits are
\be
\label{eq:coupled-leading-limits}
r^q(g(t)-\ebf)\longrightarrow t \gamma_0,\qquad r^{q+1}h(t)\longrightarrow t\pi_0.
\ee
For $t\neq0$, $\pi_0\neq0$, the second fundamental form has nonzero leading coefficient $t(\pi_0-\tfrac12\Tr_{\ebf}\pi_0\,\ebf)$ at order~$r^{-q-1}$.
\end{theorem}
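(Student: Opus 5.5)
The plan parallels the proof of Proposition~\ref{thm:slow-leading}, now with the coupled seeds of Proposition~\ref{prop:coupled-expansion}. First I would check the seed hypotheses for~\eqref{eq:coupled-global}. Outside $r\geq2R$, the seed coincides with $(g_t^{[J]},h_t^{[J]})$, and~\eqref{eq:coupled-error} bounds both constraint errors by $CR^{-2-k}\mu_q^{J+1}s^{-(J+1)q-2-k}(1+\log s)^K$. Squaring against the weight $\rho^{2(p_A+2)-3}$ gives an integral $\int_2^\infty s^{2(p_A-(J+1)q)-1}(1+\log s)^{2K}\,ds$, which is finite since $(J+1)q>p_A$. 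The local H\"older parts of~\eqref{eq:full-error-norm} follow from the differentiated bounds, with $\lambda$ bounded below on the support. On the annulus $R\leq r\leq2R$, derivatives of $\zeta_R$ produce errors of size $R^{-2}\mu_q$ in $\mathcal H$. In $\mathcal C$ they are of size $R^{-q-2}|t|$ up to normalization, since $\del_i\zeta_R\, h^{ij}$ is linear in $t$. Both are compactly supported inside~$\Omega$. Together with $\|\seedg-\ebf\|_{C^{N,\alpha}_q}+\|\seedh\|_{C^{N,\alpha}_{q+1}}\leq C R^q\mu_q$, this gives the analogue of~\eqref{eq:slow-seed-bounds} for $\Eappr{p_A}[\seedg(t),\seedh(t)]$. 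All constants are uniform in~$t$.

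Next I would choose $|t|$ small so that these norms fall below the thresholds $\eps_G,\eps_A$ of \cite[Theorem~2.11 and Theorem~4.4]{BLF--PLF}, now including the momentum stability conditions. This yields the projection~\eqref{eq:seed-to-solution} with both $u$ and $Z$ nonzero. The factor $\lambda^P$ vanishes outside $\Omega$, and the seed is $(\ebf,0)$ there, because $\supp\gamma_0\cup\supp\pi_0\Subset\Lambda$ and $\Omega$ contains the cone. Hence $(g,h)=(\ebf,0)$ outside~$\Omega$. Completeness follows from uniform equivalence of $g$ with~$\ebf$.

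The harmonic decay theorem \cite[Theorem~4.4]{BLF--PLF} then gives the expansion~\eqref{eq:coupled-projected}. Here the harmonic parts $g^\infty,h^\infty$ are identified with $\mathfrak m\mathcal S^\lambda$ and $\sum_j\mathfrak J_j\mathcal T_j^\lambda$ by uniqueness of the silhouettes. Since $\mathcal F(\mathcal S^\lambda)=1$, the remainder at order $r^{-p_\star}$ with $p_\star>1$ has zero flux. The analogous momentum flux then shows that each $\mathcal T_j^\lambda$ carries unit relative momentum.

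The main obstacle is the constraint $p_\star<1+q$ in~\eqref{eq:slow-sharp-range}. The projection equation linearized at $(\ebf,0)$ has nonlinear source terms mixing the leading $r^{-q}$ seed terms with the $r^{-1}$ harmonic correction. These terms decay like $r^{-1-q-2}$ in $\mathcal H$ and one order faster in~$\mathcal C$. I would verify that these are exactly what caps the remainder at $r^{-p_\star}$ and $r^{-p_\star-1}$, and check that the hypotheses of the cited theorem permit such effective sources. I would try this first by treating the quadratic terms as part of the constraint error of the seed $+$ harmonic part.

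Finally, I would prove the limits~\eqref{eq:coupled-leading-limits}. The seed corrections decay like $r^{-jq}$ for $j\geq2$ up to logarithms, and $\mathcal S^\lambda=\Obig(r^{-1})$, $\mathcal T^\lambda_j=\Obig(r^{-2})$, $\eta_g=\Obig(r^{-p_\star})$, $\eta_h=\Obig(r^{-p_\star-1})$. The $C^2$ angular control comes from the weighted H\"older norms with $N\geq3$. The second fundamental form statement then follows from~\eqref{eq:full-conventions} at leading order, since $g=\ebf+\Obig(r^{-q})$.
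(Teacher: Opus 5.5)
Your proposal is correct and follows the paper's proof essentially step by step: exterior bounds from \eqref{eq:coupled-error}, annular cutoff errors of size $R^{-2-k}\mu$ in both constraints giving \eqref{eq:coupled-seed-norms}, then the super-harmonic theorem of \cite{BLF--PLF}, then completeness from $c\ebf\leq g(t)\leq C\ebf$, and finally the leading limits from the faster decay of all other terms. The paper invokes that theorem under the standing range \eqref{eq:slow-sharp-range} and does not re-derive your ``main obstacle'' $p_\star<1+q$; the only details it adds are that $\underline P>N+\alpha$ makes the zero extension across $\partial\Omega$ of class $C^{N,\alpha}$, and that the leading coefficient of $K$ is nonzero because Euclidean trace reversal is invertible.
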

\begin{proof}
Both compact cutoff errors are bounded by $CR^{-2-k}\mu$, where $\mu=|t|/R^q$. Differentiating $\zeta_R h_t^{[J]}$ gives this scale because $h_t^{[J]}=\Obig(\mu/R)$ on the annulus. Its quadratic Hamiltonian contribution is bounded by $CR^{-2-k}\mu^2$, which is smaller. On the exterior we use~\eqref{eq:coupled-error}. The supports stay away from the boundary, so the proof of~\eqref{eq:slow-seed-bounds} applies to both sources and gives
\be
\label{eq:coupled-seed-norms}
\|\seedg(t)-\ebf\|_{C^{N,\alpha}_q}+\|\seedh(t)\|_{C^{N,\alpha}_{q+1}}\leq C R^q\mu,
\qquad
\Eappr{p_A}[\seedg(t),\seedh(t)]\leq C R^{p_A}(\mu+\mu^{J+1}).
\ee
For sufficiently small~$\mu$, these bounds are strong enough to apply the super-harmonic theorem of~\cite{BLF--PLF}, which gives~\eqref{eq:coupled-projected}.

Since $\underline P>N+\alpha$, both corrections extend by zero in $C^{N,\alpha}$ (Appendix~\ref{appendix:local-projection}). Smallness gives $c\ebf\leq g(t)\leq C\ebf$, hence completeness by Euclidean length comparison. Higher-order terms in the seed data set and the harmonic corrections, of orders $r^{-1},r^{-2}$, decay faster than the leading pair, proving~\eqref{eq:coupled-leading-limits}. The equation~\eqref{eq:full-conventions} gives the leading coefficient of $K$; its coefficient is nonzero for $\pi_0\neq0$ because Euclidean trace reversal is invertible. The same bounds with the sum of dimensionless amplitudes prove the assertions for several amplitudes.
\end{proof}

\subsubsection{Silhouettes as angular profiles of the correction.}
We now describe briefly what the silhouettes determine in the exact solution. The scalar profile $\nu^{\normal}$ and the vector profiles $\xi^{\normal(j)}$ determine the metric and momentum tensors $\mathcal S^\lambda$ and $\mathcal T_j^\lambda$ through~\eqref{eq:silhouette-tensor} and~\eqref{eq:momentum-silhouette}, which are four symmetric two-tensors.
In fixed asymptotic Cartesian components, we define the angular tensor profiles by
\be
\label{eq:angular-silhouette-tensors}
\mathcal S^\lambda_\infty(\theta)\coloneqq r\mathcal S^\lambda(r\theta),\qquad \mathcal T^\lambda_{\infty,j}(\theta)\coloneqq r^2\mathcal T_j^\lambda(r\theta),\qquad r\geq4R.
\ee
Homogeneity makes these definitions independent of~$r$. For the solutions of Theorem~\ref{thm:coupled-projection}, the expansion~\eqref{eq:coupled-projected} gives
\be
\label{eq:silhouette-radial-limits}
\begin{gathered}
r\bigl(g(t)(r\theta)-\seedg(t)(r\theta)\bigr)\longrightarrow\mathfrak m(t)\mathcal S^\lambda_\infty(\theta),\\
r^2\bigl(h(t)(r\theta)-\seedh(t)(r\theta)\bigr)\longrightarrow\sum_{j=1}^3\mathfrak J_j(t)\mathcal T^\lambda_{\infty,j}(\theta)
\end{gathered}
\ee
in $C^2(S^2)$, with zero extension outside~$\Lambda$. Indeed, the weighted remainder bounds imply
\be
\label{eq:silhouette-limit-error}
\|r\eta_g(t)(r\,\cdot)\|_{C^2(S^2)}+\|r^2\eta_h(t)(r\,\cdot)\|_{C^2(S^2)}\leq C r^{1-p_\star}\Bigl(\|\eta_g(t)\|_{C^{N,\alpha}_{p_\star,\underline P}}+\|\eta_h(t)\|_{C^{N,\alpha}_{p_\star+1,\underline P}}\Bigr).
\ee
An angular derivative contributes at most one factor~$r$. The angular weights remain bounded for derivatives through order two, and $\underline P>N+\alpha$ allows for extension by zero with no regularity loss. Since $p_\star>1$, the right-hand side tends to zero.

The angular tensor profiles in~\eqref{eq:angular-silhouette-tensors} are normalized to have unit ADM energy and momentum, which identifies their coefficients $\mathfrak m$ and $\mathfrak J_j$ with the relative energy and momentum between the exact solution and its seed data set.  Section~\ref{section-5} explains some freedom in selecting these coefficients.

For $0<q<1$, the harmonic correction is smaller than the leading terms of the seed data set, and can only be seen at higher orders in the expansion in powers of~$r$.  The slow leading profiles in~\eqref{eq:coupled-leading-limits} are therefore not sensitive to the harmonic terms.

As an interesting special case, an axisymmetric angular domain and (stable) localization function give an axisymmetric Hamiltonian silhouette~$\nu^{\normal}$. Indeed, rotation about the axis preserves the kernel equation and the unit energy normalization. Since the kernel is one-dimensional, the rotated normalized element equals the original one. The momentum kernel instead has dimension~$3$, corresponding to three silhouettes~$\xi^{\normal(j)}$, one for each Cartesian momentum component. Rotations about the axis preserve its equation and rotate these components. By uniqueness of the normalized kernel basis, the silhouette associated with momentum along the axis is invariant under the corresponding action on vector fields, while the two transverse silhouettes rotate into each other. Section~\ref{section-4-1}, below, gives the three silhouette explicitly in the case $\Lambda=S^2$ and $\lambda\equiv 1$.

\subsubsection{Finite-dimensional families with prescribed leading profiles.}
An interesting set-up is as follows.
We fix the preceding choices, including the polynomial inverses and the extension through the interior, and an exponent $q\in(0,1)$, as well as pairs $(\gamma_\ell,\pi_\ell)$, $1\leq \ell\leq d$ subject to the linear constraints~\eqref{eq:coupled-leading-kernel}, namely $L(r^{-q}\gamma_\ell)=0$ and $\operatorname{div}_{\ebf}(r^{-q-1}\pi_\ell)=0$.  The construction with several amplitudes gives solutions $X(a)\coloneqq(g_a,h_a)$ for all $a\in\RR^d$ in a sufficiently small ball, with
\be
\label{eq:profile-parametrization}
\lim_{r\to\infty}\bigl(r^q(g_a-\ebf),r^{q+1}h_a\bigr) =\sum_{\ell=1}^d a_\ell(\gamma_\ell,\pi_\ell)\quad\hbox{in }C^2(S^2).
\ee
Uniform recursion bounds give this limit. By linear independence of the leading profiles, this limit determines the parameter vector $a$ uniquely. The family is therefore injective in the fixed coordinates. Compact perturbations (Section~\ref{section-ADM-background}) can change charges while preserving these leading profiles.

%============================================================================

\section{Explicit whole-sphere silhouettes and gluing through narrow angular transitions}\label{section-4}

%-----------------------------------------------------------------------------------------------------------------------------------------

\subsection{Hamiltonian and momentum silhouettes on the whole sphere}\label{section-4-1}
In Sections~\ref{section-3}--\ref{section:slow-coupled-projection} we have assumed that the localization function~$\lambda$ is stable~\cite{BLF--PLF}.
We combine the constructions of seed data sets with the stability results of the companion papers \cite{BLF--PLF,BLF--PLF-stability} in two limiting settings: the whole sphere, and small-aperture domains, which permit thin Schwarzschild transition regions.

We now calculate the harmonic metric and momentum corrections when the projection acts on the whole sphere, namely when $\Omega=\RR^3$, $\lambda\equiv1$, and $\Lambda=S^2$. The transition in the seed data set may remain angularly restricted, while projection acts on the whole sphere with radial weights. Sections~\ref{section:full-constraints} and~\ref{section:slow-coupled-projection} still give global seed data sets. Whole-sphere stability holds in dimension three for $p\in(1/2,1)$ close to~$1$~\cite{BLF--PLF,BLF--PLF-stability}. We fix the geometry, then $p,p_A,J,p_\star$, and finally the amplitude.

The harmonic scalar silhouette is constant on $S^2$. With normalization~\eqref{eq:relative-flux}, we obtain
\be
\label{eq:whole-silhouette}
\nu^{\normal}=\frac1{2(1-p)}, \qquad \mathcal S^1=\frac1r\bigl((4-2p)\theta^\flat\otimes\theta^\flat-(2-2p)\ebf\bigr).
\ee
Indeed, for a radial function $r^{-a}$ with $a=2(1-p)$, its Hessian is $a r^{-a-2}((a+2)\theta^\flat\otimes\theta^\flat-\ebf)$ and its Laplacian is $a(a-1)r^{-a-2}$. Substitution into~\eqref{eq:silhouette-tensor} gives~\eqref{eq:whole-silhouette}. The ADM energy~$\mathcal F$ defined in~\eqref{eq:relative-flux} of $r^{-1}(A\theta^\flat\otimes\theta^\flat+B\ebf)$ is $(A+B)/2$, hence $\mathcal F(\mathcal S^1) = 1$ as stated below~\eqref{eq:silhouette-tensor}.

The momentum silhouettes are also explicit~\cite{BLF--PLF}. Let $a=2(1-p)>0$ as before, and let $e_j$ denote the $j$-th Cartesian unit vector. With the unit momentum normalization recalled in Section~\ref{section-3-coupled}, we obtain
\be
\label{eq:whole-momentum-silhouettes}
\begin{aligned}
\xi^{\normal(j)}&=\frac{6}{a(a+5)}\bigl(3e_j+a\theta_j\theta\bigr),\\
\mathcal T_j^1&=\frac{6}{(a+5)r^2}\bigl(e_j\otimes\theta+\theta\otimes e_j-\theta_j\ebf^{-1}+(a+2)\theta_j\theta\otimes\theta\bigr).
\end{aligned}
\ee
To verify both the equation and the normalization, we take a trial profile $\xi=ce_j+d\theta_j\theta$ and denote by $\mathcal T$ its tensor from~\eqref{eq:momentum-silhouette} with $\lambda=1$. Differentiation gives $\operatorname{div}_{\ebf}\mathcal T=(ac-3d)(e_j-3\theta_j\theta)/(2r^3)$. Thus $ac=3d$ solves the harmonic momentum equation. For these coefficients, $\mathcal T\theta=dr^{-2}(e_j+(a+2)\theta_j\theta)$. Since $\int_{S^2}(e_j+(a+2)\theta_j\theta)_k\,d\theta=4\pi(a+5)\delta_{jk}/3$, its ADM momentum is $d(a+5)e_j/6$. Unit momentum in direction $j$ therefore gives $d=6/(a+5)$ and $c=18/(a(a+5))$, proving~\eqref{eq:whole-momentum-silhouettes}. The three relative momentum components~$\mathfrak J_j$ are the amplitudes of these tensors in~\eqref{eq:coupled-projected}.

Overall, whole-sphere projection from $(\seedg,\seedh)$ to an exact solution~$(g,h)$ introduces harmonic corrections $\mathcal S^1$ and~$\mathcal T^1_j$ that affect all components of the metric and momentum.  This can be compensated by adjusting the seed data as in \cite[Theorems~6.1 and~6.4]{LeFlochNguyen}, further extended in Section~\ref{section-5}.

%-----------------------------------------------------------------------------------------------------------------------------------------

\subsection{Application: small angular apertures}\label{section-4-2}
We switch to the other extreme limit, whose stability was proven in \cite[Section 5]{BLF--PLF-stability}, namely shrinking a fixed reference domain to obtain localization domains~$\Lambda$ with small angular aperture.  This case accommodates all types of seed data sets in Sections~\ref{section-2}--\ref{section:full-constraints} and their projections in Sections~\ref{section-3}--\ref{section:slow-coupled-projection}.
The weighted stability estimates determine the allowed projection exponents $p$ close to~$1$; the masses are chosen only after the angular geometry and these exponents have been fixed.

\subsubsection{Angular scaling and stability.}
 We fix a bounded connected reference domain $\breve\Lambda\Subset B_1(0)\subset\RR^2$ with smooth boundary and a smooth boundary-defining function $\breve\lambda$, positive inside and vanishing linearly at the boundary. For a scale $0<\delta\leq1$, we define
\be
\label{eq:aperture-scaling}
\Phi_\delta(y)\coloneqq(\delta y,\sqrt{1-\delta^2|y|^2}), \qquad \Lambda_\delta\coloneqq\Phi_\delta(\breve\Lambda), \qquad \lambda_\delta\circ\Phi_\delta=\breve\lambda.
\ee
As proven in~\cite{BLF--PLF-stability}, the diameter and weighted Poincar\'e constants (whose precise definition is not used in the present work) of the rescaled domain~$\Lambda_\delta$ satisfy
\be
\label{eq:scaled-Poincare}
\operatorname{diam}(\Lambda_\delta) +C_{\mathrm P}^{\lambda_\delta}+C_{\mathrm P2}^{\lambda_\delta} \lesssim \delta,
\ee
which ensures that for small enough $\delta>0$, Hamiltonian stability holds.
Once the localization domain and function $(\Lambda_\delta,\lambda_\delta)$ are fixed, it is shown in~\cite{BLF--PLF-stability} that for small enough $1-p>0$, momentum stability holds.

We fix the reference domain and weight, angular scale, projection and sharp exponents, radius and cutoffs, and accuracy order before the masses. The geometric inequalities are independent of mass smallness.

%-----------------------------------------------------------------------------------------------------------------------------------------

\subsubsection{Narrow angular transition regions}
We now choose an annulus as our reference domain~$\breve{\Lambda}$, so that the angular transition can be narrow compared with its length. Its relative width $\tau$ is distinct from the overall angular scale~$\delta$: we fix the width first, then choose the angular scale sufficiently small, then $1-p$ sufficiently small, and finally choose sufficiently small masses.

We can take $\breve\Lambda$ to be a smooth annular region separating two nested closed curves in the plane. A smooth function $\breve\chi$ is one on the inner side and zero on the outer side, with its transition supported inside the annulus. Transporting it by~\eqref{eq:aperture-scaling}, and extending it by its constant values, gives a smooth function $\chi_\delta$ on $S^2$. The seed data set~\eqref{eq:two-mass-exterior} then glues Schwarzschild masses $m_2$ and $m_1$ in the inner and outer angular regions, respectively; $m_1=0$ recovers Schwarzschild--Minkowski gluing. The curves and the transition function may be non-axisymmetric.

To describe a narrow annulus, we fix a reference smooth simple closed curve~$\breve\Gamma$ and consider the squared distance function $d(x,\breve\Gamma)^2$, which is smooth in a tubular neighborhood $d(x,\breve\Gamma)<\tau_{\max}$.  We then let $\breve\Lambda$ be a tubular neighborhood of~$\breve\Gamma$ of positive width $\tau<\tau_{\max}$, equipped with the smooth boundary defining function $\breve\lambda(x) = (\tau^2-d(x,\breve\Gamma)^2)/\tau^2\in[0,1]$ for $x\in\breve\Lambda$, extended by zero outside.  For each fixed~$\tau$, Hamiltonian and momentum stability hold for sufficiently small~$\delta$, $1-p$, and masses (chosen in this order). Arbitrarily small width-to-length ratios~$\tau$ are therefore allowed, with quantifiers
\be
\label{eq:two-scale-quantifiers}
\begin{aligned}
& \forall(N,\alpha,P,\underline P,\overline P) \text{ admissible},\quad
\forall\breve\Gamma,\quad
\forall\tau\in(0,\tau_{\max}(\breve\Gamma)),\quad
\exists\delta_*(\breve\Gamma,\tau)>0,\quad \forall\delta\in(0,\delta_*),\\
& \exists q_*(\breve\Gamma,\tau,\delta,P)>0, \quad \forall p \in(1-q_*,1),\quad
\forall R,p_A,J\quad
\exists\mu_*(\breve\Gamma,\tau,\delta,P,p,R,p_A,J)>0.
\end{aligned}
\ee
For each such fixed choice, every signed pair with $|m_1|+|m_2|<R\mu_*$ is permitted, including either or both masses equal to zero.
Here we wrote quantifiers for the Schwarzschild--Minkowski or Schwarzschild--Schwarzschild gluing of Sections~\ref{section-2} and~\ref{section-3}.
For other types of seed data sets and their projection considered in Sections~\ref{section:full-constraints} and~\ref{section:slow-coupled-projection}, one should replace $R,p_A,J$ by $R,p_A,q,J$ with the slow decay exponent $q\in(0,1)$, and the meaning of~$\mu_*$ is modified, namely the bound $|m_1|+|m_2|<R\mu_*$ is replaced by analogous thresholds on the amplitudes~$t$ or~$t_\ell$ of leading profiles.

\subsubsection{Mass thresholds for thin annuli.} The scaling theorem applies with a fixed reference annulus; its constants may depend on~$\tau$. Getting a uniform estimates in the limit $\tau\to 0$ may be feasible provided one can determine the asymptotic behaviour of Poincar\'e constants in that limit. We have not explored this direction.

%============================================================================

\section{Choosing energy-momentum: relative charges and harmonic parameters}\label{section-5}

%-----------------------------------------------------------------------------------------------------------------------------------------

\subsection{Absolute and relative ADM charges}\label{section-5-ADM}
We now calculate the charge carried by the difference between an exact solution and its seed data set. We then select the harmonic parameters to prescribe this relative vector, keeping the leading profiles and the exterior geometry fixed.

We define the absolute and relative ADM charges in a fixed asymptotic chart and prove convergence of the corresponding surface integrals under the relevant decay assumptions. The distinction between absolute and relative is necessary when the individual data sets decay too slowly to have finite ADM charges.

\subsubsection{Surface integrals and convergence.}
We distinguish the energy $E_{\mathrm{ADM}}$ in a fixed asymptotic rest frame, the momentum $P_{\mathrm{ADM}}\in\RR^3$, and the invariant mass $\sqrt{E_{\mathrm{ADM}}^2-|P_{\mathrm{ADM}}|^2}$ when the energy-momentum vector is future causal, which holds for exact vacuum solutions $(g,h)$ by the positive mass theorem but not necessarily for seed data sets $(\seedg,\seedh)$. In time symmetry $P_{\mathrm{ADM}}=0$, so mass is the absolute value of energy, which is non-negative for $(g,0)$ but could be negative for $(\seedg,0)$. The Schwarzschild parameters $m_1,m_2$ prescribed on angular sectors need not equal any of these global quantities.

In the fixed Cartesian chart, we write $S_r\coloneqq\{x:|x|=r\}$ and define the finite-radius linear surface integrals
\be
\label{eq:four-fluxes}
E_r(\gamma)\coloneqq\frac1{16\pi}\int_{S_r}(\del_j \gamma_{ij}-\del_i \gamma_{jj})\theta_i\,dS, \qquad P_r(\pi)^j\coloneqq\frac1{8\pi}\int_{S_r}\pi^{ij}\theta_i\,dS.
\ee
In the standard ADM decay range the ADM charges $E_{\mathrm{ADM}},P_{\mathrm{ADM}}$ are $r\to+\infty$ limits of $E_r(g-\ebf),P_r(h)$, respectively. The relative energy-momentum between two data sets is instead
\be
\label{eq:relative-four-charge}
\mathcal Q_{\mathrm{rel}}((g,h);(g',h')) \coloneqq\lim_{r\to\infty}\bigl(E_r(g-g'),P_r(h-h')\bigr).
\ee
This is the relative ADM definition of~\cite{BLF--PLF}, in a fixed asymptotic chart. Linearity makes it additive in the sense that $\mathcal Q_{\mathrm{rel}}((g,h);(g',h'))+\mathcal Q_{\mathrm{rel}}((g',h');(g'',h''))=\mathcal Q_{\mathrm{rel}}((g,h);(g'',h''))$ whenever two of these limits exist. Positivity need not hold: Theorem~\ref{thm:effective-ADM}, below, supplies two solutions with equal positive energy and opposite nonzero momenta, whose relative ADM vector has zero energy and nonzero momentum.

\begin{proposition}[The charges of a projected seed data set]\label{prop:ADM-relation}
For the full projection in Theorem~\ref{thm:coupled-projection}, set $\mathfrak J\coloneqq(\mathfrak J_1,\mathfrak J_2,\mathfrak J_3)$ and $\mathbf q_{\mathrm{rel}}\coloneqq(\mathfrak m,\mathfrak J)$. One then has
\be
\label{eq:relative-modulator-vector}
\mathcal Q_{\mathrm{rel}}((g,h);(\seedg,\seedh))=\mathbf q_{\mathrm{rel}}.
\ee
The absolute ADM charges of the solution exist (namely the surface integrals $E_r(g-\ebf),P_r(h)$ converge as $r\to+\infty$) if and only if those of its seed data set exist, and in that case
\be
\label{eq:ADM-seed-plus-modulator}
(E(g,h),P(g,h))=(E(\seedg,\seedh),P(\seedg,\seedh))+\mathbf q_{\mathrm{rel}}.
\ee
If $g-\ebf=\Obig_2(r^{-q})$, $h=\Obig_1(r^{-q-1})$ with $q>1/2$, the vacuum solution has finite standard ADM charges. Our seed data sets have finite standard ADM charges in the same range. For every $q>0$, including $q\leq1/2$, the relative limit~\eqref{eq:relative-modulator-vector} still exists.
\end{proposition}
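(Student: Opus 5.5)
The plan is to compute the limits of the finite-radius surface integrals \eqref{eq:four-fluxes} term by term in the expansion \eqref{eq:coupled-projected}. By linearity, $E_r(g-\seedg)=\mathfrak m\,E_r(\mathcal S^\lambda)+E_r(\eta_g)$ and $P_r(h-\seedh)=\sum_j\mathfrak J_j P_r(\mathcal T_j^\lambda)+P_r(\eta_h)$. For $r\geq4R$ the tensors $\mathcal S^\lambda$ and $\mathcal T_j^\lambda$ are exactly $r^{-1}$ and $r^{-2}$ times angular profiles. Hence $E_r(\mathcal S^\lambda)$ and $P_r(\mathcal T_j^\lambda)$ are independent of $r$. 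By the normalizations \eqref{eq:relative-flux} and unit relative momentum stated in Theorem~\ref{thm:coupled-projection}, they equal $1$ and $e_j$ respectively.

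For the remainders, the weighted H\"older bounds give pointwise estimates $|\del\eta_g|\lesssim r^{-p_\star-1}$ and $|\eta_h|\lesssim r^{-p_\star-1}$. This is the same estimate used in \eqref{eq:silhouette-limit-error}: angular weights are bounded through low orders, and the extension by zero is justified by $\underline P>N+\alpha$. The integrands are therefore $\Obig(r^{-p_\star-1})$ over spheres of area $4\pi r^2$. Since $p_\star>1$, the integrals are $\Obig(r^{1-p_\star})\to0$. This proves that the relative limit \eqref{eq:relative-four-charge} exists and equals $\mathbf q_{\mathrm{rel}}$, i.e.\ \eqref{eq:relative-modulator-vector}. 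The argument works for every $q>0$, since only $g-\seedg$ and $h-\seedh$ enter.

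Next we obtain the ``if and only if'' statement and \eqref{eq:ADM-seed-plus-modulator}. For each $r$ we write $E_r(g-\ebf)=E_r(\seedg-\ebf)+E_r(g-\seedg)$, and similarly for $P_r$. The second term converges by the first step. Hence one limit exists exactly when the other does, and the limits differ by $\mathbf q_{\mathrm{rel}}$.

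It remains to treat the decay range $q>1/2$. For the vacuum solution we would invoke the standard ADM convergence argument: by the divergence theorem, $E_{r_2}-E_{r_1}$ equals the integral over the annulus of $\del_i(\del_j\gamma_{ij}-\del_i\gamma_{jj})=L\gamma$. The vacuum constraint gives $L(g-\ebf)=$ quadratic terms of size $\Obig(r^{-2q-2})$. Its integral over the annulus converges when $2q+2>3$, i.e.\ $q>1/2$. Likewise $\del_ih^{ij}=\Obig(r^{-2q-2})$ from \eqref{eq:momentum-exact}, giving Cauchy convergence of $P_r(h)$. For the seeds we could either repeat this argument, using that the constraint error $\mathcal G(\seedg,\seedh)$ is $\Obig(r^{-p_A-2})$ with $p_A>1$ and hence integrable, or simply use the equivalence just proved.

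The main obstacle is justifying the pointwise remainder decay from the localized H\"older norms near the angular boundary of $\Omega$, where $\lambda\to0$. The plan is to rely on the vanishing of $\eta$ outside $\Omega$ together with the positive power $\lambda^{\underline P-1}$ in the norm. This power makes $\lambda^{-\underline P+1}$-weighted derivatives bounded, so the unweighted derivatives are still bounded by $r^{-p_\star-1}$.
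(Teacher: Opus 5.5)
Your proposal is correct and follows the paper's proof essentially step by step. The shared steps are: term-by-term evaluation of the fluxes using the homogeneity and normalization of $\mathcal S^\lambda,\mathcal T_j^\lambda$; remainder fluxes of size $\Obig(r^{1-p_\star})$; subtraction of convergent surface integrals to get the equivalence and \eqref{eq:ADM-seed-plus-modulator}; and the divergence theorem with quadratic terms of order $r^{-2q-2}$ when $q>1/2$. The one thing the paper adds is a check that the Euclidean flux $P_r(h)$ of the contravariant $h$ agrees, up to $\Obig(r^{1-2q})$, with the geometric momentum flux of $g_{ia}g_{jb}h^{ab}=K_{ij}-(\Tr_g K)g_{ij}$ computed with the $g$-normal and $g$-area; you may want to include it, since it is what justifies calling these limits the standard ADM charges for $q>1/2$.
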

\begin{proof}\bse
The surface integrals of a metric remainder $\Obig_1(r^{-p_\star})$ and of a momentum remainder $\Obig(r^{-p_\star-1})$ are $\Obig(r^{1-p_\star})$, hence tend to zero. The energy and momentum normalizations of $\mathcal S^\lambda$ and $\mathcal T_j^\lambda$ therefore give~\eqref{eq:relative-modulator-vector}. Subtracting these convergent surface integrals from the solution or seed data set proves the equivalence of convergence of $E_r(g-\ebf),P_r(h)$ or $E_r(\seedg-\ebf),P_r(\seedh)$, and the relation~\eqref{eq:ADM-seed-plus-modulator}.

We write $\gamma=g-\ebf$. The exact constraints have the form
\be
\label{eq:full-Q}
\begin{aligned}
\mathcal H(g,h)&=L \gamma+Q_H(\gamma,h),
&Q_H(\gamma,h)&\coloneqq Q(\gamma)+\tfrac12(\Tr_g h)^2-|h|_g^2,\\
\mathcal C^j(g,h)&=\del_i h^{ij}+Q_M(\gamma,h)^j,
&Q_M(\gamma,h)^j&\coloneqq\Gamma^i{}_{i\ell}h^{\ell j}+\Gamma^j{}_{i\ell}h^{i\ell}.
\end{aligned}
\ee
Here $Q(\gamma)\coloneqq R_{\ebf+\gamma}-L \gamma$. Uniform near-Euclidean bounds give
\be
\label{eq:full-Q-bound}
|Q_H|\leq C(|\gamma||\del^2 \gamma|+|\del \gamma|^2+|h|^2),\qquad |Q_M|\leq C|\del \gamma||h|.
\ee
These terms are $\Obig(r^{-2q-2})$, integrable for $q>1/2$. The divergence theorem on annuli makes the vacuum surface integrals Cauchy. For seed data sets, we include the integrable errors of~\eqref{eq:coupled-seed-norms}. Also $R_g=|h|_g^2-\tfrac12(\Tr_g h)^2$ is integrable, as is the scalar curvature of the seed data set by its Hamiltonian error and the same bound.

For comparison with geometric momentum, we use $K_{ij}-(\Tr_g K)g_{ij}=g_{ia}g_{jb}h^{ab}$. Replacing metric factors, normal, and area in $g_{ia}g_{jb}h^{ab}\nu_g^i\del_j\,dS_g/(8\pi)$ by Euclidean ones changes the surface integral by $\Obig(r^2|\gamma||h|)=\Obig(r^{1-2q})$. Thus the definitions agree for $q>1/2$. For relative metric and momentum differences of orders $r^{-1},r^{-2}$, the replacement error is $\Obig(r^{-q})$, which vanishes throughout $q>0$.
\ese\end{proof}

For $q\leq1/2$, the seed data sets of Section~\ref{section:full-constraints} and their projections in Section~\ref{section:slow-coupled-projection} typically fail to admit ADM charges, as we can now test explicitly starting from our finite exterior expansions. We group equal radial exponents and write the trace-reversed metric difference $\seedg-\ebf$ and momentum~$\seedh$, up to faster-decaying terms whose surface integrals tend to zero, as
\be
\label{eq:flux-polynomial-data}
\begin{aligned}
A&=\sum_{0<w\leq1}r^{-w}(a_w\theta^\flat\otimes\theta^\flat+\theta^\flat\otimes b_w+b_w\otimes\theta^\flat+C_w),\\
\seedh&=\sum_{0<w\leq1}r^{-w-1}(\alpha_w\theta\otimes\theta+\theta\otimes\beta_w+\beta_w\otimes\theta+D_w).
\end{aligned}
\ee
All coefficients are polynomials in $\ell=\log(r/R)$. The equations~\eqref{eq:first-divergence} give the polynomials governing these surface integrals
\be
\label{eq:flux-polynomials}
e_w(\ell)\coloneqq\frac1{16\pi}\int_{S^2}\bigl((\del_\ell+2-w)a_w-\trs C_w\bigr)\,d\theta, \quad v_w(\ell)\coloneqq\frac1{8\pi}\int_{S^2}(\alpha_w\theta+\beta_w)\,d\theta.
\ee
The surface integrals $(E_r(\seedg-\ebf),P_r(\seedh))$ are $\sum_w r^{1-w}(e_w,v_w)+o(1)$. Thus, $r\to+\infty$ convergence is equivalent to $(e_w,v_w)=0$ for every $w<1$ and $(e_1,v_1)$ being constant (rather than having positive degree in~$\ell$). The same conclusions hold when replacing the seed data set $(\seedg,\seedh)$ by its exact projection $(g,h)$ as harmonic corrections only add a constant and cannot remove divergences.

\subsubsection{Total charges from nonlinear constraint terms.}
The data sets considered in the present work are complete on~$\RR^3$ with no inner boundary.  Provided $(Q_H,Q_M)$ are integrable, which occurs for $q>1/2$, integrating~\eqref{eq:full-Q} on balls gives the exact identity
\be
\label{eq:total-four-integral}
(E,P)=-\Bigl(\frac1{16\pi}\int_{\RR^3}Q_H\,dx, \frac1{8\pi}\int_{\RR^3}Q_M\,dx\Bigr).
\ee
If $|\del^j \gamma|+\rho|\del^j h|\leq C\mu_q R^q\rho^{-q-j}$ (recall $\rho\simeq\max(R,r)$) for $j\leq2$, then~\eqref{eq:full-Q-bound} bounds the vector by $CR\mu_q^2$, which is quadratic in the dimensionless amplitude~$\mu_q=|t|/R^q$.  Analogously, data sets for Schwarzschild gluing, with $q=1$, have a total ADM energy--momentum that scales quadratically with the dimensionless mass $\mu=(|m_1|+|m_2|)/R$.
The first variation of~\eqref{eq:total-four-integral} at Euclidean data therefore vanishes. When the data set is incomplete, such as in the Schwarzschild metric~\eqref{eq:schwarzschild}, an inner boundary flux must be added to~\eqref{eq:total-four-integral}, which can thus be linear in the amplitude or mass.

%-----------------------------------------------------------------------------------------------------------------------------------------

\subsection{A localized harmonic energy parameter}\label{section-5-1}
We introduce an angularly supported tensor with unit energy flux. Its amplitude changes the harmonic energy coefficient without changing the prescribed Schwarzschild and Minkowski regions.

We fix $f\in C_c^\infty(\Lambda)$ with integral $16\pi$ and define a harmonic tensor on the end by
\be
\label{eq:mass-carrier}
T_f\coloneqq\frac{f(\theta)}r\Crad, \qquad \int_{S^2}f\,d\theta=16\pi.
\ee
The equations~\eqref{eq:spherical-L} and~\eqref{eq:relative-flux} give
\be
\label{eq:carrier-normalization}
LT_f=0,\qquad \mathcal F(T_f)=1.
\ee
Indeed, its trace reversal is $r^{-1}f\theta^\flat\otimes\theta^\flat$ and the radial component of its divergence is $r^{-2}f$. We cut it off inside the same fixed ball as before.

We add $sT_f$ to the leading exterior seed data set and repeat the finite corrections and interior extension, obtaining $\seedg(m,s)$. The recursion~\eqref{eq:polynomial-recursion} has homogeneous polynomial coefficients in $(m,s)$, supported compactly inside $\Lambda$. Thus the Schwarzschild mass $m$ and Euclidean exterior region are unchanged. The estimates of Section~\ref{section-3} hold uniformly with $\mu=(|m|+|s|)/R$ in a fixed small neighborhood.
The projected metric has the expansion
\be
\label{eq:parametric-expansion}
g(m,s)=\seedg(m,s)+\mathfrak m(m,s)\mathcal S^\lambda+\eta_g(m,s), \qquad \eta_g(m,s)\in C^{N,\alpha}_{p_\star,\underline P}(\Omega).
\ee

\begin{remark}[Relation with the construction of LeFloch and Nguyen]
LeFloch and Nguyen select a parameter in the seed data set to remove an unwanted harmonic term~\cite[Sections~6.2--6.3]{LeFlochNguyen}: equations~(6.23)--(6.24) give the mass correction, (6.28) its continuity, and (6.29)--(6.31) the selection; equations~(6.44)--(6.45) treat Schwarzschild--Minkowski data. Presently, we localize this parameter inside the gluing region, retaining the exterior metrics.
In the notation of~\cite[equation~(6.45)]{LeFlochNguyen}, we first set $t\coloneqq-4m/5$ and choose an angular function $\Phi$ equal to one on the prescribed Schwarzschild and Minkowski sectors, with zero spherical average. The remaining exterior Hessian term is canceled by solving
\be
\label{eq:Nguyen-selection}
\frac54s-\widetilde m(s,-4m/5)=0, \qquad \int_{S^2}\Phi\,d\theta=0.
\ee
The factor $5/4$ reflects their Hessian tensor and parametrization. We normalize our parameter by its ADM flux to prescribe the harmonic correction inside~$\Omega$; seed-to-solution projection already preserves the exterior metrics.
\end{remark}

%-----------------------------------------------------------------------------------------------------------------------------------------

\subsection{The energy flux equation and its nonlinear remainder}\label{section-5-2}
We express the harmonic energy coefficient~$\mathfrak m(m,s)$ in~\eqref{eq:parametric-expansion} in terms of the prescribed parameter and an integrable nonlinear remainder. This identity gives the quantitative estimate needed to select~$s$.

For a perturbation $\gamma$ of the Euclidean metric, we write the exact scalar-curvature identity as
\be
\label{eq:exact-Q}
R_{\ebf+\gamma}=L \gamma+Q(\gamma).
\ee
For uniformly small $\gamma$, the function $Q$ consists of terms quadratic or higher in $\gamma$ and its derivatives; in particular, we have
\be
\label{eq:Q-bound}
|Q(\gamma)|\leq C\bigl(|\gamma|\,|\del^2\gamma|+|\del \gamma|^2\bigr).
\ee
This is also the time-symmetric Euclidean specialization of the nonlinear expansion in~\eqref{eq:full-Q-bound}. Since $p>1/2$, the sub-harmonic estimates supplied by the uniform Newton construction imply
\be
\label{eq:solution-subharmonic}
|\del^j(g(m,s)-\ebf)|\leq C\mu R^p\max(R,r)^{-p-j},\qquad 0\leq j\leq2.
\ee
The bound~\eqref{eq:Q-bound} is integrable since $p>1/2$. Continuous dependence of each Newton step follows from the uniform inverse estimates; uniform quadratic convergence gives continuity of the limit in the same norm. Dominated convergence therefore applies to the integral of $Q$.

We denote by $\langle\chi\rangle_{S^2}\coloneqq(4\pi)^{-1}\int_{S^2}\chi\,d\theta$ the unweighted spherical average of the angular cutoff function~$\chi$ involved in the interpolation between Schwarzschild and Minkowski.

\begin{lemma}\label{lem:modulator-equation}
For the preceding family, there is a continuous function $\mathcal N(m,s)$ such that
\be
\label{eq:modulator-equation}
\mathfrak m(m,s)=-m\langle\chi\rangle_{S^2}-s+\mathcal N(m,s), \qquad |\mathcal N(m,s)|\leq \frac C R(|m|+|s|)^2.
\ee
\end{lemma}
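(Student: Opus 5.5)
The plan is to compute $\mathfrak m(m,s)$ as a difference of two absolute ADM energies. Proposition~\ref{prop:ADM-relation} applies with $q=1>1/2$. Specialized to time symmetry it gives that $E(g(m,s))$ exists and
\[
\mathfrak m(m,s)=E(g(m,s))-E(\seedg(m,s)).
\]
It then suffices to show that the seed energy is \emph{exactly} linear, $E(\seedg(m,s))=m\langle\chi\rangle_{S^2}+s$. We then set $\mathcal N(m,s)\coloneqq E(g(m,s))$ and show that this total energy of the exact solution is quadratically small and continuous.

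\emph{Step 1: the seed energy.} Only terms of order $r^{-1}$ in $\seedg-\ebf$ on $r\geq2R$ contribute to $\lim_r E_r$. The corrections $r^{-j}P_j\Crad$ with $j\geq2$ and the nonlinear part of $(1+m\chi/(2r))^4$ are $\Obig_1(r^{-2}(1+\ell)^K)$. Their surface integrals are $\Obig(r^{-1}(1+\ell)^K)\to0$. The linear part is $(2m/r)(\chi\ebf+c_\chi)+sT_f$. For the conformal piece $\gamma=2m\chi r^{-1}\ebf$, a direct computation gives $(\del_j\gamma_{ij}-\del_i\gamma_{jj})\theta_i=4m\chi r^{-2}$, hence flux $\frac{4m}{16\pi}\int_{S^2}\chi\,d\theta=m\langle\chi\rangle_{S^2}$. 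For the mixed piece $(2m/r)c_\chi$, which is Euclidean-traceless, the flux integrand reduces to a radial divergence component. By the spherical divergence identities of Appendix~\ref{appendix:radial-selection} this is a multiple of $r^{-2}\Deltaslash\chi$, which integrates to zero over $S^2$. Finally $\mathcal F(T_f)=1$ by~\eqref{eq:carrier-normalization}, and the compact cutoff does not affect the limit.

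\emph{Step 2: the total energy of the solution.} Since $g(m,s)$ is complete on $\RR^3$ and scalar flat, integrating $0=L\gamma+Q(\gamma)$ with $\gamma=g(m,s)-\ebf$ over balls gives the identity~\eqref{eq:total-four-integral}:
\[
E(g(m,s))=-\frac1{16\pi}\int_{\RR^3}Q(\gamma)\,dx .
\]
The divergence theorem turns $\int_{B_r}L\gamma$ into $16\pi E_r(\gamma)$. We then combine~\eqref{eq:Q-bound} with the sub-harmonic bounds~\eqref{eq:solution-subharmonic}, which give $|Q|\leq C\mu^2R^{2p}\max(R,r)^{-2p-2}$. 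Integrating in three dimensions with $2p>1$ yields $\int|Q|\leq C\mu^2R^{2p}R^{1-2p}=CR\mu^2=C(|m|+|s|)^2/R$, which is the claimed bound on $\mathcal N$.

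\emph{Step 3: continuity.} By the remarks after~\eqref{eq:solution-subharmonic}, the solution depends continuously on $(m,s)$ in the sub-harmonic norm through $j\leq2$ derivatives. The bound above is a uniform integrable majorant, so dominated convergence gives continuity of $\int Q$, hence of $\mathcal N$.

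I expect the main obstacle to be Step~1, specifically confirming that the mixed tensor $c_\chi$ and the resonant logarithmic corrections at $j=2$ contribute no flux. For $c_\chi$ I would compute $\theta_i\del_j\gamma_{ij}$ via Lemma~\ref{lem:double-divergence}-type radial/tangential formulas to exhibit it as a tangential divergence. For the logarithmic terms I would rely on the explicit decay $r^{-2}(1+\ell)^K$ from Proposition~\ref{prop:finite-expansion}.
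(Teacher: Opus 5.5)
Your proposal is correct and follows essentially the same route as the paper: the seed flux $m\langle\chi\rangle_{S^2}+s$ comes only from the $r^{-1}$ terms, $\mathcal N(m,s)$ is the total energy $-\frac1{16\pi}\int_{\RR^3}Q$ obtained by the divergence theorem on balls, the bound $CR\mu^2$ follows from the sub-harmonic estimates, and continuity follows by dominated convergence. The only cosmetic difference is that you obtain $\mathfrak m=E(g)-E(\seedg)$ through Proposition~\ref{prop:ADM-relation}, which is stated for $q<1$ but whose first step only uses $\mathcal F(\mathcal S^\lambda)=1$ and $p_\star>1$; the paper instead reads this flux decomposition directly off~\eqref{eq:parametric-expansion}.
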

\begin{proof}\bse
For a trace-reversed tensor of the form $r^{-1}(a_0\theta^\flat\otimes\theta^\flat+\theta^\flat\otimes b_0+b_0\otimes\theta^\flat+C_0)$, its flux is $(16\pi)^{-1}\int_{S^2}(a_0-\trs C_0)\,d\theta$. The angular divergence integrates to zero. The equations~\eqref{eq:schwarzschild-components} and~\eqref{eq:carrier-normalization} therefore yield
\be
\label{eq:seed-ADM}
\mathcal F(\seedg(m,s)-\ebf)=m\langle\chi\rangle_{S^2}+s.
\ee
The higher-order terms in the seed data set, including $r^{-2}\log(r/R)$, have zero flux. The same holds for $\eta_g(m,s)$ because $p_\star>1$.

We integrate~\eqref{eq:exact-Q} for the scalar-flat metric $g(m,s)$ over a large ball. There is no inner boundary since the metric is $C^2$ on $\RR^3$. The divergence theorem and the limit at infinity give
\be
\label{eq:mass-nonlinear}
\mathcal F(g(m,s)-\ebf) =-\frac1{16\pi}\int_{\RR^3}Q(g(m,s)-\ebf)\,dx \eqqcolon\mathcal N(m,s).
\ee
The flux decomposition~\eqref{eq:parametric-expansion}, followed by~\eqref{eq:seed-ADM}, proves~\eqref{eq:modulator-equation}. The equations~\eqref{eq:Q-bound}--\eqref{eq:solution-subharmonic} bound the integral by $CR\mu^2$. They also give continuity of~$\mathfrak m$ hence of $\mathcal N$ by dominated convergence, using continuous dependence of the projection in the same fixed weighted space.
\ese\end{proof}

%-----------------------------------------------------------------------------------------------------------------------------------------

\subsection{Selection of the relative energy}\label{section-5-3}
We solve the scalar parameter equation to prescribe the relative energy coefficient. The proof of the following statement is given in \hyperref[appendix:scalar-selection]{Appendix~\ref*{appendix:radial-selection}}.
\begin{corollary}[Prescribed scalar modulator]\label{thm:modulator-selection}
Fix the exponents, localization geometry, and cutoffs. For all $m\geq0$ and target coefficients $a\in\RR$ with $(m+|a|)/R$ sufficiently small, there exists a parameter $s_*$ such that
\be
\label{eq:target-modulator}
\mathfrak m(m,s_*)=a,\qquad s_*=-m\langle\chi\rangle_{S^2}-a+\Obig\bigl((m+|a|)^2/R\bigr).
\ee
In particular, for $a=0$ the exact scalar-flat metric differs from its selected seed data set by a super-harmonically decaying tensor throughout the gluing domain, while retaining the prescribed Schwarzschild and Euclidean metrics outside~$\Omega$.
\end{corollary}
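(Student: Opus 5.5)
The plan is to solve the scalar equation $\mathfrak m(m,s)=a$ by a one-dimensional intermediate value argument built on Lemma~\ref{lem:modulator-equation}. The exponents, localization geometry $(R,\Lambda,\Omega,\lambda)$, cutoffs $\chi,\zeta_R$, carrier profile $f$, accuracy $p_A$ and order $J$ are fixed first. From Section~\ref{section-5-1} we then take a threshold $\mu_1>0$ such that, for $(|m|+|s|)/R<\mu_1$, the seed data set $\seedg(m,s)$ satisfies the projection hypotheses uniformly. On that region $\mathfrak m(m,s)$ and $\mathcal N(m,s)$ are continuous, and
\[
\mathfrak m(m,s)=-m\langle\chi\rangle_{S^2}-s+\mathcal N(m,s),\qquad |\mathcal N(m,s)|\leq \frac CR(|m|+|s|)^2 .
\]

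Set $\sigma\coloneqq m+|a|$ and $s_0\coloneqq -m\langle\chi\rangle_{S^2}-a$, so that $|s_0|\leq\sigma$ because $0\leq\langle\chi\rangle_{S^2}\leq1$. Fix a constant $K$ to be determined and write $\delta\coloneqq K\sigma^2/R$. We consider the continuous function
\[
\phi(s)\coloneqq\mathfrak m(m,s)-a=s_0-s+\mathcal N(m,s)
\]
on the interval $I\coloneqq[s_0-\delta,s_0+\delta]$. Assume $\sigma/R\leq\mu_2$ with $\mu_2$ small enough that $K\mu_2\leq1$. Then every $s\in I$ satisfies $|s|\leq2\sigma$, so $|m|+|s|\leq3\sigma$. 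The smallness condition $3\mu_2<\mu_1$ therefore keeps the whole interval inside the domain of Lemma~\ref{lem:modulator-equation}.

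On $I$ we have $|\mathcal N(m,s)|\leq 9C\sigma^2/R$. Choosing $K\coloneqq 9C+1$ gives
\[
\phi(s_0-\delta)\geq\delta-9C\sigma^2/R>0,\qquad \phi(s_0+\delta)\leq-\delta+9C\sigma^2/R<0 .
\]
The intermediate value theorem then yields $s_*\in I$ with $\mathfrak m(m,s_*)=a$. Since $s_*\in I$, we get $|s_*-s_0|\leq K\sigma^2/R$, which is exactly the expansion~\eqref{eq:target-modulator}.

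The hypothesis $m\geq0$ is only used through $|m|=m$. Nothing in the argument needs uniqueness or monotonicity in $s$, and no differentiability of $\mathcal N$ is required, only the continuity supplied by Lemma~\ref{lem:modulator-equation}.

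For the case $a=0$, we substitute $\mathfrak m=0$ into~\eqref{eq:parametric-expansion}. This gives $g(m,s_*)-\seedg(m,s_*)=\eta_g(m,s_*)\in C^{N,\alpha}_{p_\star,\underline P}(\Omega)$ with $p_\star>1$, which is the claimed super-harmonic difference. The prescribed Schwarzschild and Euclidean exterior regions are untouched for two reasons. First, $sT_f$ and all recursion corrections are supported angularly in $\Lambda$. Second, the projection leaves the seed data set unchanged outside $\Omega$, as in Theorem~\ref{thm:fixed-radius}.

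The main obstacle is the continuity of $s\mapsto\mathfrak m(m,s)$, on which the intermediate value argument depends. I would take it from Lemma~\ref{lem:modulator-equation}, and ultimately from the continuous dependence of the projection on seed parameters in Appendix~\ref{appendix:local-projection}. It also matters that the constant $C$ in the bound on $\mathcal N$ is uniform on the whole region $(|m|+|s|)/R<\mu_1$ rather than depending on $s$. The smallness threshold for $(m+|a|)/R$ is then $\min(\mu_2,\mu_1/3)$, and it depends only on the fixed choices.
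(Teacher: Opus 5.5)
Your proposal is correct and follows essentially the paper's argument: you use Lemma~\ref{lem:modulator-equation} to reduce $\mathfrak m(m,s)=a$ to $s=s_0+\mathcal N(m,s)$ on an interval of width $K(m+|a|)^2/R$ around $s_0=-m\langle\chi\rangle_{S^2}-a$, apply the intermediate value theorem, and read off the case $a=0$ from~\eqref{eq:parametric-expansion}. One small slip: when $m=a=0$ your endpoint inequalities are not strict, but non-strict signs suffice for the intermediate value theorem, and the paper simply takes $s_*=0$ in that case.
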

The prescribed Schwarzschild mass $m$, the relative coefficient $\mathfrak m$, and the total ADM mass are distinct. The condition $\mathfrak m(m,s_*)=0$ sets the relative coefficient to zero, while the equation~\eqref{eq:mass-nonlinear} gives the total ADM mass as $\mathcal N(m,s_*)$, of order $m^2/R$ in this family.

The construction also applies to slow leading data sets when the difference of nonlinear errors is integrable. For a fixed $0<q<1$, we choose an admissible projection exponent $p$ with $p+q>1$ and add $sT_f$ to the slow seed data set. The finite recursion preserves its slow coefficients: every further term involving $s$ has exponent at least $1+q$, whereas the difference between the two seed data sets has energy flux $s$. The scalar specialization of Theorem~\ref{thm:vector-selection}, proven below, gives
\be
\label{eq:slow-modulator}
\mathfrak m(s)=\mathfrak m(0)-s+\mathcal R(s),\qquad |\mathcal R(s)-\mathcal R(s')|\leq C\varepsilon|s-s'|.
\ee
Here $\varepsilon$ bounds the small slow and harmonic amplitudes. The proof integrates the difference of the nonlinear constraints, whose radial order $r^{-p-q-2}$ is integrable because $p+q>1$; the individual integrals need not converge. We can therefore choose the scalar harmonic parameter throughout $0<q<1$, with uniqueness and Lipschitz dependence supplied by the contraction argument below.

%-----------------------------------------------------------------------------------------------------------------------------------------

\subsection{Selection of relative energy-momentum}\label{section-5-vector}
We select four harmonic parameters to prescribe relative energy-momentum. These parameters leave the slow leading metric and momentum profiles unchanged. We choose $\varphi\in C_c^\infty(\Lambda)$, nonnegative and positive on a nonempty open set, and define
\be
\label{eq:momentum-carriers}
A_{\varphi ij}\coloneqq\int_{S^2}\varphi\theta_i\theta_j\,d\theta,\qquad \varpi_{\varphi j}\coloneqq8\pi\varphi\sum_{k=1}^3(A_\varphi^{-1})_{jk}\theta_k,\qquad U_{\varphi j}\coloneqq r^{-2}\varpi_{\varphi j}\theta\otimes\theta.
\ee
The matrix $A_\varphi$ is positive definite: for $v\neq0$, the integral of $\varphi(v\cdot\theta)^2$ is positive because a nonzero linear function cannot vanish on an open subset of~$S^2$. The equations~\eqref{eq:first-divergence}, with radial exponent~$2$, give $\operatorname{div}_{\ebf}U_{\varphi j}=0$. Direct integration gives $P_r(U_{\varphi j})=e_j$, with the Cartesian unit vectors introduced in Section~\ref{section-4-1}. Together with the energy normalized harmonic tensor~$T_f$, these are four independent flux parameters, with compact angular support.

We write $z\coloneqq(s,v)\in\RR\times\RR^3$, add $(sT_f,\sum_jv_jU_{\varphi j})$ to a fixed slow exterior data set, and perform finite correction and interior extension. The seed data set $S(z)$ equals $(\ebf,0)$ outside~$\Omega$, and its seed-to-solution projection is $X(z)$. Addition of radial exponents in the recursion makes every further $z$-dependent term of metric order at least $r^{-1-q}$ or momentum order at least $r^{-2-q}$, up to logarithms; harmonic products decay faster. The relative ADM energy-momentum of the \emph{seed data sets} is therefore
\be
\label{eq:seed-four-difference}
\mathcal Q_{\mathrm{rel}}(S(z);S(z'))=z-z'.
\ee
Coefficients of slower terms, including exponent-one logarithms from the fixed slow pair, remain unchanged.

\begin{theorem}[Choice of relative energy-momentum with fixed slow leading profiles and exterior geometry]\label{thm:vector-selection}
In the context of Theorem~\ref{thm:coupled-projection}, assume that the slow decay exponent $q\in(0,1)$ and projection exponent~$p$ obey $p+q>1$. For sufficiently small slow amplitudes and a sufficiently small closed parameter ball $|z|\leq r_0$, there is a nonlinear remainder $\mathbf N$ such that the relative energy-momentum vector $\mathbf q_{\mathrm{rel}}(z)=(\mathfrak m(z),\mathfrak J(z))$ satisfies
\be
\label{eq:vector-modulator-equation}
\mathbf q_{\mathrm{rel}}(z)=\mathbf q_{\mathrm{rel}}(0)-z+\mathbf N(z),\qquad \mathbf N(0)=0,\qquad |\mathbf N(z)-\mathbf N(z')|\leq L|z-z'|,
\ee
where $0\leq L<1$. Every target $\mathbf q$ with $|\mathbf q-\mathbf q_{\mathrm{rel}}(0)|\leq(1-L)r_0$ is attained by a unique parameter~$z$ in this ball, namely $\mathbf q_{\mathrm{rel}}(z)=\mathbf q$. The selected parameter~$z$ is Lipschitz in the target~$\mathbf q$ with Lipschitz constant $(1-L)^{-1}$. In particular, after taking the fixed slow amplitudes sufficiently small, all four modulators can be set to $\mathbf q=0$ without changing the leading slow pair or the prescribed exterior values.
\end{theorem}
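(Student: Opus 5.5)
Define $\mathbf N(z)\coloneqq\mathbf q_{\mathrm{rel}}(z)-\mathbf q_{\mathrm{rel}}(0)+z$. Then $\mathbf N(0)=0$ holds by construction, and everything reduces to the Lipschitz bound with a small constant $L$. The attainment statement then follows from a contraction argument. Given a target $\mathbf q$, the equation $\mathbf q_{\mathrm{rel}}(z)=\mathbf q$ is equivalent to the fixed-point problem
\[
z=\Phi(z)\coloneqq\mathbf q_{\mathrm{rel}}(0)-\mathbf q+\mathbf N(z).
\]
On the closed ball $|z|\leq r_0$ we have $|\Phi(z)|\leq|\mathbf q-\mathbf q_{\mathrm{rel}}(0)|+L|z|\leq(1-L)r_0+Lr_0=r_0$, so $\Phi$ maps the ball to itself and is a contraction. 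Banach's theorem then gives a unique fixed point. Comparing the fixed points for two targets $\mathbf q,\mathbf q'$ gives $|z-z'|\leq|\mathbf q-\mathbf q'|+L|z-z'|$, which is the Lipschitz constant $(1-L)^{-1}$. Taking $\mathbf q=0$ requires $|\mathbf q_{\mathrm{rel}}(0)|\leq(1-L)r_0$. This is ensured by making the slow amplitudes small, since $\mathbf q_{\mathrm{rel}}(0)$ is bounded by the projection correction, which is controlled by the constraint error. The leading profiles and exterior values are unchanged because the carriers $T_f,U_{\varphi j}$ and all their recursion corrections have compact angular support in $\Lambda$ and decay faster than $r^{-q}$.

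For the Lipschitz bound I would use additivity of $\mathcal Q_{\mathrm{rel}}$ together with \eqref{eq:seed-four-difference} and \eqref{eq:relative-modulator-vector}. These give
\[
\mathbf q_{\mathrm{rel}}(z)-\mathbf q_{\mathrm{rel}}(z')=\mathcal Q_{\mathrm{rel}}(X(z);X(z'))-(z-z'),
\]
so that $\mathbf N(z)-\mathbf N(z')=\mathcal Q_{\mathrm{rel}}(X(z);X(z'))$. Both $X(z)$ and $X(z')$ are exact vacuum solutions. Writing the constraints as in \eqref{eq:full-Q} and subtracting, the linear parts satisfy
\[
L(\gamma-\gamma')=-(Q_H(\gamma,h)-Q_H(\gamma',h')),\qquad \operatorname{div}(h-h')=-(Q_M(\gamma,h)-Q_M(\gamma',h')).
\]
Integrating over a ball and applying the divergence theorem turns the finite-radius fluxes $E_r(\gamma-\gamma'),P_r(h-h')$ into volume integrals of the differences of nonlinear terms. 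There is no inner boundary because the data are complete on $\RR^3$. The difference of quadratic expressions is bilinear in $(\gamma-\gamma',h-h')$ and $(\gamma,\gamma',h,h')$. The first factor carries at least the sub-harmonic decay $r^{-p}$ of the projected solutions, in the spirit of \eqref{eq:solution-subharmonic}. The second carries the slow decay $r^{-q}$ with amplitude $\varepsilon$ (slow amplitude plus $r_0$). The integrand is therefore $\Obig(\varepsilon\,\|X(z)-X(z')\|\,r^{-p-q-2})$, which is integrable exactly because $p+q>1$. The limit exists and satisfies
\[
|\mathbf N(z)-\mathbf N(z')|\leq C\varepsilon\,\|X(z)-X(z')\|,
\]
even though the individual ADM integrals may diverge when $q\leq1/2$.

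The remaining step is the main obstacle: a Lipschitz estimate $\|X(z)-X(z')\|\leq C|z-z'|$ in a weighted norm strong enough for the bilinear bound above. The norm must control $\gamma-\gamma'$ with $j\leq2$ derivatives at rate $\rho^{-p-j}$ and $h-h'$ at rate $\rho^{-p-1-j}$. I would first bound the seed difference $S(z)-S(z')$: by the recursion it is linear in $z-z'$ at leading order, with harmonic decay, which is better than $r^{-p}$. I would then invoke the Lipschitz/continuous dependence of the seed-to-solution projection on the seed data from Appendix~\ref{appendix:local-projection}, which follows from uniform inverse bounds in the Newton iteration. The projection corrections $\lambda^P r^{n/2-p}d\mathcal H^*[u]$ have exactly $r^{-p}$ sub-harmonic decay, so the difference of solutions is controlled at rate $r^{-p}$ by $C|z-z'|$ plus the difference of constraint errors. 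That difference of errors is itself Lipschitz in $z$ with a constant uniform in the small family. Finally, choosing $\varepsilon$ small gives $L\coloneqq C\varepsilon<1$.
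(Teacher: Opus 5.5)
Your proposal is correct and follows essentially the paper's argument. The steps match: a Lipschitz bound on $X(z)-X(z')$ in the $r^{-p}$-weighted pair space; the bilinear estimate $|\mathbf Q(X(z))-\mathbf Q(X(z'))|\lesssim\varepsilon|z-z'|\rho^{-p-q-2}$, made integrable by $p+q>1$; the divergence theorem on the complete $\RR^3$, where your additivity identity $\mathbf N(z)-\mathbf N(z')=\mathcal Q_{\mathrm{rel}}(X(z);X(z'))$ is exactly the paper's later relation~\eqref{eq:exact-relative-cancellation}; and then the contraction. The one minor difference is how you get the Lipschitz bound on $X(z)-X(z')$: the paper subtracts the two exact constraints, applies the fixed Euclidean inverse~\eqref{eq:full-local-inverse} to $w=\mathcal D(U_z-U_{z'})$, and absorbs the resulting $C\varepsilon\|w\|$ term using that same bilinear estimate, rather than citing the general Lipschitz dependence of the projection on its seed.
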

\begin{proof}\bse
We use the pair space $\mathcal X_p\coloneqq C^{N,\alpha}_{p,\underline P}\times C^{N,\alpha}_{p+1,\underline P}$ with the sum norm as well as the source space $\mathcal Y_p=(C^{N-2,\alpha}_{p+2,\underline P-2}\cap L^2_{p+2,P})\times(C^{N-1,\alpha}_{p+2,\underline P-1}\cap L^2_{p+2,P})$, obtained by replacing $(\mathcal H,\mathcal C)$ in~\eqref{eq:full-error-norm} at $b=p$ by an arbitrary scalar--vector source.

For any $z$ with $|z|\leq r_0$, we consider the seed-to-solution projection from $S(z)$ to $X(z)$, which is given by $X(z)=S(z)+\mathcal DU_z$, with scalar--vector potential $U_z\coloneqq(u_z,Z_z)$ and fixed Euclidean reference adjoint $\mathcal D(u,Z)\coloneqq(\mathcal D_gu,\mathcal D_hZ)$, defined by
\be
\label{eq:full-reference-adjoint}
\mathcal D_gu\coloneqq\lambda^{2P}\rho^{3-2p}(\Hess u-\Delta u\,\ebf),\qquad (\mathcal D_hZ)^{ij}\coloneqq-\tfrac12\lambda^{2P}\rho^{1-2p}(\del_iZ_j+\del_jZ_i).
\ee
We set $w\coloneqq X(z)-X(z')-S(z)+S(z')=\mathcal D(U_z-U_{z'})$.
Let us prove first a bound on the difference of seed data sets,
\be
\label{eq:Sz-Szp-bound}
\|S(z)-S(z')\|_{\mathcal X_p}+\bigl\|(L,\operatorname{div}_{\ebf})(S(z)-S(z'))\bigr\|_{\mathcal Y_p}\leq C|z-z'| .
\ee
All terms in the finite expansions defining seed data sets have compact angular support inside~$\Lambda$ and compact support away from boundaries of~$\Omega$ in the ball~$B_{4R}$, hence are supported in a region where $\lambda$ is uniformly bounded below; as a result, angular weights in the norms can be absorbed into the multiplicative constant~$C$.  The finite expansions of $S(z)$ and $S(z')$ differ by harmonic parameter terms and their nonlinear consequences, whose radial decay is sufficient to make the weighted norms finite.  Finally, the finite expansions defining seed data sets are polynomial in~$z$, and $S(z)-S(z')$ is thus divisible by $z-z'$ (as a polynomial), which provides the $|z-z'|$ factor in the upper bound.

The differences $X(z)-(\ebf,0)$ are small in $C^{N,\alpha}_{q,\underline P}\times C^{N,\alpha}_{q+1,\underline P}$ by some parameter $\varepsilon>0$. The differentiated form of~\eqref{eq:full-Q-bound} gives
\be
\label{eq:full-Q-difference}
\|\mathbf Q(X(z))-\mathbf Q(X(z'))\|_{\mathcal Y_p} \leq C\varepsilon\|X(z)-X(z')\|_{\mathcal X_p}, \qquad \mathbf Q(g,h)\coloneqq(Q_H(g-\ebf,h),Q_M(g-\ebf,h)).
\ee
The worst pointwise products have order $\rho^{-p-q-2}$ and angular factor $\lambda^{2\underline P-2}$. Their weighted square integrals have radial factor $\rho^{-2q-1}$; the angular exponent is positive by~\eqref{eq:localization-exponents}. This proves the integrability required by the space~$\mathcal Y_p$ in~\eqref{eq:full-Q-difference}; local products prove its differentiated and H\"older parts.

Subtracting the two exact constraints and applying~\eqref{eq:full-local-inverse}, below, at the Euclidean seed data set and \eqref{eq:Sz-Szp-bound} to bound $S(z)-S(z')$ yields $\|w\|_{\mathcal X_p}\leq C|z-z'|+C\varepsilon\|w\|_{\mathcal X_p}$. Absorbing the last term gives, for sufficiently small~$\varepsilon$ and a different constant~$C$,
\be
\label{eq:full-projection-difference}
\|X(z)-X(z')\|_{\mathcal X_p}\leq C|z-z'|, \qquad |\mathbf Q(X(z))-\mathbf Q(X(z'))|\leq C\varepsilon|z-z'|\rho^{-p-q-2}.
\ee
The last bound is integrable because $p+q>1$. Integration of the difference of~\eqref{eq:full-Q}, with~\eqref{eq:seed-four-difference} and the normalized harmonic remainders, proves~\eqref{eq:vector-modulator-equation}, with
\be
\label{eq:vector-relative-remainder}
\mathbf N(z)=-\frac1{16\pi}\operatorname{diag}(1,2,2,2) \int_{\RR^3}\bigl(\mathbf Q(X(z))-\mathbf Q(X(0))\bigr)\,dx.
\ee
The same estimate between two parameters gives $L=C\varepsilon<1$ after shrinking the amplitudes.

For $\mathbf q$ sufficiently close to $\mathbf q_{\mathrm{rel}}(0)$ as stated, the map $z\mapsto\mathbf q_{\mathrm{rel}}(0)-\mathbf q+\mathbf N(z)$ preserves the closed parameter ball $|z|\leq r_0$ and contracts by a Lipschitz constant $L<1$. Its fixed point is thus unique. Subtracting the equations for $\mathbf q,\mathbf q'$ gives $(1-L)|z(\mathbf q)-z(\mathbf q')|\leq|\mathbf q-\mathbf q'|$. The ball of harmonic parameters can be fixed first, while $\mathbf q_{\mathrm{rel}}(0)$ tends to zero with the slow amplitudes by harmonic projection; shrinking them permits $\mathbf q=0$.
\ese\end{proof}

%-----------------------------------------------------------------------------------------------------------------------------------------

\subsection{Relative energy-momentum and the total ADM charge map}\label{section-5-prescription}
We compare the selected harmonic parameters~$z$ with the charges of the exact solutions.

Theorem~\ref{thm:vector-selection} prescribes the relative charge between the exact solution and its selected seed data set, while~\eqref{eq:seed-four-difference} gives the relative charge between seeed data sets in the family. By additivity, the relative charge of two \emph{exact} solutions in that family is thus
\be
\label{eq:exact-relative-cancellation}
\mathcal Q_{\mathrm{rel}}(X(z);X(z')) =z+\mathbf q_{\mathrm{rel}}(z)-z'-\mathbf q_{\mathrm{rel}}(z')=\mathbf N(z) - \mathbf N(z').
\ee
The modulator cancels the seed data set's linear charge, regardless of accuracy of the seed (how large $p_A$ is); invertibility of $z\mapsto\mathbf q_{\mathrm{rel}}(z)$ therefore has no bearing on invertibility of the map $z\mapsto\mathcal Q_{\mathrm{rel}}(X(z);X(0))=\mathbf N(z)$ describing the ADM charge of the exact vacuum data set after the seed-to-solution projection.

For the Schwarzschild--Minkowski family, the equations~\eqref{eq:seed-ADM}--\eqref{eq:mass-nonlinear} give the particularly concrete relations
\be
\label{eq:Sch-total-ADM}
E_{\mathrm{ADM}}(g(m,s))=m\langle\chi\rangle+s+\mathfrak m(m,s)=\mathcal N(m,s),\qquad P_{\mathrm{ADM}}=0.
\ee
When $s=s_*$ is adjusted to ensure $\mathfrak m=0$, the total energy is $E_{\mathrm{ADM}}=m\langle\chi\rangle+s_*=\Obig((|m|+|s_*|)^2/R)$, whereas the Schwarzschild angular sector retains mass parameter~$m$. Prescribing total energy~$E_*$ would require solving $\mathcal N(m,s)=E_*$, which is at best only feasible for quadratically small~$E_*$.  We address the question of achieving a chosen total ADM charge in Section~\ref{section-ADM-prescription}, based on compact perturbations instead of the asymptotic harmonic parameters used in Theorem~\ref{thm:vector-selection}.

%============================================================================

\section{Choosing energy-momentum: total ADM charges and compact perturbations}\label{section-ADM-prescription}
In Section~\ref{section-5}, we chose the charges relative to the seed data set by varying harmonic parameters. We now choose the total ADM vector by compactly supported perturbations of a seed data set. Theorem~\ref{thm:effective-ADM} shows which ADM vectors can be obtained from compact perturbations of the Euclidean data set $(\ebf,0)$. The following propositions calculate the quadratic changes in energy and momentum; we then generalize the construction to perturbations of non-trivial data sets such as those involved in Schwarzschild gluing. Figure~\ref{fig:relative-total-charges} compares the relative and total charge constructions.

%-----------------------------------------------------------------------------------------------------------------------------------------

\begin{theorem}[Choice of sufficiently small total ADM vectors near a fixed future timelike direction]\label{thm:effective-ADM}
Fix the exponents, stable localization geometry, and radial cutoff as in Sections~\ref{section-2} and~\ref{section-3-1}. Fix a vector $v_0\in\RR^3$ with $|v_0|<1$. There exist four smooth compact perturbation profiles $(\gamma_\ell,\pi_\ell)$ for $1\leq\ell\leq 4$ with disjoint supports in the interior of~$\Omega$, and constants $\varepsilon_0,\eta>0$, such that every vector $(E_*,P_*)$ of the form
\be
\label{eq:prescribed-ADM-target}
E_*=R\varepsilon,\qquad P_*=R\varepsilon v, \qquad 0<\varepsilon<\varepsilon_0,\qquad |v-v_0|<\eta,
\ee
is the exact ADM energy-momentum of a complete vacuum data set obtained by seed-to-solution projection of a seed data set $(\ebf,0)+\sum_\ell a_\ell(\gamma_\ell,\pi_\ell)$ for some $a\in\RR^4$. The solution equals $(\ebf,0)$ outside~$\Omega$ and inside the gluing domain it admits an expansion
\be
g = \ebf + E_* \mathcal S^\lambda + \eta_g , \qquad
h = \sum_{j=1}^3 P_{*j} \mathcal T^\lambda_j + \eta_h , \qquad
(\eta_g,\eta_h) \in C_{p_\star,\underline P}^{N,\alpha} \times C_{p_\star+1,\underline P}^{N,\alpha} ,
\ee
for $1<p_\star<\min\bigl(p_{3,p}^{\lambda},1+q\bigr)$.
The solution has nonzero second fundamental form even for $v=v_0=0$, which gives non-time-symmetric initial data sets with arbitrary sufficiently small ADM mass.
\end{theorem}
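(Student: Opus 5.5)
The plan is to use the exact identity~\eqref{eq:total-four-integral}. The seed data set $(\ebf,0)+\sum_\ell a_\ell(\gamma_\ell,\pi_\ell)$ is compactly supported, so its projection $(g,h)$ is complete, equals $(\ebf,0)$ outside~$\Omega$, and satisfies $g-\ebf=\Obig(r^{-1})$, $h=\Obig(r^{-2})$. Hence its total ADM vector is $-\bigl(\frac1{16\pi}\int Q_H,\frac1{8\pi}\int Q_M\bigr)$, which is quadratic in~$a$. The seed has zero ADM charges, so Proposition~\ref{prop:ADM-relation} gives $\mathbf q_{\mathrm{rel}}=(E,P)$. The harmonic decay theorem of~\cite{BLF--PLF}, applied as in Theorem~\ref{thm:coupled-projection}, then gives the expansion $g=\ebf+E_*\mathcal S^\lambda+\eta_g$, $h=\sum_jP_{*j}\mathcal T_j^\lambda+\eta_h$ once $(E,P)=(E_*,P_*)$. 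The compact seed terms are absorbed into $\eta_g,\eta_h$. So the problem reduces to a finite-dimensional map $a\mapsto(E,P)(a)$ on a neighbourhood of~$0$ in~$\RR^4$.

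Next I would compute its quadratic part. I would choose profiles with $L\gamma_\ell=0$ and $\operatorname{div}_{\ebf}\pi_\ell=0$, namely compactly supported transverse-traceless tensors (zero Euclidean trace and divergence). Then the linearized constraint error vanishes and the seed error is $\Obig(|a|^2)$. The projection correction $\mathcal DU$ is $\Obig(|a|^2)$ in $\mathcal X_p$ by the local inverse estimate, as in the proof of Theorem~\ref{thm:vector-selection}. Consequently
\[
(E,P)(a)=\mathcal B(a,a)+\Obig(R|a|^3),
\]
where $\mathcal B$ is obtained by evaluating $-\frac1{16\pi}\int Q_H$ and $-\frac1{8\pi}\int Q_M$ on the linear seed $\sum a_\ell(\gamma_\ell,\pi_\ell)$ (the content of Proposition~\ref{lem:TT-quadratic}). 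Because the supports are disjoint, $\mathcal B$ is diagonal: $\mathcal B(a,a)=\sum_\ell a_\ell^2\,w_\ell$ with $w_\ell=(E_\ell,P_\ell)\in\RR^4$. Integrating by parts, $-\int Q(\gamma)$ for TT $\gamma$ reduces to $\frac14\int|\del\gamma|^2>0$ up to sign conventions, while the $h$-terms contribute $\int(|h|^2-\frac12(\Tr h)^2)$. For $P_\ell=-\frac1{8\pi}\int\Gamma h$ to be nonzero, each packet needs both $\gamma_\ell$ and $\pi_\ell$ nonzero and suitably correlated, for instance wave-packet profiles with a chosen propagation direction, as in Proposition~\ref{lem:TT-packets}.

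The main obstacle is constructing four packets whose vectors $w_\ell$ have positive energy and lie in a future timelike cone such that $(1,v_0)$ lies in the interior of their positive cone, $(1,v_0)=\sum c_\ell w_\ell$ with $c_\ell>0$, and such that $w_1,\dots,w_4$ are linearly independent. I would take oscillatory packets $\gamma=\Re(e^{ik\omega\cdot x})\phi\,\epsilon$, projected to TT by a compact Bogovskii-type correction, paired with $\pi\propto\del_t\gamma$ so as to mimic a gravitational wave moving in direction~$\omega$. As $k\to\infty$ this should give $w\approx E(1,\beta\omega)$ with $\beta\in(0,1)$ tunable by the $\gamma$-to-$\pi$ ratio. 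Choosing four directions $\beta\omega_\ell$ forming a simplex around~$v_0$ gives the required cone. Placing the packets in disjoint regions of $\Omega$ away from its boundary, where $\lambda$ is bounded below, keeps all earlier estimates valid.

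Finally I would solve $(E,P)(a)=(R\varepsilon,R\varepsilon v)$. I would substitute $a_\ell=\sqrt{\varepsilon}\,b_\ell$ with $b_\ell>0$ and change variables to $y_\ell=b_\ell^2$. The equation becomes $\sum_\ell y_\ell w_\ell/R+\Obig(\sqrt\varepsilon)=(1,v)$, where the error term is Lipschitz in~$y$ by the difference estimates of Theorem~\ref{thm:vector-selection} (Appendix~\ref{appendix:local-projection}). The linear map is invertible and its solution $y^0(v)$ is positive for $|v-v_0|<\eta$. The contraction mapping theorem (or Brouwer) then gives a solution for $\varepsilon<\varepsilon_0$. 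Since some $a_\ell\neq0$ and $\pi_\ell\neq0$, we get $h\neq0$, even when $v=0$.
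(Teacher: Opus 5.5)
Your proposal is correct and follows essentially the paper's proof. The quadratic ADM charges of compact transverse-traceless packets come from the integral identity~\eqref{eq:total-four-integral}; four high-frequency packets in disjoint balls have charge vectors whose positive cone contains $(1,v)$ for $v$ near $v_0$; and the rescaling $a_\ell=\sqrt{\varepsilon y_\ell}$ reduces the problem to a fixed point near your positive solution $y^0(v)$. The paper differs only in building the packets with the explicit constant-coefficient operator $\mathbb T_S$ rather than a Bogovskii-type correction, and in using nearly null velocities rather than tuning $\beta<1$.

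One caution concerns the fixed-point step. Write $\mathcal R(a)$ for the cubic remainder of the charge map. The first-order difference estimates you cite (from Theorem~\ref{thm:vector-selection}) only make $y\mapsto\varepsilon^{-1}\mathcal R(\sqrt{\varepsilon y})$ Lipschitz with an $\Obig(1)$ constant. A contraction therefore needs the second-order bound $\|D\mathcal R(a)\|\leq C|a|^2$ that the paper uses. Your Brouwer alternative does close the argument, since it needs only continuity and $|\mathcal R(a)|\leq C|a|^3$, and the theorem claims existence only.
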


\begin{figure}
\centering
\includegraphics[width=\linewidth]{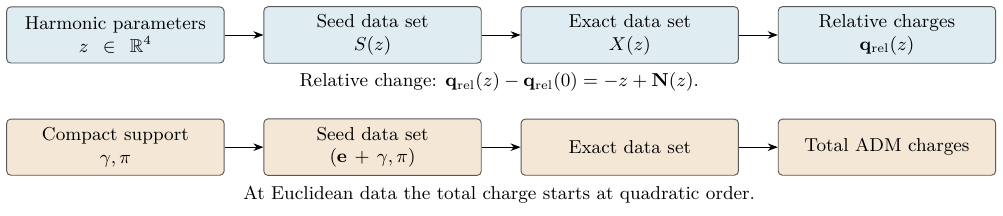}
\caption{Choosing relative and total energy-momentum. Harmonic parameters (Section~\ref{section-5}) vary the relative charges. Compactly supported, trace-free and divergence-free perturbations of Euclidean data (Section~\ref{section-ADM-prescription}) give total ADM charges whose leading change is quadratic. Both constructions apply the seed-to-solution projection to suitably chosen seed data sets; the resulting exact data sets coincide with prescribed data outside~$\Omega$ and retain the slow leading profiles inside~$\Omega$, with differences arising at harmonic radial decay or higher.}
\label{fig:relative-total-charges}
\end{figure}

%-----------------------------------------------------------------------------------------------------------------------------------------

\subsection{Compact transverse-traceless seed data sets and quadratic ADM charges}\label{section-ADM-quadratic}
The seed data sets $(\ebf,0)+\sum_\ell a_\ell(\gamma_\ell,\pi_\ell)$ have vanishing ADM charges, so the ADM charge of the exact data sets comes entirely from the energy-momentum modulators $\mathcal S^\lambda,\mathcal T^\lambda_j$.
It is conveniently obtained as~\eqref{eq:total-four-integral} in terms of quadratic terms~$Q_H,Q_M$, and we now compute these charges for general compact transverse-traceless perturbations $(\gamma,\pi)$ supported in~$\Omega$.

\begin{proposition}[Quadratic ADM energy-momentum]\label{lem:TT-quadratic}
Consider smooth compactly supported symmetric tensors $\gamma,\pi$ in~$\Omega$, both transverse and trace free with respect to the Euclidean metric~$\ebf$:
\be
\label{eq:TT-conditions}
\del_i \gamma_{ij}=0,\quad \Tr \gamma=0,\qquad \del_i\pi^{ij}=0,\quad \Tr\pi=0.
\ee
The seed data set $S(a)=(\ebf+a \gamma,a\pi)$ has constraint error $\Obig(a^2)$ in every fixed accuracy norm. For its seed-to-solution projection~$X(a)$, the total ADM vector has expansion
\be
\label{eq:TT-ADM-quadratic}
\begin{aligned}
E(a) \coloneqq E_{\mathrm{ADM}}(X(a))
& =a^2\mathscr E(\gamma,\pi)+\Obig(a^3),
&\mathscr E(\gamma,\pi)&\coloneqq\frac1{16\pi}\int\bigl(\tfrac14|\del \gamma|^2+|\pi|^2\bigr)dy,\\
P_j(a) \coloneqq P_{\mathrm{ADM}\,j}(X(a))
&=a^2\mathscr P_j(\gamma,\pi)+\Obig(a^3),
&\mathscr P_j(\gamma,\pi)&\coloneqq\frac1{16\pi}\int\pi^{i\ell}\del_j \gamma_{i\ell}\,dy.
\end{aligned}
\ee
For a seed $(\ebf,0)+\sum_\ell a_\ell(\gamma_\ell,\pi_\ell)$ with finitely many such pairs with pairwise disjoint supports, the same expansion holds with $a^2\mathscr E$ replaced by $\sum_\ell a_\ell^2\mathscr E(\gamma_\ell,\pi_\ell)$ and likewise for the momentum.
The vector $(E(a),P_j(a))$ is future-timelike up to $\Obig(|a|^3)$ corrections; in other words $\max(0,-E(a)^2+|P(a)|^2)=\Obig(|a|^5)$.
\end{proposition}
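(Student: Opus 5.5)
We plan to use the exact integral identity \eqref{eq:total-four-integral}, which applies because $X(a)$ is complete on $\RR^3$ and, by the sub-harmonic bounds \eqref{eq:solution-subharmonic} with $p>1/2$, its nonlinear terms $(Q_H,Q_M)$ are integrable. We then expand $X(a)$ to first order in $a$.

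\emph{Step 1: the seed error is $\Obig(a^2)$.} The linearized constraints at $(\ebf,0)$ are $L\gamma=\del_i\del_j\gamma_{ij}-\Delta\Tr\gamma$ and $\operatorname{div}_{\ebf}\pi$, and both vanish by \eqref{eq:TT-conditions}. Thus $\mathcal G(S(a))=\mathbf Q(a\gamma,a\pi)$. This is quadratic in $a$, compactly supported inside $\Omega$ away from its boundary, and hence $\Obig(a^2)$ in every accuracy norm \eqref{eq:full-error-norm}.

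\emph{Step 2: the projection correction is $\Obig(a^2)$.} The projection estimate of Appendix~\ref{appendix:local-projection}, as used in \eqref{eq:full-projection-difference}, bounds $\mathcal DU_a=X(a)-S(a)$ by $C\,\Eappr{p}[S(a)]=\Obig(a^2)$ in $\mathcal X_p$. Therefore $X(a)=(\ebf+a\gamma,a\pi)+\Obig_{\mathcal X_p}(a^2)$.

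\emph{Step 3: expand the charge integrals.} We insert this into \eqref{eq:total-four-integral}. By bilinearity and the product bound \eqref{eq:full-Q-difference}, we get $\int Q_H(X(a))=a^2\int Q_H^{(2)}(\gamma,\pi)+\Obig(a^3)$ and likewise for $Q_M$. Here $Q^{(2)}$ denotes the quadratic parts. The remainder is integrable since both factors decay at least like $\rho^{-p}$ and $p>1/2$.

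\emph{Step 4: compute the quadratic parts.} For the momentum, $Q_M^{(2)j}=\Gamma^{(1)i}{}_{i\ell}\pi^{\ell j}+\Gamma^{(1)j}{}_{i\ell}\pi^{i\ell}$. The first term vanishes because $\Gamma^{(1)i}{}_{i\ell}=\tfrac12\del_\ell\Tr\gamma=0$. The second term contracts $\pi$ with $\tfrac12(\del_i\gamma_{j\ell}+\del_\ell\gamma_{ij}-\del_j\gamma_{i\ell})$. By symmetry of $\pi$, integration by parts and $\del_i\pi^{ij}=0$, only $-\tfrac12\int\pi^{i\ell}\del_j\gamma_{i\ell}$ survives. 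Multiplying by $-1/(8\pi)$ gives $\mathscr P_j$. For the energy, $Q_H^{(2)}=Q^{(2)}(\gamma)-|\pi|^2$, since $\Tr\pi=0$. We compute the standard second variation of scalar curvature at $\ebf$ for a TT tensor: modulo divergences, $\int Q^{(2)}(\gamma)=-\tfrac14\int|\del\gamma|^2$. We obtain this by expanding $R$ to second order and discarding the terms involving $\del_i\gamma_{ij}$ or $\Tr\gamma$. The total-derivative terms integrate to zero because of compact support; this is the delicate point. Multiplying by $-1/(16\pi)$ gives $\mathscr E$.

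\emph{Step 5: several pairs.} With disjoint supports, the cross terms in $Q^{(2)}$ vanish pointwise. The $\Obig(|a|^3)$ remainder is again controlled by Step 2.

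\emph{Step 6: future-timelike up to $\Obig(|a|^5)$.} For each pair, Cauchy--Schwarz gives $|\int\pi^{i\ell}\del_j\gamma_{i\ell}|\le\|\pi\|_{L^2}\|\del\gamma\|_{L^2}$, more precisely for the vector: $|\mathscr P|\le\|\pi\|\,\|\del\gamma\|\le\tfrac14\|\del\gamma\|^2+\|\pi\|^2$ by AM--GM. Hence $|\mathscr P|\le\mathscr E$, and this survives summation over disjoint pairs. Then $|P|^2-E^2\le(\sum a_\ell^2\mathscr E_\ell+\Obig(|a|^3))^2-(\sum a_\ell^2\mathscr E_\ell+\Obig(|a|^3))^2$ up to cross terms. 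The leading $a^4$ terms give a nonpositive contribution, so the positive part is $\Obig(|a|^2\cdot|a|^3)=\Obig(|a|^5)$.

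The main obstacle is Step 4, namely the careful second-order expansion of $R_{\ebf+a\gamma}$ and confirming the coefficient $\tfrac14$ with the correct sign and integration by parts.
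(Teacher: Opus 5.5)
Your proposal is correct and follows the paper's proof essentially step by step. The shared ingredients are the integral identity \eqref{eq:total-four-integral}, an $\Obig(a^2)$ projection correction so that substituting the linear pair into the quadratic integrand costs only $\Obig(a^3)$, the vanishing of $\Gamma_1{}^i{}_{i\ell}$ together with integration by parts against $\del_i\pi^{i\ell}=0$ for the momentum, and the bound $|\mathscr P|\leq\mathscr E$ (the paper proves it by a pointwise AM--GM variant of your inequality, in Proposition~\ref{lem:TT-packets}). For the energy step you left implicit, the paper uses $\Ric_L\gamma=-\tfrac12\Delta\gamma$ and $\int\Gamma_1{}^a{}_{ib}\Gamma_1{}^b{}_{ia}\,dy=-\tfrac14\int|\del\gamma|^2dy$, which together give $\int\Qsc(\gamma)=-\tfrac14\int|\del\gamma|^2$; note that the second identity needs an integration by parts of the cross term $\int\del_b\gamma_{ia}\del_a\gamma_{ib}\,dy$, not merely a pointwise discarding of terms containing $\del_i\gamma_{ij}$ or $\Tr\gamma$.
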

The proof is given in Appendix~\ref{appendix:quadratic-ADM}.

\begin{proposition}[Compact profiles with prescribed approximate velocity]\label{lem:TT-packets}
Fix an open Euclidean ball in~$\RR^3$ and a unit vector $n$. There exist transverse-traceless $(\gamma,\pi)$ supported in this ball, with $\mathscr E=1$ and $\mathscr P$ arbitrarily close to~$n$. For every such pair $|\mathscr P|\leq\mathscr E$.
\end{proposition}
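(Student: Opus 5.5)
The proof has two parts: the bound $|\mathscr P|\leq\mathscr E$, which follows from Cauchy--Schwarz and AM--GM, and a high-frequency wave-packet construction that makes this bound nearly sharp in a prescribed direction.

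\emph{Step 1: the inequality.} If $\mathscr P=0$ there is nothing to prove. Otherwise let $n\coloneqq\mathscr P/|\mathscr P|$. By the definition~\eqref{eq:TT-ADM-quadratic},
\[
|\mathscr P|=n^j\mathscr P_j=\frac1{16\pi}\int\pi^{i\ell}\,(n\cdot\del)\gamma_{i\ell}\,dy .
\]
Pointwise we have $|\pi\cdot\del_n\gamma|\leq|\pi|\,|\del_n\gamma|\leq|\pi|^2+\tfrac14|\del_n\gamma|^2$. Moreover $|\del_n\gamma|^2\leq\sum_j|\del_j\gamma|^2=|\del\gamma|^2$, by Cauchy--Schwarz in the direction index. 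Integrating gives $|\mathscr P|\leq\mathscr E$.

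This argument also says when the bound is nearly sharp. We need $\pi\approx\tfrac12\del_n\gamma$, and $\del\gamma$ must be concentrated in the $n$-direction, i.e. $|\del\gamma|^2\approx|\del_n\gamma|^2$. This suggests a high-frequency wave packet travelling along~$n$.

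\emph{Step 2: compactly supported TT tensors.} I will use the linearized Cotton--York operator $\mathcal{CY}$ at the Euclidean metric. It is a third-order constant-coefficient operator. Applied to a symmetric tensor $H$, it produces a symmetric, trace-free and divergence-free tensor; this identity holds for every $H$, not only for solutions of an equation. Since the operator is differential, $\supp\mathcal{CY}(H)\subseteq\supp H$.

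Fix a cutoff $\phi\in C_c^\infty$ of the given ball with $\phi\not\equiv0$. Fix a constant polarization $\epsilon$ that is symmetric, trace-free and transverse to~$n$, for example $e_1\otimes e_1-e_2\otimes e_2$ in an orthonormal frame $(e_1,e_2,n)$. Set $H_k\coloneqq\phi\,\operatorname{Re}(\epsilon\,e^{ik\,n\cdot x})$ and
\[
\gamma_k\coloneqq k^{-3}\mathcal{CY}(H_k),\qquad \pi_k\coloneqq\tfrac12\del_n\gamma_k .
\]
Since $\del_n$ commutes with the TT conditions~\eqref{eq:TT-conditions}, both $\gamma_k$ and $\pi_k$ are TT and supported in the ball.

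The principal symbol of $\mathcal{CY}$ at $kn$ acts on transverse-traceless polarizations as $k^3$ times an invertible map, essentially $\epsilon\mapsto(n\times\cdot)\epsilon$ up to a factor and symmetrization. Hence
\[
\gamma_k=\phi\,\operatorname{Re}(c\,\epsilon'e^{ik\,n\cdot x})+\Obig(k^{-1})
\]
in $C^1$, with $\epsilon'\neq0$. Differentiating gives $\del_j\gamma_k=n_j\,\del_n\gamma_k+\Obig(1)$, while $|\del_n\gamma_k|\sim k$.

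\emph{Step 3: asymptotics and normalization.} With $\pi_k=\tfrac12\del_n\gamma_k$ we get
\[
16\pi\,\mathscr E=\tfrac12\int|\del_n\gamma_k|^2\,(1+\Obig(k^{-1})),\qquad 16\pi\,\mathscr P_j=\tfrac12\int\del_n\gamma_k\cdot\del_j\gamma_k=n_j\,16\pi\,\mathscr E\,(1+\Obig(k^{-1})).
\]
The leading integral is of order $k^2\int\phi^2>0$, so $\mathscr E>0$ for large~$k$. Both $\mathscr E$ and $\mathscr P$ are quadratic in the pair $(\gamma,\pi)$, so rescaling the pair by $\mathscr E^{-1/2}$ gives $\mathscr E=1$ and $\mathscr P=n+\Obig(k^{-1})$, which is arbitrarily close to~$n$.

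\emph{Main obstacle.} The delicate step is Step~2: justifying that $\mathcal{CY}(H)$ is TT for \emph{every} symmetric $H$, and that its principal symbol is nondegenerate on the chosen polarization. I would verify both by computing the Fourier symbol directly. The divergence-free and trace-free properties are linear identities in the frequency $\xi$, and nondegeneracy is a check on the transverse-traceless plane orthogonal to~$n$.
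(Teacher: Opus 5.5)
Your proposal is correct and follows essentially the paper's route: the same pointwise bound $|\pi||\del_n\gamma|\leq|\pi|^2+\tfrac14|\del_n\gamma|^2$ for $|\mathscr P|\leq\mathscr E$, and the same compactly supported high-frequency packet along $n$ with polarization transverse to $n$, $\pi=\tfrac12\del_n\gamma$, and normalization by $\mathscr E^{-1/2}$. The only difference is the operator that produces TT tensors and some bookkeeping. The paper uses the fourth-order operator $\mathbb T_S$ built from $\mathbb P_{ij}=-\Delta\delta_{ij}+\del_i\del_j$, whose symbol at $\omega n$ is $\omega^4$ times the TT projection and so keeps the polarization $S$. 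Your linearized Cotton--York operator is third order and sends $\epsilon$ to a multiple of $n\times\epsilon$, which is equally valid. Note that your remainder is $\Obig(k^{-1})$ only in $C^0$; its derivatives are $\Obig(1)$, which is exactly what your Step~3 uses.
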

The proof is given in Appendix~\ref{appendix:compact-TT}.

%-----------------------------------------------------------------------------------------------------------------------------------------

\subsection{Proof of Theorem~\ref{thm:effective-ADM}}\label{section-ADM-targets}

To reduce clutter we work throughout in units with $R=1$.
Rescaling by $g(x)=\widehat g(x/R)$, $h(x)=R^{-1}\widehat h(x/R)$ multiplies the constraints by $R^{-2}$ and both fluxes by~$R$, which restores the factors of~$R$ in~\eqref{eq:prescribed-ADM-target}.

We choose four disjoint compact profiles whose quadratic charges $(\mathscr E,\mathscr P_j)$ span a neighborhood of the prescribed timelike direction. We then solve the exact charge equation by contraction after rescaling the amplitudes by the square root of the target energy.

\begin{proof}[Proof of Theorem~\ref{thm:effective-ADM}]\bse
We first choose four unit vectors whose convex hull contains~$v_0$ in its interior. An explicit choice is obtained by writing $v_0=t n$ with $0\leq t<1$, taking one vertex $-n$, and three equally spaced vertices on the latitude $n\cdot y=c$ with $t<c<1$. The barycentric coefficients of~$v_0$ are $(c-t)/(1+c)$ for $-n$ and $(1+t)/(3(1+c))$ for each other vertex, all positive. For $v_0=0$ any unit~$n$ may be used.

We take four disjoint balls in the fixed interior of~$\Omega$. By Proposition~\ref{lem:TT-packets}, we choose normalized pairs $(\gamma_\ell,\pi_\ell)$ there with velocities $v_\ell$ sufficiently close to these vertices. The matrix $B$ with columns $b_\ell\coloneqq(1,v_\ell)\in\RR^4$ is invertible, and $c(v)\coloneqq B^{-1}(1,v)$ has positive components for all $v$ in a neighborhood of~$v_0$. We consider the seed data sets
\be
\label{eq:four-compact-seeds}
S(a)\coloneqq\Bigl(\ebf+\sum_{\ell=1}^4a_\ell \gamma_\ell,\ \sum_{\ell=1}^4a_\ell\pi_\ell\Bigr)
\ee
and their seed-to-solution projection~$X(a)$.
Their constraint errors are quadratic in~$a$ and compactly supported, hence $p_A$ can be taken arbitrarily large.  Proposition~\ref{lem:TT-quadratic} gives the exact charge map
\be
\label{eq:effective-charge-map}
F(a)\coloneqq(E_{\mathrm{ADM}},P_{\mathrm{ADM}})(X(a))=\sum_{\ell=1}^4a_\ell^2 b_\ell+\mathcal R(a),\qquad |\mathcal R(a)|\leq C|a|^3,\qquad \|D\mathcal R(a)\|\leq C|a|^2,
\ee
where $D$ denotes derivatives with respect to~$a$.
The choice $a_\ell^2=\varepsilon c(v)_\ell$ would make the first term equal to the desired ADM energy-momentum $\varepsilon (1,v)$.
We thus seek $a_\ell=\sqrt{\varepsilon y_\ell}$, with $y$ close to $c(v)$, characterized as a fixed point of
\be
\label{eq:effective-ADM-fixed-point}
y \mapsto c(v)-B^{-1}\varepsilon^{-1}\mathcal R(\sqrt{\varepsilon y}).
\ee
For $y$ in a ball of small ($\varepsilon$-independent) radius around~$c(v)$, all components are bounded away from zero, so this map makes sense, and the remainder $B^{-1}\varepsilon^{-1}\mathcal R(\sqrt{\varepsilon y})$ and its $y$ derivative are bounded by $C\sqrt\varepsilon$. For sufficiently small~$\varepsilon$, the map~\eqref{eq:effective-ADM-fixed-point} preserves the smaller closed ball centered at~$c(v)$ of radius $C\sqrt\varepsilon$, and contracts distances since derivatives are bounded by $C\sqrt\varepsilon<1$. Thus, there exists a fixed point~$y$, with positive components, which gives a solution with the prescribed total ADM charges~\eqref{eq:prescribed-ADM-target} when $R=1$ (and by rescaling, all $R>0$).

The momentum tensor of the projected data is $\sum_\ell a_\ell\pi_\ell+\Obig(\varepsilon)$. At a nonzero point of one disjointly supported $\pi_\ell$, its leading size is a positive multiple of $\sqrt\varepsilon$; hence the second fundamental form $K\neq0$ for small $\varepsilon>0$.

For a prescribed invariant mass $M_*>0$ and a fixed $|v|<1$ in this neighborhood, we take $E_*=M_*/\sqrt{1-|v|^2}$ and $P_*=E_*v$ in~\eqref{eq:prescribed-ADM-target}. This imposes that mass exactly whenever $E_*/R<\varepsilon_0$. The special choice $v=v_0=0$ gives $E_*=M_*$ and zero total momentum, while the second fundamental form remains nonzero.
\ese\end{proof}

%-----------------------------------------------------------------------------------------------------------------------------------------

\subsection{Prescribed ADM energy-momentum with fixed exterior data}\label{section-ADM-background}
Instead of perturbing the Euclidean data set $(\ebf,0)$ as in Theorem~\ref{thm:effective-ADM}, we now start from a small exact background solution with some decay exponent $p_G\in(0,1]$, obtained for instance from our Schwarzschild gluing in Section~\ref{section-3} or slow-decaying data sets in Section~\ref{section:slow-coupled-projection}.  We add compact perturbations to control relative ADM charges or absolute ADM charges when they are well-defined.
The relative or absolute ADM energy-momentum vectors produced in this way are necessarily future-timelike.
This construction retains the slowly decaying leading terms or the prescribed exterior Schwarzschild metric.

\begin{proposition}[Choice of small relative or total ADM charges with a background]\label{prop:fixed-background-ADM}
Fix the exponents, stable localization geometry, and radial cutoff as in Sections~\ref{section-2} and~\ref{section-3-1}, with a geometric decay exponent $p_G=q\in(0,1]$ and a projection exponent~$p$ satisfying $p+q>1$.
Also fix a vector $v_0\in\RR^3$ with $|v_0|<1$.  There exist four smooth compact perturbation profiles $(\gamma_\ell,\pi_\ell)$ for $1\leq\ell\leq 4$ with disjoint supports in the interior of~$\Omega$, and constants $c_0,\varepsilon_0,\eta>0$ with the following property.
For any complete $C^{N,\alpha}$ vacuum background $X_0=(g_0,h_0)$ that is asymptotically flat and such that the norm $\nu_0\coloneqq \|(g_0-\ebf,h_0)\|_{C^{N,\alpha}_q\times C^{N,\alpha}_{q+1}}$ is bounded as $\nu_0\leq c_0 R^q\sqrt\varepsilon_0$, any vector $(E_*,P_*)$ of the form~\eqref{eq:prescribed-ADM-target} with an additional lower bound on~$\varepsilon$, namely
\be
\label{eq:prescribed-ADM-target-bis}
E_*=R\varepsilon,\qquad P_*=R\varepsilon v, \qquad (R^{-q}\nu_0/c_0)^2 < \varepsilon<\varepsilon_0,\qquad |v-v_0|<\eta,
\ee
is the relative ADM energy-momentum of a complete vacuum data set~$X(a)$ obtained as
the seed-to-solution projection of $(g_0,h_0)+\sum_\ell a_\ell(\gamma_\ell,\pi_\ell)$ for some $a\in\RR^4$, namely
\be
\label{eq:fixed-background-relative-target}
\mathcal Q_{\mathrm{rel}}(X(a);X_0)=(E_*,P_*).
\ee
The solution $X(a)$ equals $X_0$ outside the gluing domain, and within~$\Omega$ it differs by terms that decay at harmonic rate, specifically $X(a)=(g(a),h(a))$ admits the expansion
\be
g(a) = g_0 + E_* \mathcal S^\lambda + \eta_g , \qquad
h(a) = h_0 + \sum_{j=1}^3 P_{*j} \mathcal T^\lambda_j + \eta_h , \qquad
(\eta_g,\eta_h) \in C_{p_\star,\underline P}^{N,\alpha} \times C_{p_\star+1,\underline P}^{N,\alpha} .
\ee
Moreover, if $q>1/2$, one can \emph{instead} prescribe the absolute ADM vector to be $\mathcal Q_{\mathrm{rel}}(X(a);(\ebf,0))=(E_*,P_*)$.
\end{proposition}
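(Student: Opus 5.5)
The plan is to repeat the proof of Theorem~\ref{thm:effective-ADM} with the Euclidean reference replaced by the background $X_0$, working in units $R=1$ and rescaling at the end. First, choose four unit vectors whose convex hull contains $v_0$ in its interior. By Proposition~\ref{lem:TT-packets}, choose normalized transverse-traceless pairs $(\gamma_\ell,\pi_\ell)$ in four disjoint balls in the interior of $\Omega$ with velocities $v_\ell$, so that $B=(b_\ell)$, $b_\ell=(1,v_\ell)$, is invertible and $c(v)=B^{-1}(1,v)$ has positive components near $v_0$. Then consider the seeds $S(a)=X_0+\sum_\ell a_\ell(\gamma_\ell,\pi_\ell)$. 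Since $X_0$ is exact, their constraint errors are compactly supported in the four balls. They consist of a quadratic part $\Obig(a^2)$ and a cross part $\Obig(|a|\,\nu_0)$ coming from the linearization at $X_0$ minus that at $(\ebf,0)$. The seed-to-solution projection $X(a)$ therefore exists for $|a|+\nu_0$ small, and equals $X_0$ outside $\Omega$. Because the seed differs from $X_0$ only by compact terms, the expansion with $E_*\mathcal S^\lambda$, $P_{*j}\mathcal T_j^\lambda$ follows from Proposition~\ref{prop:ADM-relation} once the relative charge is identified.

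The key step is the charge map $F(a)\coloneqq\mathcal Q_{\mathrm{rel}}(X(a);X_0)$. Subtract the exact constraints~\eqref{eq:full-Q} for $X(a)$ and $X_0$ and integrate over large balls. The difference of nonlinear terms $\mathbf Q(X(a))-\mathbf Q(X_0)$ is integrable because $p+q>1$, exactly as in~\eqref{eq:full-projection-difference}. This gives $F(a)$ as $-\operatorname{diag}(1/16\pi,1/8\pi,\dots)$ times that integral. Expand it as the Euclidean quadratic form of Proposition~\ref{lem:TT-quadratic}, $\sum_\ell a_\ell^2 b_\ell$, plus a remainder. The remainder consists of cubic terms $\Obig(|a|^3)$ and cross terms bilinear in $a$ and $X_0$. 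I aim for the bounds $|\mathcal R(a)|\le C(|a|^3+\nu_0|a|+\nu_0|a|^2)$ and $\|D\mathcal R(a)\|\le C(|a|^2+\nu_0)$. Both are obtained by differentiating the projection in $a$ as in the proof of Theorem~\ref{thm:vector-selection}, using $\|X(a)-X(a')\|_{\mathcal X_p}\le C|a-a'|$ and the smallness of $X_0$ in $C^{N,\alpha}_q$.

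\textbf{The main obstacle is the term linear in $a$ with coefficient $\nu_0$.} It may not vanish: compact TT perturbations are in the kernel of the linearization at $(\ebf,0)$ but not at $X_0$. After substituting $a_\ell=\sqrt{\varepsilon y_\ell}$ as in~\eqref{eq:effective-ADM-fixed-point}, this term contributes $\varepsilon^{-1}\nu_0\sqrt\varepsilon=\nu_0/\sqrt\varepsilon$ to the fixed-point map. Its $y$-derivative is of the same order. This is precisely where the hypothesis $(\nu_0/c_0)^2<\varepsilon$, i.e.\ $\nu_0<c_0\sqrt\varepsilon$, enters: choosing $c_0$ small makes the map $y\mapsto c(v)-B^{-1}\varepsilon^{-1}\mathcal R(\sqrt{\varepsilon y})$ a contraction of a small ball around $c(v)$. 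The cubic part contributes $C\sqrt\varepsilon$, controlled by $\varepsilon<\varepsilon_0$. The fixed point gives positive $y$ and $F(a)=\varepsilon(1,v)$. The condition $\nu_0\le c_0\sqrt{\varepsilon_0}$ just makes the admissible interval of $\varepsilon$ nonempty. The background also enters the projection thresholds, which are fixed first.

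Finally, if $q>1/2$, both $X_0$ and $X(a)$ have absolute ADM charges by Proposition~\ref{prop:ADM-relation}. Additivity gives $\mathcal Q_{\mathrm{rel}}(X(a);(\ebf,0))=F(a)+(E,P)(X_0)$, and by~\eqref{eq:total-four-integral} $|(E,P)(X_0)|\le C\nu_0^2$. Redefine the target as $(E_*,P_*)-(E,P)(X_0)=\varepsilon(1,v)+\Obig(\nu_0^2)$, which equals $\varepsilon'(1,v')$ with $\varepsilon'=\varepsilon(1+\Obig(c_0^2))$ and $|v'-v|=\Obig(c_0^2)$. After shrinking $\eta$ and $c_0$, this modified target still lies in the admissible range, and the same fixed-point argument applies. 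Rescaling by $g(x)=\widehat g(x/R)$, $h(x)=R^{-1}\widehat h(x/R)$ restores the factors $R^q$ in $\nu_0$ and $R$ in $E_*,P_*$.
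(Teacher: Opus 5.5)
Your proposal is correct and follows essentially the same route as the paper. It uses the same four compact transverse-traceless profiles and the seed $X_0+\sum_\ell a_\ell(\gamma_\ell,\pi_\ell)$, together with the charge map $\sum_\ell a_\ell^2 b_\ell+\mathcal R_{X_0}(a)$, where $|\mathcal R_{X_0}(a)|\leq C(\nu_0|a|+|a|^3)$ and $\|D\mathcal R_{X_0}(a)\|\leq C(\nu_0+|a|^2)$, the substitution $a_\ell=\sqrt{\varepsilon y_\ell}$ with $\nu_0<c_0\sqrt\varepsilon$ absorbing the $\nu_0/\sqrt\varepsilon$ term, and subtraction of the $\Obig(\nu_0^2)$ background charge when $q>1/2$. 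One caveat: the derivative bound needs the smooth, second-order dependence of the projection on $a$ (through the resolvent identity~\eqref{eq:inverse-parameter-derivative}), because the Lipschitz bound $\|X(a)-X(a')\|_{\mathcal X_p}\leq C|a-a'|$ alone only yields $\|D\mathcal R\|\leq C(|a|+\nu_0)$, which is not small after the $\sqrt\varepsilon$ rescaling.
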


\begin{proof}
Without loss of generality we work at $R=1$.  We use notation from the proof of Theorem~\ref{thm:effective-ADM}.  There, we built four compact profiles $(\gamma_\ell,\pi_\ell)$ with $1\leq\ell\leq 4$ whose quadratic charges $(\mathscr E,\mathscr P)=b_\ell=(1,v_\ell)$ are such that $v_0$ belongs to the interior of the convex hull of the~$v_\ell$, we introduced the matrix~$B$ with columns~$b_\ell$ and $c(v)=B^{-1}(1,v)$, which has positive components for $v$ sufficiently close to~$v_0$ as stated in~\eqref{eq:prescribed-ADM-target-bis}.

We consider the seed data set $S(a)\coloneqq X_0+\sum_\ell a_\ell(\gamma_\ell,\pi_\ell)$ and its seed-to-solution projection~$X(a)$. Since $X_0$ is exact, the constraint error of $S(a)$ is compact and bounded by $C(\nu_0|a|+|a|^2)$ in~$\mathcal Y_p$. The projection correction $X(a)-S(a)$ has the same bound in~$\mathcal X_p$. Thus the difference $X(a)-X_0$ belongs to $\mathcal X_p$, and the integrals of the nonlinear differences are finite: cross terms with the background have order $r^{-p-q-2}$, while products of differences have order $r^{-2p-2}$. Both are integrable by $p+q>1$ and $p>1/2$.

The inverse and parameter-derivative estimates of Proposition~\ref{lem:TT-quadratic} are uniform for small backgrounds. At $a=0$, the linearized constraint $D\mathcal G_{X_0}(\gamma_\ell,\pi_\ell)$ is $\Obig(\nu_0)$, since its Euclidean value vanishes; the inverse gives an $\Obig(\nu_0)$ first projection derivative. Every change in a linear or quadratic charge coefficient contains this derivative or a background factor, and is $\Obig(\nu_0)$ by the preceding radial integrals. The same product estimates bound third derivatives uniformly. Thus the relative ADM energy-momentum of $X(a)$ with respect to~$X_0$ is
\be
\label{eq:background-charge-estimate}
F_{X_0}(a)=\sum_\ell a_\ell^2b_\ell+\mathcal R_{X_0}(a),\qquad |\mathcal R_{X_0}(a)|\leq C(\nu_0|a|+|a|^3),\quad \|D\mathcal R_{X_0}(a)\|\leq C(\nu_0+|a|^2).
\ee
With $a_\ell=\sqrt{\varepsilon y_\ell}$, we seek fixed points of $y\mapsto c(v)-B^{-1}\varepsilon^{-1}\mathcal R_{X_0}(\sqrt{\varepsilon y})$, whose second term and its first derivative are bounded by $C(\nu_0/\sqrt\varepsilon+\sqrt\varepsilon)$. Choosing $c_0$ and $\varepsilon_0$ small makes this bound small (since $\nu_0<c_0\sqrt\varepsilon$), which lets the fixed-point proof near~\eqref{eq:effective-ADM-fixed-point} apply, proving~\eqref{eq:fixed-background-relative-target}.

For $q>1/2$,~\eqref{eq:total-four-integral} bounds the background charge by $C\nu_0^2$. Subtracting it from an absolute target changes the normalized equation by at most $C\nu_0^2/\varepsilon$, small after reducing $c_0$; the same contraction proof applies. The seed data set changes only on the four compact supports, and projection adds harmonic terms and super-harmonic remainders, faster than the leading pair for $q<1$. Localization preserves the exterior data.
\end{proof}

For fixed $1/2<q\leq 1$, nonzero slow metric and momentum profiles with amplitude proportional to $\varepsilon$ give $\nu_0=\Obig(\varepsilon)<c_0\sqrt\varepsilon$. We can therefore retain these profiles and prescribe any sufficiently small ADM vector in the chosen future timelike direction. For $0<q\leq1/2$, we can still choose the relative energy and momentum as in~\eqref{eq:fixed-background-relative-target}, but convergence of the absolute ADM fluxes requires the fine-tuning explained below~\eqref{eq:flux-polynomials}.

It is also instructive to specialize the previous result to the Schwarzschild gluing of Section~\ref{section-3}.
In that context, we consider the vacuum data set $(g,h)$ obtained by seed-to-solution projection from the sum of the exact solution $g_{m_1,m_2}$ in Theorem~\ref{thm:fixed-radius} and some compact profiles.  Since $g_{m_1,m_2}$ itself is obtained by projection from the interpolating metric $\seedg(m_1,m_2)$ in~\eqref{eq:global-seed}, the exact same data set $(g,h)$ is obtained by seed-to-solution projection from the sum of $\seedg(m_1,m_2)$ and the compact profiles.  At infinity, this seed metric in turn is $g_{m_1,m_2}^{[1]}+\Obig(r^{-2})$.

\begin{proposition}[Schwarzschild gluing and total ADM mass]\label{prop:Sch-mass-prescribed}
Fix the exponents, localization geometry, cutoffs, as in Theorem~\ref{thm:fixed-radius}, including the reference radius $R>0$ and dimensionless mass threshold $\mu_*>0$.  There exists small constants $c_0,\varepsilon_0>0$ such that for any mass parameters $M_*,m_1,m_2$ with
\be
\Bigl(\frac{|m_1|+|m_2|}{c_0R}\Bigr)^2 < \frac{M_*}{R} < \varepsilon_0,
\ee
there exists a complete asymptotically flat vacuum data set~$(g,h)$ with the following properties.  It coincides with the prescribed Schwarzschild data sets $(g_{m_1}^{\mathrm{Sch}},0)$ and $(g_{m_2}^{\mathrm{Sch}},0)$ in the two exterior regions outside~$\Omega$ and having total ADM energy~$M_*$ and total momentum~$0$.
It admits an expansion
\be
g=g_{m_1,m_2}^{[1]}+\mathfrak m\,\mathcal S^\lambda+\eta_g, \qquad
\eta_g \in C^{N,\alpha}_{p_\star,\underline P} , \qquad
h \in C^{N,\alpha}_{p_\star+1,\underline P} ,
\ee
where (with $\mathcal F$ defined in~\eqref{eq:relative-flux})
\be
\mathfrak m = M_* - \mathcal F(g_{m_1,m_2}^{[1]}) .
\ee
The data set $(g,h)$ can be chosen to have nonzero second fundamental form or alternatively to be time-symmetric so that $g$ is a scalar flat metric and $h=0$.
As an example, for $M_* = \mathcal F(g_{m_1,m_2}^{[1]})$ this gives a scalar flat metric $(g,0)$ that coincides super-harmonically with the interpolation metric $(g_{m_1,m_2}^{[1]},0)$.
As another example, for sufficiently small masses $m_1,m_2$ with $m_1>0$ one can select $M_*=m_1$ so that the total ADM mass coincides with the Schwarzschild mass in one angular region.

More generally, for each fixed $v_0\in\RR^3$ with $|v_0|<1$, one can choose data sets with nonzero second fundamental form, total ADM energy~$m$, and total ADM momentum~$mv_0$, while retaining the Schwarzschild metric of mass parameter~$m$ in the prescribed exterior region.
\end{proposition}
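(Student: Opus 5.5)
The plan is to deduce the statement from Proposition~\ref{prop:fixed-background-ADM}, taking $X_0=(g_{m_1,m_2},0)$ from Theorem~\ref{thm:fixed-radius} as the background, with $q=p_G=1$. Since $p\in(1/2,1)$, we have $p+q>1$, and since $q=1>1/2$ we may use the variant of that proposition which prescribes the \emph{absolute} ADM vector.

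\emph{Step 1: background size.} By \eqref{eq:seed-bounds} and \eqref{eq:projected-expansion}, the background norm obeys $\nu_0=\|g_{m_1,m_2}-\ebf\|_{C^{N,\alpha}_1}\leq C(|m_1|+|m_2|)$. The remainder $\eta_g$ and the harmonic term $\mathfrak m(m_1,m_2)\mathcal S^\lambda$ are both controlled linearly by the masses. Hence the hypothesis $\nu_0\leq c_0R\sqrt{\varepsilon_0}$ and the lower bound in \eqref{eq:prescribed-ADM-target-bis} reduce, after shrinking $c_0$ by the constant $C$, to the stated condition $\bigl((|m_1|+|m_2|)/(c_0R)\bigr)^2<M_*/R<\varepsilon_0$.

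\emph{Step 2: the two versions.} Apply Proposition~\ref{prop:fixed-background-ADM} with $v_0=0$ and target $(E_*,P_*)=(M_*,0)$. This yields a complete vacuum $X(a)$ with nonzero second fundamental form, equal to $X_0$ outside $\Omega$; in particular it equals the two Schwarzschild data there.

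For the time-symmetric alternative, the cleanest route is to avoid the compact TT profiles entirely. I would instead use the harmonic carrier $sT_f$ of Section~\ref{section-5-1} in the two-mass seed, together with the transverse-traceless metric profiles $\gamma_\ell$ with $\pi_\ell=0$. By Proposition~\ref{lem:TT-quadratic}, such a profile gives the energy $a^2\mathscr E(\gamma,0)>0$, and the momentum is exactly zero by time symmetry. A one-parameter contraction as in \eqref{eq:effective-ADM-fixed-point}, with background error $\Obig(\nu_0|a|+\nu_0^2)$, then reaches any energy $M_*$ in the stated range. For the special case $M_*=\mathcal F(g^{[1]}_{m_1,m_2})$, I would alternatively invoke Corollary~\ref{thm:modulator-selection}, extended to two masses, with $s$ tuned so that $\mathfrak m=0$.

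\emph{Step 3: expansion and the value of $\mathfrak m$.} The solution equals the projection of $\seedg(m_1,m_2)+\sum_\ell a_\ell(\gamma_\ell,\pi_\ell)$, as remarked before the statement. By \eqref{eq:finite-expansion}, $\seedg-g^{[1]}_{m_1,m_2}=\Obig(r^{-2}\log^K r)$ on the end, and this can be absorbed into $\eta_g$ because $p_\star<2$. Harmonic decay then gives $g=g^{[1]}_{m_1,m_2}+\mathfrak m\mathcal S^\lambda+\eta_g$, while $h$ consists of $\sum_j\mathfrak J_j\mathcal T^\lambda_j$ plus a super-harmonic remainder. Zero total momentum, together with zero momentum of the seed, forces $\mathfrak J=0$, so that $h\in C^{N,\alpha}_{p_\star+1,\underline P}$. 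Taking fluxes via Proposition~\ref{prop:ADM-relation} gives $M_*=\mathcal F(g^{[1]}_{m_1,m_2})+\mathfrak m$, which is the claimed formula. The example $M_*=m_1$ is admissible once the masses are small enough that $(|m_1|+|m_2|)^2/(c_0^2R)<m_1$; this holds, for instance, when $|m_2|\lesssim m_1$ and $m_1/R$ is small.

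\emph{Step 4: moving frames.} For the final claim, take $v_0$ with $|v_0|<1$, set $m=m_1$ as the retained parameter, and choose the target $(E_*,P_*)=(m,mv_0)$. This requires $m/R<\varepsilon_0$ and $m>(|m_1|+|m_2|)^2/(c_0^2R)$, which again hold for small masses with $m_1>0$ dominating.

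The main obstacle is Step~1 together with the time-symmetric variant. One must check that the constants $c_0,\varepsilon_0$ of Proposition~\ref{prop:fixed-background-ADM} are uniform over the whole family of Schwarzschild backgrounds, and that the remainder estimate \eqref{eq:background-charge-estimate} holds with $\nu_0$ linear in the masses. Both follow from the uniformity statements of Theorem~\ref{thm:fixed-radius} and Appendix~\ref{appendix:local-projection}.
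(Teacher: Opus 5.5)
Your core argument is the paper's. You take the Schwarzschild gluing solution of Theorem~\ref{thm:fixed-radius} as background in Proposition~\ref{prop:fixed-background-ADM}, using the absolute-charge version since the weight exceeds $1/2$. You observe that the background size is linear in $(|m_1|+|m_2|)/R$, so the hypothesis $\nu_0<c_0\sqrt{\varepsilon}$ becomes the stated mass condition. For time symmetry you use a single compact transverse-traceless profile $(\gamma,0)$ with a one-dimensional contraction in $y$, where $a=\sqrt{\varepsilon y}$.

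The paper measures the background with a radial weight $q\in(1/2,1)$. Your choice $q=1$ also works, but it requires the harmonic expansion to bound $\mathfrak m(m_1,m_2)\mathcal S^\lambda$, which you do address. Your Step~3 derivation of $\mathfrak J=0$ and $\mathfrak m=M_*-\mathcal F(g^{[1]}_{m_1,m_2})$ is correct and more explicit than the paper. So is your remark that the example $M_*=m_1$ needs $|m_2|\lesssim m_1$.

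However, every place where you bring in the harmonic carrier $sT_f$ is wrong and should be removed.

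\textbf{The time-symmetric route.} The parameter $s$ does nothing for the total energy. By \eqref{eq:total-four-integral}, the total charge of a complete solution is the integral of the quadratic terms, and \eqref{eq:Sch-total-ADM} shows that the linear seed flux $s$ is exactly offset by $\mathfrak m$. Worse, for $s\neq0$ the metric acquires the extra $r^{-1}$ term $sT_f$, which is not part of $g^{[1]}_{m_1,m_2}$. The claimed expansion $g=g^{[1]}_{m_1,m_2}+\mathfrak m\mathcal S^\lambda+\eta_g$ then fails, with $\mathfrak m$ shifted by $-s$. With $s=0$ your argument is exactly the paper's.

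\textbf{The alternative for $M_*=\mathcal F(g^{[1]}_{m_1,m_2})$ via Corollary~\ref{thm:modulator-selection}.} This would actually fail. Setting $\mathfrak m(m,s_*)=0$ makes $g$ super-harmonically close to $\seedg(m,s_*)$, which contains $s_*T_f$ with $s_*=-m\langle\chi\rangle_{S^2}+\Obig(m^2/R)\neq0$. Hence $g-g^{[1]}_m$ has flux $s_*$ and is not super-harmonic. Moreover, the total energy is $\mathcal N(m,s_*)=\Obig(m^2/R)$, not $\mathcal F(g^{[1]}_m)=m\langle\chi\rangle_{S^2}$. This is precisely the distinction between relative and total charges made in Section~\ref{section-5-prescription}.

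The special case is already covered by your main route. Take the target $M_*=\mathcal F(g^{[1]}_{m_1,m_2})$, when it lies in the admissible range, and the formula gives $\mathfrak m=0$.
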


The proof of Proposition~\ref{prop:Sch-mass-prescribed} is given in \hyperref[appendix:Sch-ADM]{Appendix~\ref*{appendix:radial-selection}} and is mostly a corollary of Proposition~\ref{prop:fixed-background-ADM}.  However, to produce a time-symmetric data set part of the proof has to be adjusted to use a single time-symmetric (transverse, trace-free) profile $(\gamma,0)$ instead of multiple profiles $(\gamma_\ell,\pi_\ell)$.
The compact amplitudes have order $\sqrt{M_*/R}$, larger than the Schwarzschild amplitude $(|m_1|+|m_2|)/R$, in such a way that their ADM charges have the required order~$M_*$ despite being quadratic as stated in~\eqref{eq:Sch-total-ADM}. The radius, profiles, and accuracy are fixed before choosing masses small.

%============================================================================

\begin{acknowledgments}
The authors are grateful to Xuantao Chen for discussions about his work, especially the backward construction from infinity in~\cite{ChenKlainerman}. The authors were partially supported by the Agence Nationale de la Recherche (ANR), project ANR-23-CE40-0010-02, Einstein-PPF, ``Einstein constraints: past, present, and future'', and by the European Union's Horizon Europe programme through the Marie Sk\l odowska-Curie Staff Exchanges project 101131233, EinsteinWaves, ``Einstein gravity and nonlinear waves: physical models, numerical simulations and data analysis''.
\end{acknowledgments}

\phantomsection
\subsubsection*{Note on AI use}

After we had written the draft of the paper, including all of the ideas and technical arguments, we used ChatGPT 6 to help reorganize the manuscript, improve its presentation, and produce all of the figures. This was followed by a human-only review to ensure mathematical validity and clarity.

\appendix
\renewcommand{\theHequation}{appendix.\Alph{section}.\arabic{equation}}

%============================================================================

\section{Local weighted projection estimates and dependence on parameters}\label{appendix:local-projection}
We use the seed-to-solution projection and the harmonic and super-harmonic decay estimates established in our companion papers~\cite{BLF--PLF,BLF--PLF-stability}. We give the details needed for two applications in this article: the projection estimates in our local weighted H\"older norms, and the dependence of the projection on the seed data sets and their parameters. The first point allows us in Sections~\ref{section-3}--\ref{section:slow-coupled-projection} to apply the projection to the seed data sets constructed in Sections~\ref{section-2}--\ref{section:full-constraints}. The second supplies the local Lipschitz bounds and parameter derivatives used to choose the relative and total ADM charges in Sections~\ref{section-5}--\ref{section-ADM-prescription}.

We normalize $0<\lambda\leq1$ and choose a constant $0<c<1/3$. We set $d(x)\coloneqq c\lambda(x)\rho(x)$, which is comparable to $\min(\rho(x),\operatorname{dist}(x,\Omega^c))$ if $\Omega\neq\RR^3$ and to $\rho(x)$ otherwise. The constant $c$ is sufficiently small that the weights on $B_{2d(x)}(x)$ are uniformly comparable to their values at its center.
As stated below~\eqref{eq:weighted-L2}, the weighted H\"older norm $C^{N,\alpha}_{b,a}(\Omega)$ of some function~$f$ includes angular and radial weights as
\bse\label{equa-norm-Hol}
\be\label{equa-norm-Hol-a}
\aligned
\| f \|_{\Omega, b,a}^{l, \alpha}
= \| f \|_{C_{b,a}^{l, \alpha}(\Omega)}
\coloneqq \sum_{|L| \leq l}   \sup_\Omega \big( \lambda^{-a+|L|} \, \rho^{b+|L|} \, |\del^L f| \big) 
+ \sum_{|L| = l}   \sup_\Omega \big( \lambda^{-a + |L|+ \alpha} \, \rho^{b+|L|+ \alpha} \, [\![ \del^L f ]\!]_{\alpha} \big).
\endaligned
\ee
Here, $L$~represents a multi-index and the H\"older bracket is
\be
[\![ f ]\!]_{\Omega,\alpha} (x) 
\coloneqq \sup_{y,z \in B_{d(x)}(x)} \frac{|f(y)-f(z)|}{|y-z|^\alpha} ,
\qquad x \in \Omega .
\ee
\ese
Restricting the supremum to $z=x$ and $y\in B_{d(x)}(x)$ as done in~\cite{BLF--PLF} gives an equivalent H\"older norm~\eqref{equa-norm-Hol-a}.

\begin{lemma}[Projection in local weighted norms]\label{lem:local-projection}
Fix exponents and localization geometry as in Sections~\ref{section-3}--\ref{section:slow-coupled-projection}, with in particular $0<p<1$, $p_G>0$, and $p_A\geq\max(p,p_G)$ with $p_A\neq 1$.
Consider $C^{N,\alpha}$ seed data sets $(\seedg,\seedh)$ such that both the $C^{N,\alpha}_{p_G}\times C^{N,\alpha}_{p_G+1}$ norm of $(\seedg-\ebf,\seedh)$ and the constraint error norm $\Eappr{p_A}[\seedg,\seedh]$ defined in~\eqref{eq:full-error-norm} are sufficiently small.  In particular, the constraint error is supported in the gluing domain~$\Omega$.
Then the seed-to-solution projection $(g,h)$ of the seed data set, defined by~\eqref{eq:seed-to-solution}, exists, with $(g-\seedg,h-\seedh)$ controlled in $C^{N,\alpha}_{p,\underline P}\times C^{N,\alpha}_{p+1,\underline P}$ by $\Eappr{p}[\seedg,\seedh]$.
This projection is locally Lipschitz in the seed data set and source norms with radial exponent~$p$ given in~\eqref{eq:all-p-norms}, below, and preserves time symmetry in the sense that $\seedh=0$ implies $h=0$. Under the Hamiltonian and momentum stability assumptions,
\begin{itemize}
\item if $p_A<1$ then the sub-harmonic conclusions of \cite[Proposition~4.7]{BLF--PLF} hold, and in particular the $C^{N,\alpha}_{p_A,\underline P}\times C^{N,\alpha}_{p_A+1,\underline P}$ norm of $(g-\seedg,h-\seedh)$ is controlled by $\Eappr{p_A}[\seedg,\seedh]$;
\item if $p_A>1$ then the super-harmonic conclusions of \cite[Theorem~4.4]{BLF--PLF} hold, and in particular there exist four parameters $\mathfrak m,\mathfrak J_1,\mathfrak J_2,\mathfrak J_3\in\RR$ such that the $C^{N,\alpha}_{p_A,\underline P}\times C^{N,\alpha}_{p_A+1,\underline P}$ norm of $(g-\seedg-\mathfrak m\,\mathcal S^\lambda,h-\seedh - \sum_j\mathfrak J_j\mathcal T^\lambda_j)$ is controlled by $\Eappr{p_A}[\seedg,\seedh]$, where $\mathcal S^\lambda$ and $\mathcal T^\lambda_j$ are defined in \eqref{eq:silhouette-tensor} and~\eqref{eq:momentum-silhouette}.
\end{itemize}
\end{lemma}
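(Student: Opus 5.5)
The plan is to reduce Lemma~\ref{lem:local-projection} to the global-weighted statements of~\cite{BLF--PLF} by a norm-comparison argument. The projection there is constructed by a Newton (or contraction) iteration whose linear step inverts $\mathcal D^*\mathcal D$, with $\mathcal D$ the reference adjoint~\eqref{eq:full-reference-adjoint}. The first step is to show that the local weighted H\"older norm~\eqref{equa-norm-Hol} is equivalent to the one used in~\cite{BLF--PLF}. The two differ only in restricting $z=x$ in the bracket. The argument is the triangle inequality through the center, together with comparability of $\lambda,\rho$ on $B_{2d(x)}(x)$, which is guaranteed by $c<1/3$.

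The second step is to check the weighted product rule in these spaces. It states that multiplication by $C^{N,\alpha}_{p_G}$ metric factors and their inverses is bounded on each source space. The third step is to check extension by zero. Since $\underline P>N+\alpha$, a function in $C^{N,\alpha}_{b,\underline P}(\Omega)$ has all derivatives through order $N$, and the top H\"older quotient, vanishing at $\partial\Omega$. Hence $\lambda^{2P}$-weighted corrections glue to $C^{N,\alpha}(\RR^3)$.

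With these in hand, I would invoke~\cite[Theorem~2.11]{BLF--PLF} for existence at exponent $p$. This gives a potential $(u,Z)$ whose image $\mathcal D(u,Z)$ is bounded in $C^{N,\alpha}_{p,\underline P}\times C^{N,\alpha}_{p+1,\underline P}$ by $\Eappr{p}$. Note that $\Eappr{p}\lesssim\Eappr{p_A}$, since $p_A\geq p$ and the error is supported in $\Omega$ where $\rho\gtrsim R$.

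Local Lipschitz dependence follows from the implicit-function structure. For two seeds, I would subtract the fixed-point equations. The difference of nonlinear terms is bounded by the smallness constant times the difference, using the product rule. Absorbing this term gives Lipschitz bounds in the norms~\eqref{eq:all-p-norms}, exactly as in the proof of Theorem~\ref{thm:vector-selection}.

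Time symmetry is preserved by uniqueness. If $\seedh=0$, the momentum constraint holds identically at $h=0$. The reduced scalar problem with $Z=0$ then produces a solution, which must coincide with the unique small one.

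For the two bullets, I would apply~\cite[Proposition~4.7]{BLF--PLF} when $p_A<1$ and~\cite[Theorem~4.4]{BLF--PLF} when $p_A>1$. In the latter case, the harmonic part on the end is identified with $\mathfrak m\mathcal S^\lambda+\sum_j\mathfrak J_j\mathcal T^\lambda_j$ via the kernel characterization under harmonic and radial stability. The case $p_A=1$ is excluded because of resonance.

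The main obstacle is the transfer of the improved decay estimates, whose proofs in~\cite{BLF--PLF} bootstrap $L^2$ bounds into H\"older bounds via interior Schauder estimates on balls of radius $\sim\lambda\rho$. I would check that these balls are exactly the $B_{d(x)}(x)$ above. Once that is verified, the local Schauder step carries over unchanged, with constants depending on $c$. The remaining difficulty is keeping track of the fact that the fixed cutoff extension of $\mathcal S^\lambda,\mathcal T^\lambda_j$ into $\Omega$ only changes the remainder by a compactly supported term. That term is controlled by $|\mathfrak m|+|\mathfrak J|\lesssim\Eappr{p_A}$.
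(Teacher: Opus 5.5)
Most of your outline matches the paper. The H\"older-norm comparison is what the paper dismisses as only affecting constants. The two bullets are read off from \cite[Theorem~4.4, Proposition~4.7]{BLF--PLF}. Time symmetry follows from uniqueness in the contraction ball.

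The gap is in the local Lipschitz dependence, which is the one point the paper actually proves. You propose to subtract fixed-point equations and absorb nonlinear differences ``exactly as in the proof of Theorem~\ref{thm:vector-selection}''. That means inverting at the Euclidean reference and treating everything else as $\mathbf Q$. This works there only because those seeds deviate from $(\ebf,0)$ away from $\partial\Omega$. Every product in $\mathbf Q$ near $\partial\Omega$ is then quadratic in the corrections, with angular factor $\lambda^{2\underline P-2}$. That factor is square integrable against $\lambda^{-2P}$ precisely because $\underline P>1+P/2$.

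For a general seed in the lemma this fails. The Schwarzschild gluing seed equals $g^{\mathrm{Sch}}_{m_i}\neq\ebf$ along $\partial\Omega$. The cross term $(\seedg-\ebf)\,\del^2(g-\seedg)$ only has the bound $\lambda^{\underline P-2}\rho^{-p-p_G-2}$. Since $P-\underline P\geq 7/2$, the integral $\int\lambda^{2\underline P-4-2P}$ diverges at $\partial\Lambda$. So this term is not in the $L^2_{p+2,P}$ component of $\mathcal Y_p$ and cannot be absorbed. The same holds for $(\seedg-\seedg')\,\del^2(g-\seedg)$ when you compare two seeds.

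What is missing is the seed-dependent linearization. All terms linear in $U$ must stay in $\mathcal J_S=D\mathcal G_S\circ\mathcal D$, not in $\mathcal D^*\mathcal D$ at the reference. You then use the inverse bound \eqref{eq:full-local-inverse} from \cite[Lemmas~E.3 and~E.6]{BLF--PLF} and write $U=-\mathcal J_S^{-1}(\mathcal G(S)+\mathcal N_S(U))$, where $\mathcal N_S$ is genuinely quadratic in $\mathcal DU$. Comparing two seeds then requires a bound on $\mathcal J_S^{-1}-\mathcal J_{S'}^{-1}$, i.e.\ control of how the linear operator itself depends on $S$. The paper gets this from the resolvent identity \eqref{eq:inverse-parameter-derivative}. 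It first notes that $S\mapsto\mathcal J_S$ is smooth only into a weaker dual-Sobolev target (schematically $H^{-2}\times H^{-1}$). There, terms like $\dot S\,\del^2\mathcal DU$ make sense after integrating by parts onto the test function.

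Your step ``subtract the fixed-point equations and absorb the difference of nonlinear terms'' never addresses this $S$-dependence of the linear operator. It is also not where the Schauder-transfer obstacle you flag lies, since the paper treats that as a change of constants.
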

\begin{proof}
This lemma is for the most part a specialization of Theorem~4.4 and Proposition~4.7 of~\cite{BLF--PLF} to the case $p_\star=p_A$, with $p_A\neq 1$ to avoid slight complications in the harmonic case, and to the case of a reference data set $(\ebf,0)$ and localization domain~$\Omega$ with a single asymptotic end, namely a connected angular domain~$\Lambda$ (so as to only have one set of mass and momentum modulators).
Slight differences in the definition of H\"older norms, or whether one uses the raised or lowered momentum constraints $\mathcal C^j$ or~$\mathcal M_j$, only affect multiplicative constants in the smallness bounds.

There only remains to show that the seed-to-solution projection is locally Lipschitz in the appropriate norms~\eqref{eq:all-p-norms} below, with decay exponent~$p$ (as opposed to $p_A$ or the sharp decay exponent~$p_\star$).  To this end, we recall a few key aspects of the construction in~\cite{BLF--PLF}.

From~\eqref{eq:seed-to-solution} and~\eqref{eq:full-reference-adjoint}, we recall that starting from a seed data set $S=(\seedg,\seedh)$, the vacuum data set $X=(g,h)$ is sought in a form parametrized by a scalar and vector potential $U\coloneqq(u,Z)$ as
\be
X = S + \mathcal DU , \qquad \mathcal DU \coloneqq (\mathcal D_g u,\mathcal D_h Z) ,
\ee
where $\mathcal D_g,\mathcal D_h$ are weighted adjoints of the linearized Einstein constraints at the Euclidean data set~$(\ebf,0)$.
The functional spaces that are relevant for the existence of the seed-to-solution map and its Lipschitz continuity are as follows:
\begin{itemize}
\item $U\coloneqq(u,Z)$ lies in $\mathcal W_p\coloneqq(H^2_{1-p,-P}\cap C^{N+2,\alpha}_{1-p,-\overline P+2})\times(H^1_{1-p,-P}\cap C^{N+1,\alpha}_{1-p,-\overline P+1})$,
\item $X-S=(g-\seedg,h-\seedh)$ lies in $\mathcal X_p\coloneqq(L^2_{p,P}\cap C^{N,\alpha}_{p,\underline P})\times (L^2_{p+1,P}\cap C^{N,\alpha}_{p+1,\underline P})$,
\item $\mathcal G=(\mathcal H,\mathcal C)$ lies in $\mathcal Y_p\coloneqq (L^2_{p+2,P} \cap C^{N-2,\alpha}_{p+2,\underline P-2})\times(L^2_{p+2,P} \cap C^{N-1,\alpha}_{p+2,\underline P-1})$.
\end{itemize}
These spaces are equipped with the sum norms
\be\label{eq:all-p-norms}
\begin{aligned}
\|U\|_{\mathcal W_p} & \coloneqq \|u\|_{H^2_{1-p,-P}} + \|u\|_{C^{N+2,\alpha}_{1-p,-\overline P+2}} + \|Z\|_{H^1_{1-p,-P}} + \|Z\|_{C^{N+1,\alpha}_{1-p,-\overline P+1}} ,
\\
\|X-S\|_{\mathcal X_p} & \coloneqq \|g-\seedg\|_{L^2_{p,P}} + \|g-\seedg\|_{C^{N,\alpha}_{p,\underline P}} + \|h-\seedh\|_{L^2_{p+1,P}} + \|h-\seedh\|_{C^{N,\alpha}_{p+1,\underline P}} ,
\\
\|\mathcal G\|_{\mathcal Y_p} & \coloneqq \|\mathcal H\|_{L^2_{p+2,P}} + \|\mathcal H\|_{C^{N-2,\alpha}_{p+2,\underline P-2}} + \|\mathcal C\|_{L^2_{p+2,P}} + \|\mathcal C\|_{C^{N-1,\alpha}_{p+2,\underline P-1}} .
\end{aligned}
\ee
The last of these norms coincides with~$\Eappr{b}$ defined in~\eqref{eq:full-error-norm} with $b=p$ in the sense that
\be
\Eappr{p}[\seedg,\seedh] = \bigl\|\bigl(\mathcal H(\seedg,\seedh),\mathcal C(\seedg,\seedh)\bigr)\bigr\|_{\mathcal Y_p} .
\ee

For fixed small seed $S=(\seedg,\seedh)$, we define the squared Einstein constraint operator $\mathcal J_S\coloneqq D\mathcal G_S\circ\mathcal D$ by composing the constraints linearized at~$S$ with the operator~$\mathcal D$ based on the Euclidean metric.
The results of \cite[Lemmas~E.3 and~E.6]{BLF--PLF} imply that for any source $F\in\mathcal Y_p$ one has
\be
\label{eq:full-local-inverse}
\|\mathcal J_S^{-1}F\|_{\mathcal W_p}\leq C\|F\|_{\mathcal Y_p}.
\ee
A weaker norm than~$\mathcal Y_p$ is in fact sufficient, involving dual Sobolev spaces instead of Lebesgues spaces (schematically $H^{-2}\times H^{-1}$ instead of $L^2\times L^2$).
On this larger functional space, the map $S\mapsto \mathcal J_S$ is smooth, and the resolvent identity gives
\be
\label{eq:inverse-parameter-derivative}
D(\mathcal J_S^{-1})[\dot S]=-\mathcal J_S^{-1}\bigl(D\mathcal J_S[\dot S]\bigr)\mathcal J_S^{-1}.
\ee
Repeated differentiation gives three continuous derivatives with uniform bounds on a smaller neighborhood.

We also define the nonlinear remainder $\mathcal N_S$ by Taylor's formula $\mathcal G(S+\mathcal DU)=\mathcal G(S)+\mathcal J_SU+\mathcal N_S(U)$.
The exact formulas~\eqref{eq:full-constraints} and~\eqref{eq:momentum-exact}, with the local product rule, give, for sufficiently close $U,V$
\be
\label{eq:local-nonlinear-lipschitz}
\|\mathcal N_S(U)-\mathcal N_S(V)\|_{\mathcal Y_p} \leq C(\|U\|_{\mathcal W_p}+\|V\|_{\mathcal W_p})\|U-V\|_{\mathcal W_p} ,
\ee
namely $U\mapsto\mathcal N_S(U)$ is locally Lipschitz.
Then, $\mathcal G(X)=\mathcal G(S)+\mathcal J_SU+\mathcal N_S(U)=0$ gives
\be
U = - \mathcal J_S^{-1} (\mathcal G(S) + \mathcal N_S(U)) .
\ee

We set $b\coloneqq\|\mathcal G(S)\|_{\mathcal Y_p}$. By~\eqref{eq:full-local-inverse}--\eqref{eq:local-nonlinear-lipschitz}, the map
$U\longmapsto-\mathcal J_S^{-1}\bigl(\mathcal G(S)+\mathcal N_S(U)\bigr)$
maps the closed ball of radius $2Cb$ in~$\mathcal W_p$ into itself and has Lipschitz constant strictly less than one for small~$b$. Successive iterates converge by the geometric-series bound and their limit solves the constraints. The same estimate for two seed data sets, together with the resolvent identity for $\mathcal J_S^{-1}$, proves local Lipschitz dependence. If $\seedh=0$, the subspace $Z=0$ is preserved because the momentum equation vanishes there. Uniqueness in the contraction ball gives zero projected momentum.
\end{proof}

%============================================================================

\section{Radial identities and proofs for choosing relative and total energy}\label{appendix:radial-selection}
We prove the radial identities used to construct the seed data sets. We then choose the relative energy coefficient and match the total ADM energy to the retained Schwarzschild parameter.

\phantomsection\label{appendix:spherical-divergence}
\begin{proof}[Proof of Lemma~\ref{lem:double-divergence}]
We compute the radial and tangential components of the Euclidean divergence of~$A$:
\be
\label{eq:first-divergence}
(\operatorname{div}_{\ebf}A)_r=\del_ra+\frac{2a+\divs b-\trs C}{r},\qquad (\operatorname{div}_{\ebf}A)_{\mathrm{tan}}=\del_rb+\frac{3b+\divs C}{r}.
\ee
For a vector field $V=V_r\theta+V_{\mathrm{tan}}$, the divergence is $\operatorname{div}_{\ebf}V=\del_rV_r+\frac{2}{r}V_r+\frac1r\divs V_{\mathrm{tan}}$. We substitute~\eqref{eq:first-divergence} into this identity and use $D^2a=r^2\del_r^2a+r\del_ra$. The result is~\eqref{eq:spherical-L}.
\end{proof}

\phantomsection\label{appendix:radial-momentum}
\begin{proof}[Proof of Lemma~\ref{lem:momentum-radial}]
We apply~\eqref{eq:first-divergence} to $r^{-q-1}\pi_0$. The radial and tangential divergence coefficients are given by~\eqref{eq:momentum-homogeneous}. They vanish for~\eqref{eq:momentum-profile}. Since $\pi_0$ has tangential part $C_0$, it is nonzero for $C_0\neq0$ and has the same support.

For the inhomogeneous problem we set the tangential two-tensor to zero and solve~\eqref{eq:momentum-polynomial-inverse} in the indicated order. For a nonzero constant $c$, the polynomial inverse of $\del_\ell+c$ applied to a polynomial $F$ of degree $d$ is $\sum_{j=0}^d(-1)^jc^{-j-1}\del_\ell^jF$. For $c=0$, we integrate the polynomial and choose zero integration constant. These formulas prove existence at every exponent, including $q=1,2$, and preserve angular support. Substitution into~\eqref{eq:first-divergence} proves~\eqref{eq:momentum-inhomogeneous}.
\end{proof}

\phantomsection\label{appendix:scalar-selection}
\begin{proof}[Proof of Corollary~\ref{thm:modulator-selection}]\bse
We apply Lemma~\ref{lem:modulator-equation}. We set $s_0\coloneqq-m\langle\chi\rangle_{S^2}-a$ and $\eta\coloneqq |m|+|a|$. The equation $\mathfrak m=a$ with $\mathfrak m$ given in~\eqref{eq:modulator-equation} reduces to the fixed-point equation
\be
\label{eq:scalar-fixed-point}
s=s_0+\mathcal N(m,s).
\ee
For $|s-s_0|\leq K\eta^2/R$ and $\eta/R$ sufficiently small, we have $|m|+|s|\leq C_0\eta$. We choose $K>C C_0^2$, with $C$ from~\eqref{eq:modulator-equation}, before shrinking $\eta/R$. The continuous map on the right-hand side of~\eqref{eq:scalar-fixed-point} then maps this closed interval around~$s_0$ into itself. The intermediate value theorem, applied to its difference with~$s$, gives a fixed point. If $\eta=0$, we take $s_*=0$.
This proves the existence of the required scalar parameter.

The bound for $s_*$ follows from the width of this interval. All parameters remain in the neighborhood where the seed data set and projection estimates are uniform. With $a=0$,~\eqref{eq:parametric-expansion} becomes
\be
\label{eq:zero-modulator}
g(m,s_*)-\seedg(m,s_*)\in C^{N,\alpha}_{p_\star,\underline P}(\Omega).
\ee
\ese\end{proof}

\phantomsection\label{appendix:Sch-ADM}
\begin{proof}[Proof of Proposition~\ref{prop:Sch-mass-prescribed}]
The exact Schwarzschild--Minkowski solution of Theorem~\ref{thm:fixed-radius}, measured in a norm with a radial weight $q\in(1/2,1)$, has dimensionless size $\nu_0=\Obig(m/R)$ and total ADM vector $\Obig(m^2/R)$ by~\eqref{eq:total-four-integral}. With $\varepsilon=M_*/R$, the smallness assumption gives $\nu_0<c_0\sqrt\varepsilon$. Proposition~\ref{prop:fixed-background-ADM} therefore prescribes $(m,mv_0)$, retaining the exterior values. For $v_0=0$ this gives zero ADM momentum yet $K\neq0$, since the compact leading terms~$\pi_\ell$ have non-zero weights.

To achieve a time symmetric vacuum data set, we choose a single nonzero compact transverse-traceless metric profile $\gamma$ and take $\pi=0$, normalized by $\mathscr E(\gamma,0)=1$. The projection preserves $h=0$. The same quadratic calculation and small-background estimate give its dimensionless total energy as
\be
\label{eq:TS-Sch-mass-target}
\begin{aligned}
F(a)&=F(0)+a^2+\mathcal R(a),\qquad F(0)=\Obig(\varepsilon^2),\\
|\mathcal R(a)|&\leq C(\varepsilon|a|+|a|^3),\qquad |\mathcal R'(a)|\leq C(\varepsilon+a^2).
\end{aligned}
\ee
For $a=\sqrt{\varepsilon y}$, the equation $F(a)=\varepsilon$ reads $y=1-\varepsilon^{-1}F(0)-\varepsilon^{-1}\mathcal R(\sqrt{\varepsilon y})$. On a fixed interval about $y=1$ contained in $(0,\infty)$, the right-hand side differs from one by $\Obig(\sqrt\varepsilon)$ and has derivative $\Obig(\sqrt\varepsilon)$. It is a contraction there for small~$\varepsilon$, proving the exact target. Rescaling gives total energy~$M_*$. The momentum is zero because $h=0$, and the compact perturbation followed by seed-to-solution projection retains the prescribed exterior Schwarzschild and Euclidean metrics.
\end{proof}

%============================================================================

\section{Quadratic scalar-curvature and ADM energy-momentum formulas}\label{appendix:quadratic-curvature}
We give the quadratic scalar-curvature calculation used in the asymptotic construction. We then prove the ADM formulas and construct the compact transverse-traceless tensors used in Theorem~\ref{thm:effective-ADM}.

%-----------------------------------------------------------------------------------------------------------------------------------------

\subsection{First nonlinear scalar-curvature correction}\label{appendix:first-correction}
For illustration of the expansion in Section~\ref{section-2}, we compute the second-order error of Schwarzschild--Minkowski interpolation, which features a logarithmic part. The expressions \eqref{eq:Christoffel-first}--\eqref{eq:Q-explicit} are also useful later on.
For a symmetric tensor $\gamma$, we use Cartesian indices and define
\be
\label{eq:Christoffel-first}
\Gamma_1{}^k{}_{ij}[\gamma]\coloneqq\tfrac12(\del_i \gamma_{jk}+\del_j \gamma_{ik}-\del_k \gamma_{ij}),\qquad \Gamma_2{}^k{}_{ij}[\gamma]\coloneqq-\gamma_{ka}\Gamma_1{}^a{}_{ij}[\gamma].
\ee
The quadratic term in the scalar curvature is
\be
\label{eq:Q-explicit}
\Qsc(\gamma)\coloneqq\delta_{ij}\bigl(\del_a\Gamma_2{}^a{}_{ij}-\del_j\Gamma_2{}^a{}_{ia}+\Gamma_1{}^a{}_{ab}\Gamma_1{}^b{}_{ij}-\Gamma_1{}^a{}_{jb}\Gamma_1{}^b{}_{ia}\bigr)-\gamma_{ij}\bigl(\del_a\Gamma_1{}^a{}_{ij}-\del_j\Gamma_1{}^a{}_{ia}\bigr).
\ee
Thus $R_{\ebf+t \gamma}=tL \gamma+t^2\Qsc(\gamma)+\Obig(t^3)$ whenever the metric remains uniformly positive. The first two coefficients of~\eqref{eq:basic-exterior} in powers of~$m$ are $G_1$ and $G_2\mathrel{\coloneqq}3\chi^2\ebf/(2r^2)$. We define the smooth angular function
\be
\label{eq:F2}
F_2\coloneqq r^4\bigl(LG_2+\Qsc(G_1)\bigr).
\ee
The right-hand side is independent of~$r$ thanks to the scaling. Where $\chi$ is constant,~\eqref{eq:basic-exterior} is a Schwarzschild metric with constant mass $m\chi$, so $F_2=0$. In particular, $\supp F_2\Subset\Lambda$.

Since $(\del_\ell)(\del_\ell-1)\ell=-1$ and $\ell=\log(r/R)$, the second corrected exterior metric is
\be
\label{eq:second-exterior}
g_m^{[2]}=g_m^{[1]}+\frac{m^2}{r^2}\log\frac rR\,F_2\Crad.
\ee
For every fixed $k$, its error satisfies
\be
\label{eq:second-error}
|\del^kR_{g_m^{[2]}}|\leq C_k |m|^3r^{-5-k}(1+\ell)^{K_k},\qquad r\geq R,
\ee
for some finite integers $K_k$. The logarithm is compatible with every accuracy exponent $p_A<3$, but the pointwise bound alone does not give the limiting weighted $L^2$ estimate at $p_A=3$.

%-----------------------------------------------------------------------------------------------------------------------------------------

\subsection{The quadratic ADM energy-momentum}\label{appendix:quadratic-ADM}
We calculate the first nonzero ADM contribution of a compact transverse-traceless perturbation. The parameter estimates justify the finite-dimensional inversion in Theorem~\ref{thm:effective-ADM}.
\begin{proof}[Proof of Proposition~\ref{lem:TT-quadratic}]\bse
The linearized constraints vanish by~\eqref{eq:TT-conditions}. The compact quadratic error gives projection correction $\Obig(a^2)$ in $C^{N,\alpha}_{p,\underline P}\times C^{N,\alpha}_{p+1,\underline P}$, $p>1/2$. Thus~\eqref{eq:total-four-integral} converges absolutely, and substituting the linear pair in its quadratic integrand changes the integral only by $\Obig(a^3)$.

The first Christoffel tensor~\eqref{eq:Christoffel-first} satisfies $\Gamma_1{}^a{}_{ia}=0$, while $(\Ric_L \gamma)_{ij}=-\tfrac12\Delta \gamma_{ij}$. Divergence terms in~\eqref{eq:Q-explicit} integrate to zero, and
\be
\label{eq:TT-Christoffel-integral}
\int\Gamma_1{}^a{}_{ib}\Gamma_1{}^b{}_{ia}\,dy =-\tfrac14\int|\del \gamma|^2\,dy, \qquad \int \gamma_{ij}(\Ric_L \gamma)_{ij}\,dy=\tfrac12\int|\del \gamma|^2\,dy.
\ee
For the first identity, we multiply $2\Gamma_1{}^a{}_{ib}=\del_i \gamma_{ab}+\del_b \gamma_{ia}-\del_a \gamma_{ib}$ by the expression with $a,b$ exchanged. The integrated cross term $\int\del_b \gamma_{ia}\del_a \gamma_{ib}$ vanishes by integration by parts and $\del_b \gamma_{ib}=0$. The second identity is ordinary integration by parts. Consequently $\int\Qsc(\gamma)=-\tfrac14\int|\del \gamma|^2$. The quadratic momentum term in the Hamiltonian constraint is $-|\pi|^2$, since $\Tr\pi=0$. This proves the energy formula and its sign.

For the momentum~$Q_M$ in~\eqref{eq:full-Q}, the first term $\Gamma_1{}^i{}_{i\ell}\pi^{\ell j}$ vanishes. Symmetry gives
\be
\label{eq:TT-momentum-integral}
\int\Gamma_1{}^j{}_{i\ell}\pi^{i\ell}\,dy =\int\bigl(\pi^{i\ell}\del_i \gamma_{\ell j}-\tfrac12\pi^{i\ell}\del_j \gamma_{i\ell}\bigr)dy =-\tfrac12\int\pi^{i\ell}\del_j \gamma_{i\ell}\,dy,
\ee
where the first term integrates to zero by $\del_i\pi^{i\ell}=0$. The minus sign in~\eqref{eq:total-four-integral} now gives the positive sign for~$\mathscr P$.
\ese\end{proof}

%-----------------------------------------------------------------------------------------------------------------------------------------

\subsection{Compact transverse-traceless profiles}\label{appendix:compact-TT}
We construct compact profiles whose momentum-to-energy ratio approaches any prescribed unit vector. Their differential definition preserves support and allows disjoint profiles to be used independently.
\begin{proof}[Proof of Proposition~\ref{lem:TT-packets}]\bse
We define the commuting differential operators $\mathbb P_{ij}\coloneqq-\Delta\delta_{ij}+\del_i\del_j$. For a constant symmetric tensor $S$ and a compact scalar function $\phi$, we set
\be
\label{eq:compact-TT-operator}
(\mathbb T_S\phi)_{ij} \coloneqq\mathbb P_{ia}\mathbb P_{jb}(S_{ab}\phi)-\tfrac12\mathbb P_{ij}\mathbb P_{ab}(S_{ab}\phi).
\ee
The identities $\del_i\mathbb P_{ij}=0$, $\sum_i\mathbb P_{ia}\mathbb P_{ib}=-\Delta\mathbb P_{ab}$, and $\sum_i\mathbb P_{ii}=-2\Delta$ prove that this tensor is transverse and trace free. These are identities of constant-coefficient operators and follow by expansion. The operator preserves compact support.

We choose $S\neq0$ with $Sn=0$ and $\Tr S=0$, and choose a nonzero smooth cutoff $\chi$ supported in the ball. We write $\del_n\coloneqq n\cdot\nabla_y$ for differentiation in the direction~$n$. For a frequency $\omega\to\infty$, we set
\be
\label{eq:TT-oscillations}
\phi_\omega\coloneqq\omega^{-4}\chi(y)\cos(\omega n\cdot y),\qquad \gamma_\omega\coloneqq\mathbb T_S\phi_\omega,\qquad \pi_\omega\coloneqq\tfrac12\del_n \gamma_\omega.
\ee
Both tensors satisfy~\eqref{eq:TT-conditions}. Expanding the four derivatives gives $\gamma_\omega=\chi S\cos(\omega n\cdot y)+\Obig(\omega^{-1})$ and $\del_j \gamma_\omega=-\omega n_j\chi S\sin(\omega n\cdot y)+\Obig(1)$. All errors are supported on the ball. Integration by parts in the $n$ direction shows that the average of $\chi^2(\sin^2(\omega n\cdot y)-1/2)$ tends to zero. We therefore obtain
\be
\label{eq:TT-velocity-limit}
\mathscr E(\gamma_\omega,\pi_\omega)=c\omega^2+\Obig(\omega),\qquad \mathscr P(\gamma_\omega,\pi_\omega)=c\omega^2n+\Obig(\omega),\qquad c\coloneqq\frac1{64\pi}\int\chi^2|S|^2dy>0.
\ee
Division of both tensors by $\sqrt{\mathscr E(\gamma_\omega,\pi_\omega)}$ gives the normalization and limiting velocity. Finally, for any unit~$v$, the inequality $|\pi||\del_v \gamma|\leq|\pi|^2+\tfrac14|\del_v \gamma|^2$ gives $|v\cdot\mathscr P|\leq\mathscr E$. Taking the supremum over~$v$ proves the last assertion.
\ese\end{proof}


\begin{thebibliography}{99}

\bibitem{AndersonCorvinoPasqualotto}
Anderson J., Corvino J., and Pasqualotto F.,
Multi-localized time-symmetric initial data for the Einstein vacuum equations,
J. Reine Angew. Math. 808 (2024), 67--110.
% \href{https://doi.org/10.1515/crelle-2023-0088}{doi:10.1515/crelle-2023-0088}.

\bibitem{AretakisCzimekRodnianski}
Aretakis S., Czimek S., and Rodnianski I.,
The characteristic gluing problem for the Einstein equations and applications,
Duke Math. J. 174 (2025), no.~2, 355--402.
% \href{https://doi.org/10.1215/00127094-2024-0030}{doi:10.1215/00127094-2024-0030}.

\bibitem{BieriGarfinkleIsenbergMaxwellWheeler}
Bieri L., Garfinkle D., Isenberg J., Maxwell D., and Wheeler J.,
Asymptotically Euclidean solutions of the constraint equations with prescribed asymptotics,
J. Math. Phys. 67 (2026), no.~10, 102502.
% \href{https://doi.org/10.1063/5.0323682}{doi:10.1063/5.0323682}.
% arXiv:2512.21274.

\bibitem{CarlottoSchoen}
Carlotto A. and Schoen R., 
Localizing solutions of the Einstein constraint equations, 
Invent. Math. 205 (2016), 559--615.


\bibitem{ChenKlainerman}
Chen X. and Klainerman S.,
Solving the constraint equation for general free data,
Preprint \href{https://arxiv.org/abs/2512.22704}{arXiv:2512.22704}, 2025.

\bibitem{ChenKlainermanForward}
Chen X. and Klainerman S.,
Forward construction of vacuum initial data with borderline decay,
Preprint \href{https://arxiv.org/abs/2606.08716}{arXiv:2606.08716}, 2026.

\bibitem{ChruscielDelay}
Chru\'sciel P.T. and Delay E.,
On mapping properties of the general relativistic constraints operator in weighted function spaces, with applications,
M\'em. Soc. Math. France (N.S.) 94 (2003), 109~pp.
% \href{https://doi.org/10.24033/msmf.407}{doi:10.24033/msmf.407}.
% arXiv:gr-qc/0301073.

\bibitem{ChruscielDelayHyperbolic}
Chru\'sciel P.T. and Delay E.,
Exotic hyperbolic gluings,
J. Diff. Geom. 108 (2018), 243--293.
% \href{https://doi.org/10.4310/jdg/1518490818}{doi:10.4310/jdg/1518490818}.
% arXiv:1511.07858.

\bibitem{ChruscielDelayConical}
Chru\'sciel P.T. and Delay E.,
On Carlotto--Schoen-type scalar-curvature gluings,
Bull. Soc. Math. France 149 (2021), 641--662.
% \href{https://doi.org/10.24033/bsmf.2840}{doi:10.24033/bsmf.2840}.
% arXiv:1611.00893.

\bibitem{Corvino-2000} 
Corvino J.,
Scalar curvature deformation and a gluing construction for the Einstein constraint equations, 
Comm. Math. Phys. 214 (2000), 137--189.

\bibitem{CorvinoSchoen} 
Corvino J. and Schoen R.,
On the asymptotics for the vacuum Einstein constraint equations, 
J. Diff. Geom. 73 (2006), 185--217.

\bibitem{CzimekRodnianski}
Czimek S. and Rodnianski I.,
Obstruction-free gluing for the Einstein equations,
Preprint \href{https://arxiv.org/abs/2210.09663}{arXiv:2210.09663}, 2022.

\bibitem{Fang-1}
Fang A.J., Szeftel J., and Touati A.,
Initial data for Minkowski stability with arbitrary decay,
Adv. Theor. Math. Phys. 29 (2025), no.~4, 933--1043.

\bibitem{Fang-2}
Fang A.J., Szeftel J., and Touati A.,
Spacelike initial data for black hole stability,
Comm. Math. Phys. 406 (2025), no.~10, article~235.
% \href{https://doi.org/10.1007/s00220-025-05416-0}{doi:10.1007/s00220-025-05416-0}.

\bibitem{BLF--PLF} Le Floch B. and LeFloch P.G., Optimal localization for the Einstein constraints, Preprint \href{https://arxiv.org/abs/2312.17706}{arXiv:2312.17706}.

\bibitem{BLF--PLF-stability}
Le Floch B. and LeFloch P.G., Harmonic, radial, and shell stability of the weighted Einstein constraints on the sphere at infinity, Preprint \href{https://arxiv.org/abs/2607.28599}{arXiv:2607.28599}.

\bibitem{BLF--PLF-shielding}
Le Floch B. and LeFloch P.G.,
Optimal shielding for Einstein gravity,
Class. Quantum Grav. 41 (2024), 13LT02.
% \href{https://doi.org/10.1088/1361-6382/ad5137}{doi:10.1088/1361-6382/ad5137}.
% arXiv:2402.17598.

\bibitem{LeFlochNguyen}
LeFloch P.G. and Nguyen T.-C.,
The seed-to-solution method for the Einstein constraints and the asymptotic localization problem, 
J. Funct. Analysis 285 (2023), 110106.

\bibitem{MaoOhTao}
Mao Y., Oh S.-J., and Tao Z.,
Initial data gluing in the asymptotically flat regime via solution operators with prescribed support properties,
Preprint \href{https://arxiv.org/abs/2308.13031}{arXiv:2308.13031}, 2023.

\bibitem{MaoTao}
Mao Y. and Tao Z.,
Localized initial data for Einstein equations,
Preprint \href{https://arxiv.org/abs/2210.09437}{arXiv:2210.09437}, 2022.

\bibitem{MaxwellConformal}
Maxwell D.,
The conformal method and the conformal thin-sandwich method are the same,
Class. Quantum Grav. 31 (2014), no.~14, 145006.
% \href{https://doi.org/10.1088/0264-9381/31/14/145006}{doi:10.1088/0264-9381/31/14/145006}.
% arXiv:1402.5585.

\bibitem{Sansom-2024}
Sansom R.,
The $C^3$-null gluing problem: linear and nonlinear analysis,
Ann. Henri Poincar\'e (2025).
% \href{https://doi.org/10.1007/s00023-025-01619-2}{doi:10.1007/s00023-025-01619-2}.
\end{thebibliography}
\end{document}